\pgfplotsset{width=7.6cm,compat=1.9} 
\newtheorem{theorem}{Theorem}[section]
\newtheorem{lemma}[theorem]{Lemma}
\newtheorem{meta-theorem}[theorem]{Meta-Theorem}
\newtheorem{corollary}[theorem]{Corollary}
\newtheorem{definition}[theorem]{Definition}
\definecolor{darkgreen}{rgb}{0,0.5,0}
\definecolor{darkblue}{rgb}{0,0,0.5}
\crefname{theorem}{Theorem}{Theorems}
\Crefname{lemma}{Lemma}{Lemmas}
\Crefname{observation}{Observation}{Observations}
\Crefname{remark}{Remark}{Remarks}
\Crefname{equation}{}{}
\renewcommand{\paragraph}[1]{\vspace{0.15cm}\noindent {\bf #1}:}
\newcommand{\pushcode}[1][1]{\hskip\dimexpr#1\algorithmicindent\relax}
\newcommand{\skueue}[0]{\textsc{Skueue}}
\newcommand{\pname}[0]{\textsc{Skeap}}
\newcommand{\pnamep}[0]{\textsc{Seap}}
\newcommand{\kselect}[0]{\textsc{KSelect}}
\newcommand{\ins}[1]{\textsc{Insert}\ensuremath{(#1)}\xspace}
\newcommand{\delmin}{\textsc{DeleteMin}\ensuremath{()}\xspace}
\newcommand{\join}[1]{\textsc{Join}\ensuremath{(#1)}\xspace}
\newcommand{\leave}{\textsc{Leave}\ensuremath{()}\xspace}
\title{Skeap \& Seap: Scalable Distributed Priority Queues for Constant and Arbitrary Priorities\footnote{This is an extended version of a paper which will appear in SPAA 2019. This work has been partially supported by the German Research Foundation (DFG) within the Collaborative Research Center 901 ''On-The-Fly Computing'' under the project number 160364472-SFB901.
}}
\author{
  Michael Feldmann\\
  \small Paderborn University \\
  \small michael.feldmann@upb.de\\
  \and
  Christian Scheideler\\
  \small Paderborn University \\
  \small scheideler@upb.de\\
}
\begin{document}

\maketitle

\begin{abstract}
	We propose two protocols for distributed priority queues (for simplicity denoted \textit{heap}) called \pname{} and \pnamep{}.
	\pname{} realizes a distributed heap for a constant amount of priorities and \pnamep{} one for an arbitrary amount.
	Both protocols build on an overlay, which induces an aggregation tree on top of which heap operations are aggregated in batches, ensuring that our protocols scale even for a high rate of incoming requests.
	As part of \pnamep{} we provide a novel distributed protocol for the $k$-selection problem that runs in $O(\log n)$ rounds w.h.p.
	\pname{} guarantees sequential consistency for its heap operations, while \pnamep{} guarantees serializability.
	\pname{} and \pnamep{} provide logarithmic runtimes w.h.p. on all their operations with \pnamep{} having to use only $O(\log n)$ bit messages.
\end{abstract}

\section{Introduction} \label{sec:introduction}
Data structures play an important role for both sequential and distributed applications.
While several types of sequential data structures are already well-studied, purely distributed data structures exist mostly in form of distributed hash tables (DHTs).
Other types like queues, stacks or heaps are being considered much less, even though they have some interesting applications as well.
Recently, a distributed queue has been presented~\cite{DBLP:conf/ipps/FeldmannSS18} which can also be extended to a distributed stack~\cite{DBLP:journals/corr/abs-1802-07504}.
Such a queue can be used, for instance, to realize fair work stealing in scheduling, for distributed mutual exclusion or for distributed counting.
This paper is the first to provide a distributed solution for another (basic) data structure, the distributed priority queue (for simplicity denoted \textit{heap}).
A distributed heap may be useful, for example in scheduling, where one may insert jobs that have been assigned priorities and workers may pull these jobs from the heap based on their priority.
Another application for a distributed heap is distributed sorting.

We consider processes that are able to flexibly interconnect using an overlay network.
There are various challenges to overcome when constructing a distributed data structure, such as scalability (for both, the number of participating processes as well as the rate of incoming requests), semantical correctness and distribution of elements among processes.
The specific challenge is to deal with priorities that are assigned to heap elements.
We distinguish between settings which only allow a constant amount of priorities and settings for arbitrary amounts and present two novel distributed protocols \pname{} and \pnamep{} for these scenarios.
Both protocols support insertions and deletions of elements in time $O(\log n)$ w.h.p., where $n$ is the number of processes participating in the heap.
Furthermore, we provide some guarantees on the semantics, by having \pname{} guarantee sequential consistency and \pnamep{} guarantee serializability.
As part of \pnamep{} we obtain a novel protocol \textsc{KSelect} for distributed $k$-selection that runs in $O(\log n)$ rounds w.h.p.
Both \pname{} and \pnamep{} work in the asynchronous message passing model.
As an additional feature we can handle join and leave requests of processes in time $O(\log n)$ w.h.p. without violating the heap semantics or losing important data.
Even though \pnamep{} comes with slightly weaker semantics than \pname{}, it only uses $O(\log n)$ bit messages for its operations, while the message size in \pname{} partially depends on the rate with which processes generate new operations.

\subsection{Model}
We study distributed heaps consisting of multiple processes that are interconnected by some overlay network.
We model the overlay network as a directed graph $G = (V, E)$, where each node $v \in V$ represents a single process, $n = |V|$ and an edge $(v,w)$ indicates that $v$ knows $w$ and can therefore send messages to $w$.
Each node $v$ can be identified by a unique identifier $v.id \in \mathbb{N}$.
There is no global (shared) memory, so only the local memory of the nodes can be used to store heap elements.
We allow the storage capacity of each node to be at most polynomial in $n$.

We consider the asynchronous message passing model where every node $v$ has a set $v.Ch$ for all incoming messages called its \emph{channel}. That is, if a node $u$ sends a message $m$ to node $v$, then $m$ is put into $v.Ch$. A channel can hold an arbitrary finite number of messages and messages never get duplicated or lost.

Nodes may execute \textit{actions}: An action is just a standard procedure that consists of a name, a (possibly empty) set of parameters, and a sequence of statements that are executed when calling that action.
It may be called locally or remotely, i.e., every message that is sent to a node contains the name and the parameters of the action to be called. We will only consider messages that are remote action calls. An action in a node $v$ is {\em enabled} if there is a request for calling it in $v.Ch$. Once the request is processed, it is removed from $v.Ch$. 
We assume \emph{fair message receipt}, i.e., every request in a channel is eventually processed.
Additionally, a node may be \emph{activated} periodically.
Activation of a node is considered to trigger an (specific) action.
Upon activation of node $v$, $v$ may generate messages based on its local information if some protocol-specific conditions are satisfied.

We define the \textit{system state} to consist of the values of all protocol-specific variables of the nodes and the set of messages in each channel. A \textit{computation} is a potentially infinite sequence of system states, where the state $s_{i+1}$ can be reached from its previous state $s_i$ by executing an action that is enabled in $s_i$.

We place no bounds on the message propagation delay or the relative node execution speed, i.e., we allow fully asynchronous computations and non-FIFO message delivery.

For the performance analysis only, we assume the standard synchronous message passing model, where time proceeds in \emph{rounds} and all messages that are sent out in round $i$ will be processed in round $i+1$.
Additionally, we assume that each node is activated once in each round.
We measure the \emph{congestion} of the system by the maximum number of messages that need to be handled by a node in one round.\footnote{We use $\widetilde{O}(\cdot)$ to hide polylogarithmic factors, i.e., $\widetilde{O}(f(n)) = O(f(n) \cdot polylog(n))$.}
The \emph{injection rate} of a node $v$ is denoted by $\lambda(v)$ and represents the maximum number of heap requests that $v$ is able to generate in each round.
In the asynchronous setting we consider $\lambda(v)$ to be the maximum number of heap requests that $v$ is able to generate between two of its activations.
We assume that $\lambda(v) \in O(poly(n))$ and denote the \emph{maximum injection rate} by $\Lambda = \max_{v \in V}\{\lambda(v)\}$.
We assume that the heap consists of $m$ elements stored by the nodes at any time and the time for which the heap is active is polynomial in $n$.
This implies $m \in O(poly(n))$, because the storage capacities and injection rates of nodes are polynomial in $n$.

\subsection{Basic Notation}
Let $\mathcal E$ be a universe of elements that can be potentially stored by the heap.
Each element $e \in \mathcal E$ is assigned a unique \emph{priority} from a universe $\mathcal P$.
Denote $e$'s priority by $prio(e)$.
We allow different elements to be assigned to the same priority.
Priorities in $\mathcal P$ can be totally ordered via $<$.
Using a tiebreaker to break ties between elements having the same priority, we get a total order on all elements in $\mathcal E$.
A distributed heap supports the following operations:

\begin{itemize}
	\item[-] \ins{e}: Inserts the element $e \in \mathcal E$ into the heap.
	\item[-] \delmin{}: Retrieves the element with minimum priority from the heap or returns $\perp$ if the heap is empty.
	\item[-] \join{}: The node $v$ issuing the operation wants to join the system.
	\item[-] \leave{}: The node $v$ issuing the operation wants to leave the system.
\end{itemize}

As opposed to a standard, sequential heap, it is a non-trivial task to provide protocols for the above operations in our distributed setting: Elements stored in the heap have to be stored by the nodes forming the heap in a \emph{fair} manner, meaning that each node stores $m/n$ elements on expectation, where $n$ denotes the number of nodes forming the heap.
Additionally, in an asynchronous setting, nodes do not have access to local or global clocks and the delivery of messages may be delayed by an arbitrary but finite amount of time.
This may lead to operations violating the heap property known from standard, sequential heaps when considering trivial approaches.
As a consequence, we want to establish a global serialization of the requests ensuring some well-defined semantics without creating bottlenecks in the system, even at a high request rate.
In case the number of priorities $|\mathcal P|$ is constant, we can use \pname{} to guarantee sequential consistency.
If $|\mathcal P|$ is too large, then we can use \pnamep{} which guarantees serializability.

Before we can define these two types of semantics, we need some notation.
Let $\textsc{Ins}_{v,i}$ be the $i^{th}$ \ins{} request that was called by node $v$ and $prio(\textsc{Ins}_{v,i})$ be the priority of the element that is inserted via this request.
Similarly, $\textsc{Del}_{v,i}$ denotes the $i^{th}$ \delmin{} request issued by $v$.
In general, $\textsc{OP}_{v,i}$ denotes the $i^{th}$ (\ins{} or \delmin{}) request issued by $v$.
Let $S$ be the set of all \ins{} and \delmin{} requests issued by all nodes.
We denote a pair $(\textsc{Ins}_{v,i}, \textsc{Del}_{w,j})$ to be \emph{matched}, if $\textsc{Del}_{w,j}$ returns the element that was inserted into the heap via $\textsc{Ins}_{v,i}$.
Let $M$ be the set of all matchings.
Note that not every request has to be matched and thus is not necessarily contained in $M$.
We denote this via $\textsc{OP}_{v,i} \not \in M$.
We are now ready to give formal definitions for sequential consistency and serializability:

\begin{definition} \label{def:semantics}
	A distributed data structure is \emph{serializable} if and only if there is an ordering $\prec$ on the set $S$ so that the distributed execution of all operations of $S$ on the data structure is equivalent to the serial execution w.r.t. $\prec$.
	The data structure is \emph{sequentially consistent} if it is serializable and \emph{locally consistent}, i.e., for all $v \in V$ and $i \in \mathbb{N}$: $\textsc{OP}_{v,i} \prec \textsc{OP}_{v,i+1}$.
\end{definition}

Intuitively, local consistency means that for each single node $v$, the requests issued by $v$ have to come up in $\prec$ in the order they were executed by that node.

In order for our distributed data structure to resemble a heap, we introduce the following additional semantical constrains:

\begin{definition} \label{def:heap_consistency}
	A distributed heap protocol with operations \ins{} and \delmin{} is \emph{heap consistent} if and only if there is an ordering $\prec$ on the set $S$ so that the set of all matchings $M$ established by the protocol satisfies:
	\begin{enumerate}
		\item[(1)] for all $(\textsc{Ins}_{v,i}, \textsc{Del}_{w,j}) \in M: \textsc{Ins}_{v,i} \prec \textsc{Del}_{w,j}$,
		\item[(2)] for all $(\textsc{Ins}_{v,i}, \textsc{Del}_{w,j}) \in M:$ There is no $\textsc{Del}_{u,k} \not \in M$ such that $\textsc{Ins}_{v,i} \prec \textsc{Del}_{u,k} \prec \textsc{Del}_{w,j}$ and
		\item[(3)] for all $(\textsc{Ins}_{v,i}, \textsc{Del}_{w,j}) \in M:$ There is no $\textsc{Ins}_{u,k} \not \in M$ such that $\textsc{Ins}_{u,k} \prec \textsc{Del}_{w,j}$ and $prio(\textsc{Ins}_{u,k}) < prio(\textsc{Ins}_{v,i})$.
	\end{enumerate}
\end{definition}

Intuitively, the three properties have the following meaning: The first property means that elements have to be inserted into the heap before they can be deleted.
The second property means that each \delmin{} request returns a value if there is one in the heap.
The third property means that elements are removed from the heap based on their priority, where elements with minimal priorities are preferred.
Note that this property can be inverted such that our heap behaves like a MaxHeap.

\subsection{Related Work}
Distributed hash tables, initially invented by Plaxton et al.~\cite{DBLP:conf/spaa/PlaxtonRR97} and Karger et al.~\cite{DBLP:conf/stoc/KargerLLPLL97} are the most prominent type of distributed data structure.
Several DHTs for practical applications have been proposed, for example Chord~\cite{DBLP:conf/sigcomm/StoicaMKKB01}, Pastry~\cite{DBLP:conf/middleware/RowstronD01}, Tapestry~\cite{DBLP:journals/jsac/ZhaoHSRJK04} or Cassandra~\cite{DBLP:conf/podc/LakshmanM09}.
We make use of a DHT for storing all elements inserted into the heap and fairly distributing them among all nodes.

We build on the concept from~\cite{DBLP:conf/ipps/FeldmannSS18}, where a sequentially consistent and scalable distributed queue, called \skueue{}, was introduced.
\skueue{} combines the linearized de Bruijn network (LDB)~\cite{DBLP:journals/talg/NaorW07} with a distributed hash table and is able to process batches of enqueue and dequeue operations in $O(\log n)$ rounds w.h.p. via an aggregation tree induced by the LDB.
In \pname{} we show how to extend \skueue{} to construct a distributed heap for a constant amount of priorities, by technically maintaining one distributed queue for each priority.
While \pnamep{} also makes use of the LDB and its induced aggregation tree, it uses a different approach to insert and remove heap elements from the DHT.
This approach involves solving the distributed $k$-selection problem.
Distributed $k$-selection is a classic problem in the sequential setting, but has also been studied in the distributed setting for various types of data structures like cliques~\cite{DBLP:journals/networks/RotemSS86}, rings, meshes or binary trees~\cite{DBLP:conf/podc/Frederickson83}.
Kuhn et. al.~\cite{DBLP:conf/spaa/KuhnLW07} showed a lower bound of $\Omega(D \log_D n)$ on the runtime for any \emph{generic} distributed selection algorithm, where $D$ is the diameter of the network topology.
By 'generic' they mean that the only purpose to access an element is for comparison, which does not hold for our protocol, since we also allow elements to be copied and/or moved to other nodes.
This comes with the advantage that the runtime of our algorithm is only logarithmic in the number of nodes $n$.
Recently Haeupler et. al.~\cite{DBLP:conf/podc/HaeuplerMS18} came up with an algorithm that solves the distributed $k$-selection problem in $O(\log n)$ rounds w.h.p. in the uniform gossip model using $O(\log n)$ bit messages.
This matches our result for distributed $k$-selection in both time and message complexity.
The idea of their algorithm is to compute an approximation for the $k^{\mathit{th}}$ smallest element through sampling and then use this algorithm several times to come up with an exact solution.
While our algorithm for distributed $k$-selection shares some ideas regarding the sampling technique, we are able to find the $k^{\mathit{th}}$ smallest element among $m = poly(n)$ elements distributed over $n$ nodes, whereas the algorithm from~\cite{DBLP:conf/podc/HaeuplerMS18} works only on $n$ elements.

Data structures that are somewhat close to our model are \emph{concurrent} data structures.
In this scenario, multiple nodes issue requests on a (shared) data structure that is stored at a central instance.
The literature ranges from concurrent queues~\cite{DBLP:conf/podc/MichaelS96}, stacks~\cite{DBLP:journals/jpdc/HendlerSY10} to priority queues~\cite{DBLP:conf/spdp/Ayani90, Hunt:1996:EAC:245980.245993, DBLP:journals/jpdc/Johnson94}.
Consider~\cite{DBLP:reference/crc/MoirS04} for a survey on this concept.
All of these data structures are not fully decentralized like ours, as elements of the data structure are stored in shared memory which can be directly modified by nodes.
This gives access to mechanisms such as locks that can prevent multiple nodes to modify the data structure at the same time, which however not only limits scalability but also is vulnerable to \emph{memory contention}, i.e., multiple nodes competing for the same location in memory with only one node being allowed to access the location at any point in time.

Scalable concurrent priority queues have been proposed in~\cite{DBLP:conf/podc/ShavitZ99, DBLP:conf/ipps/ShavitL00}.
In~\cite{DBLP:conf/podc/ShavitZ99} the authors focus on a fixed range of priorities and come up with a technique that is based on \emph{combining trees}~\cite{DBLP:conf/isca/GottliebGKMRS98, Goodman:1989:ESP:68182.68188}, which are similar to the aggregation tree in our work.
However there still is a bottleneck, as the node that is responsible for a combined set of operations has to process them all by itself on the shared memory.
The authors of~\cite{DBLP:conf/ipps/ShavitL00} propose a concurrent priority queue for an arbitrary amount of priorities, where heap elements are sorted in a skiplist.
Their data structure satisfies linearizability but the realization of \delmin{} generates memory contention, as multiple nodes may compete for the same smallest element with only one node being allowed to actually delete it from the heap.

A scalable distributed heap called \emph{SHELL} has been presented by Scheideler and Schmid in~\cite{DBLP:conf/icalp/ScheidelerS09}. 
\emph{SHELL}'s topology resembles the de Bruijn graph and is shown to be very resilient against Sybil attacks.
However, SHELL is about the participants of the system forming a heap and not a distributed data structure that maintains elements.

\subsection{Our Contributions}
In the following we summarize our contributions:

\begin{enumerate}
	\item We propose a distributed protocol for a heap with a constant number of priorities, called \pname{}~(\Cref{sec:heap1}), which guarantees sequential consistency.
	\pname{} is a simple extension of \skueue{}~\cite{DBLP:conf/ipps/FeldmannSS18}, which is a sequentially consistent distributed queue.
	Batches of operations are processed in $O(\log n)$ rounds w.h.p. and congestion of $\widetilde{O}(\Lambda)$.
	\pname{} uses $O(\Lambda \log^2 n)$ bit messages in order to guarantee sequential consistency and to remain scalable.
	Although such a size appears to be quite large, we want to note that it is still superior to a variant in which nodes would have to handle multiple small messages instead of one large message, as this hurts the congestion.
	Also in our protocol each node has to handle only two of these $O(\Lambda \log^2 n)$ bit messages per iteration.
	\item We present a distributed protocol \kselect{}~(\Cref{sec:heap:prio2:k_selection}) that solves the $k$-selection problem and might be of independent interest.
	\kselect{} finds the $k^{\mathit{th}}$ smallest element out of a set of $m \in O(poly(n))$ elements distributed uniformly at random among $n$ nodes in $O(\log n)$ rounds w.h.p., using $O(\log n)$ bit messages and generating only $\widetilde{O}(1)$ congestion on expectation.
	\item We demonstrate how to use \textsc{KSelect} in order to realize \pnamep{}, a distributed heap for arbitrary priorities that guarantees serializability~(\Cref{sec:heap:prio2}).
	 The performance and congestion bounds remain the same as for \pname{}.
	 However, \pnamep{} uses only $O(\log n)$ bit messages independently of the injection rate, which is a huge improvement over \pname{}.
	 Therefore \pnamep{} scales even for a high number of nodes and high injection rates.
	 For applications like job-allocation where local consistency is not that important, it makes sense to use \pnamep{}, but also in scenarios with high injection rates, we recommend using \pnamep{} instead of \pname{} due to the significantly smaller message size in \pnamep{}.
	\item Nodes in \pname{} or \pnamep{} may join or leave the system.
	Through lazy processing, joining or leaving for a node can be done in a constant amount of rounds, whereas the restoration of our network topology is done after $O(\log n)$ rounds w.h.p. for batches of \join{} or \leave{} operations.
	As \join{} and \leave{} work exactly the same as in \skueue{} for both \pname{} and \pnamep{}, we just refer the reader to~\cite{DBLP:conf/ipps/FeldmannSS18} for the details.
\end{enumerate}

Before we describe \pname{} and \pnamep{}, we introduce some preliminaries that serve as the basis for both \pname{} and \pnamep{}.

\section{Preliminaries} \label{sec:preliminaries}

\subsection{Classical de Bruijn Network}
We revise the standard de Bruijn graph~\cite{dB1946} along with its routing protocol. Our network topology that we present afterwards adapts some properties of de Bruijn graphs.

\begin{definition}
	Let $d \in \mathbb{N}$.
	The standard ($d$-dimensional) \emph{de Bruijn graph} consists of nodes $(x_1, \ldots ,x_d) \in \{0,1\}^d$ and edges $(x_1, \ldots ,x_d)$ $\rightarrow$ $(j,x_1, \ldots ,x_{d-1})$ for all $j \in \{0,1\}$.
\end{definition}

One can route a packet via bitshifting from a source $s \in \{0,1\}^d$ to a target $t \in \{0,1\}^d$ by adjusting exactly $d$ bits.
For example for $d = 3$, we route from $s = (s_1,s_2,s_3)$ to $t = (t_1,t_2,t_3)$ via the path $((s_1,s_2,s_3), (t_3,s_1,s_2),(t_2,t_3,s_1),(t_1,t_2,t_3))$.

\subsection{Aggregation Tree}
We use the \emph{aggregation tree} that is defined in the following to interconnect the nodes.
The aggregation tree is also able to emulate routing in the standard de Bruijn graph.

\begin{lemma} \label{lemma:aggregation_tree}
	There exists a tree $T = (V,E)$ (which we call \emph{aggregation tree} in this paper) that satisfies the following properties:
	\begin{itemize}
		\item[(i)] Each inner node $v$ has a parent node $p(v)$ and up to two child nodes (denoted by the set $C(v)$). $T$ has height $O(\log n)$ w.h.p.
		\item[(ii)] $T$ embeds a distributed hash table (DHT) supporting operations \textsc{Put}$(k$, $e)$ to store the element $e$ at the node $v$ that maintains the key $k$ and \textsc{Get}$(k$, $v)$ to retrieve the element that is stored under the key $k$ and deliver it to node $v$.
		\item[(iii)] The DHT supports \textsc{Put} and \textsc{Get} requests in $O(\log n)$ rounds w.h.p.
		\item[(iv)] Elements in the DHT are stored uniformly among all nodes, i.e., if the DHT contains $m$ elements, then each node has to store $m/n$ elements on expectation.
		\item[(v)] Any routing schedule with dilation $D$ and congestion $C$ in the $d$-dimensional de Bruijn graph can be emulated by $T$ with $n \geq 2$ nodes with dilation $O(D + \log n)$ and congestion $\widetilde{O}(C)$ w.h.p.	
	\end{itemize}
\end{lemma}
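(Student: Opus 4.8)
\noindent\textit{Proof plan.}
The plan is to realize $T$ from the \emph{linearized de Bruijn network} (LDB) of Naor and Wieder~\cite{DBLP:journals/talg/NaorW07}, exactly as is done for \skueue{}~\cite{DBLP:conf/ipps/FeldmannSS18}, and then verify the five properties one by one. Concretely, each node $v$ is assigned a pseudorandom label $\ell(v) \in [0,1)$ (the binary expansion of a hash of $v.id$, truncated to $\Theta(\log n)$ bits); sorting the nodes by their labels partitions $[0,1)$ into $n$ consecutive intervals, one owned by each node, and on top of this sorted cycle we add the de Bruijn links, i.e.\ the owner of a point $x$ links to the owners of $x/2$ and $(x+1)/2$. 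The tree $T$ is the ``shift-in-$0$'' in-tree of this structure rooted at the owner of the point $0$: the parent $p(v)$ of the node $v$ owning $x$ is the node owning $x/2$ (on truncated bits, prepending a $0$ and dropping the last bit). Since the de Bruijn graph has out-degree two, following these parent pointers each node receives at most two children, which is the degree bound in~(i); a node owning a point with a leading $1$-bit is a leaf.

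First I would dispose of the probabilistic parts, (i) and~(iv). A union bound over the $O(n^2)$ pairs shows the truncated labels are pairwise distinct w.h.p., so the construction is well defined, and a Chernoff bound shows that w.h.p.\ every interval has length $\Theta((\log n)/n)$. Given distinct labels the shift-in-$0$ chain $x, x/2, x/4, \ldots$ reaches $0$ after at most $d = \lceil \log n\rceil$ steps, so $T$ has height $O(\log n)$, giving~(i). For~(iv), keys are mapped into $[0,1)$ by a (pseudo)random hash function $h$ and an element with key $k$ is stored by the owner of $h(k)$; since the intervals partition $[0,1)$ and each has expected length $1/n$, every node stores $m/n$ elements in expectation.

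Next, (ii) and~(iii) follow by implementing the DHT through de Bruijn routing. A request \textsc{Put}$(k,e)$ or \textsc{Get}$(k,v)$ is routed from the issuing node to the owner of $h(k)$ along a single de Bruijn path of $d = O(\log n)$ bit-shift hops (with \textsc{Get} returning the element along the reverse path). Applying~(v) to this single path, which has dilation $D = O(\log n)$ and congestion $C = 1$, shows the request is served in $O(\log n)$ rounds w.h.p.

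The main obstacle is~(v), the emulation guarantee, and this is where the real work sits. Identify the de Bruijn nodes with bit strings of length $d$ and let each real node simulate the de Bruijn nodes whose string, read as a point in $[0,1)$, lies in its interval; a Chernoff bound over the labels gives that w.h.p.\ every real node simulates $O(\log n)$ de Bruijn nodes. Now $T$ is, under this identification, essentially a prefix tree: passing to the parent in $T$ deletes one suffix bit and prepends a $0$. Hence a de Bruijn routing path $s = s_0 \to s_1 \to \cdots \to s_D = t$, being just a sequence of bit shifts, can be re-routed in $T$ by walking from the simulator of $s$ upward until the bits that still have to change have been shifted out and then downward to the simulator of $t$; the length of this tree walk is $O(D)$ plus an additive $O(\log n)$ accounting for the portion near the root, which gives dilation $O(D+\log n)$. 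The delicate point is congestion: de Bruijn edges of the schedule sharing a common prefix are now carried by the same tree edge, and several de Bruijn nodes collapse into one real node. I would bound both effects with a balls-into-bins / Chernoff argument over the random labels, showing the load of any tree edge or real node exceeds the corresponding de Bruijn load $C$ by at most a $\widetilde{O}(1)$ factor w.h.p.; together with the known emulation guarantees of the LDB~\cite{DBLP:journals/talg/NaorW07, DBLP:conf/ipps/FeldmannSS18} this yields congestion $\widetilde{O}(C)$ and completes~(v).
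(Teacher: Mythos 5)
Your high-level route is the same as the paper's (realize $T$ from the linearized de Bruijn network, cite/rederive the Naor--Wieder emulation for (v), and hash keys into $[0,1)$ for (ii)--(iv)), but the concrete tree you define is not the one the paper builds, and for your tree property (i) fails. You put one label per real node and set $p(v)$ to be the owner of $\ell(v)/2$, then claim ``since the de Bruijn graph has out-degree two, each node receives at most two children.'' That inference is valid for the \emph{discrete} de Bruijn graph, but not after collapsing to interval owners: the set of points that halve into $w$'s interval $I_w$ is $2I_w$, an interval of twice the length, and since w.h.p.\ some intervals have length $\Theta((\log n)/n)$, a node can receive $\Theta(\log n)$ children under your parent map. (The same problem appears if you instead let each real node simulate all $O(\log n)$ de Bruijn strings in its interval and take the shift-in-$0$ tree on strings.) The paper avoids exactly this by having each real node emulate \emph{three} virtual nodes $l(v)=m(v)/2$, $m(v)$, $r(v)=(m(v)+1)/2$ arranged on the sorted cycle, and defining parents/children purely via predecessor/successor and virtual edges (middle $\to$ left, left $\to$ predecessor, right $\to$ middle); this makes the at-most-two-children property hold by construction and locally checkable. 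Your height argument also needs a small repair for the same reason: the chain is $\ell(v),\ \ell(p(v))\le \ell(v)/2,\ \dots$, so it terminates in $O(\log n)$ steps only because labels are truncated to $\Theta(\log n)$ bits (and your side claim that \emph{every} interval has length $\Theta((\log n)/n)$ is false --- the minimum gap is polynomially smaller --- though only the expectation is needed for (iv)).

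A secondary remark: for (v) the paper does not reprove the emulation; it adopts the dilation/congestion guarantees of the LDB from Richa--Scheideler--Stevens and Naor--Wieder, whereas your plan re-derives them via a prefix-tree walk plus a balls-into-bins bound. The dilation part of your sketch is fine, but the congestion bound --- controlling how many de Bruijn edges of the schedule collapse onto one tree edge or one real node --- is precisely the technical core of the cited lemma and is left as a one-sentence intention in your write-up; if you do not intend to invoke the prior work as the paper does, that argument needs to be carried out.
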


Denote the root node of the tree as the \emph{anchor}.
The aggregation tree can be used to aggregate certain values to the anchor.
We call this process an \emph{aggregation phase}.
Values can be \emph{combined} with other values at each node.
For instance, to determine the number of nodes that participate in the tree, each node initially holds the value $1$.
We start at the leaf nodes, which send their value to their parent nodes upon activation.
Once an inner node $v$ has received all values $k_1,\ldots,k_c \in \mathbb{N}$ from its $c$ child nodes, upon activation it combines these by adding them to its own value, i.e., by computing $1 + \sum_{i=1}^c k_i$.
Afterwards $v$ sends the result to its parent node.
Once the anchor has combined the values of its child nodes with its own value it knows $n$.
We make heavy use of aggregation phases in both \pname{} and \pnamep{}.
Due to \Cref{lemma:aggregation_tree}(i) it is easy to see that an aggregation phase finishes after $O(\log n)$ rounds w.h.p.

We assume for the rest of the paper that the nodes are arranged in such an aggregation tree and explain in \Cref{appendix:aggregation_tree} how it can be built.

\section{Constant Priorities} \label{sec:heap1}
In this section we introduce the \pname{} protocol for a distributed heap with a constant number of priorities, i.e., $\mathcal{P} = \{1,\ldots,c\}$ for a constant $c \in \mathbb{N}$.
\pname{} is an extension of \skueue{} and is able to achieve the same runtimes for operations as well as the same semantic, i.e., sequential consistency.
Throughout this paper, a \emph{heap operation} is either an \ins{} or a \delmin{} request.
The main challenge to guarantee sequential consistency lies in the fact that messages may outrun each other, since we allow fully asynchronous computations and non-FIFO message delivery.
Another problem we have to solve is that the rate at which nodes issue heap requests may be very high.
As long as we process each single request one by one, scalability cannot be guaranteed.

The general idea behind \pname{} is the following: First, we use the aggregation tree to aggregate batches of heap operations to anchor (Phase 1).
The anchor then assigns a unique position for each heap operation such that sequential consistency is fulfilled (Phase 2) and spreads all positions for the heap operations over the aggregation tree afterwards (Phase 3).
Heap elements are then inserted into or fetched from the DHT according to the given position (Phase 4).
We describe this approach in greater detail now.

\subsection{Operation Batch}
Whenever a node initiates a heap operation, it buffers it in its local storage.
We are going to represent the sequence of buffered heap operations by a \emph{batch}:

\begin{definition}[Batch] \label{def:batch}
	A \emph{batch} $B$ (of length $k$) is a sequence $(i_1,d_1,$ $\ldots,$ $i_k,d_k)$, for which it holds that for all $j \in \{1,\ldots,k\}$, $i_j = (i_{j,1},\ldots,i_{j,|\mathcal P|}) \in \mathbb{N}^{|\mathcal P|}$ consists of values for each $p \in \mathcal P$ representing the number of elements with priority $p$ to be inserted.
	For all $j \in \{1,\ldots,k\}$, $d_j \in \mathbb{N}$ represents the number of \delmin{} operations.
\end{definition} 

We \emph{combine} two batches $B_1=(i_1,d_1,\ldots,i_k,d_k)$ and $B_2=(i_1',d_1',\ldots,i_k',d_k')$ by computing $B = (i_1'',d_1'',\ldots,i_k'',d_k'')$, where $i_j''=(i_{j,1}+i_{j,1}',\ldots,i_{j,k}+i_{j,k}')$ and $d_j''=d_j+d_j'$.
Note that in case $B_1$ and $B_2$ are of different length, we just fill up the smaller batch with zeros.
If a batch $B$ is the combination of batches $B_1,\ldots,B_k$, then we denote $B_1,\ldots,B_k$ as \emph{sub-batches}.

\subsection{Protocol SKEAP}
We are now ready to describe our approach for processing heap operations in detail, dividing it into $4$ phases.
\Cref{algo:skeap} summarizes \pname.

\begin{algorithm}[ht]
\caption{Protocol \pname}
\label{algo:skeap}
\begin{algorithmic}[1]
	\Statex \textbf{Phase 1} (Executed at each node $v$)
	\State \pushcode[0] Create batch $v.B$
	\State \pushcode[0] Wait until each $w \in C(v)$ has sent its batch to $v$
	\State \pushcode[0] Combine batches $w.B$ of all $w \in C(v)$ with $v.B$
	\State \pushcode[0] Send combined batch $v.B^+ = (i_1,d_1,\ldots,i_k,d_k)$ to $p(v)$
	\Statex
	\Statex \textbf{Phase 2} (Only local computation at the anchor)
	\State \pushcode[0] Let $(i_1,d_1,\ldots,i_k,d_k)$ be the combined batch from Phase 1
	\State \pushcode[0] \textbf{for} $j \in \{1,\ldots,k\}$ \textbf{do}
	\State \pushcode[0] \ \ \ \ Compute position intervals for $i_j$
	\State \pushcode[0] \ \ \ \ Compute position intervals for $d_j$
	\Statex
	\Statex \textbf{Phase 3} (Executed at each node $v$)
	\State \pushcode[0] Wait until $p(v)$ has sent position intervals $I$ to $v$
	\State \pushcode[0] Decompose $I$ into $I_v$ and $I_w$ for each $w \in C(v)$ 
	\State \pushcode[0] \textbf{for} all $w \in C(v)$ \textbf{do}
	\State \pushcode[0] \ \ \ \ Send $I_w$ to $w$
	\Statex
	\Statex \textbf{Phase 4} (Executed at each node $v$)
	\State \pushcode[0] Generate DHT operations based on $I_v$
\end{algorithmic}
\end{algorithm}

Whenever a node $v$ generates a new \ins{} or \delmin{} request it stores the request in a local queue that acts as a buffer.
At the beginning of the first phase of \pname{} each node $v$ generates a snapshot of the contents of its queue and represents it as a batch $v.B$.
For example, a snapshot consisting of operations $\textsc{Insert}(e_1),\textsc{Insert}(e_2), \textsc{DeleteMin}(),\textsc{Insert}(e_3)$ and $\textsc{DeleteMin}()$ (in that specific order) with $prio(e_1) = 1$, $prio(e_2)=1$ and $prio(e_3) = 2$ is represented by the batch $((2,0),1,(0,1),1)$.
By doing so, the batch $v.B$ respects the local order in which heap operations are generated by $v$, which is important for guaranteeing sequential consistency.

\subsubsection{Phase 1: Aggregating Batches}
In the first phase we aggregate batches of all nodes up to the anchor via an aggregation phase.
For this each node $v$ waits until it has received all batches from its child nodes $w \in C(v)$ and then combines them together with its own batch $v.B$ upon activation.
The resulting batch, denote it by $v.B^+$, is then sent to the parent $p(v)$ of $v$ in the aggregation tree.
Additionally, $v$ memorizes the sub-batches it received from its child nodes, as it needs them to perform the correct interval decomposition in Phase~3.
At the end of the first phase the anchor $v_0$ possesses a batch $v_0.B^+$ that is the combination of all sub-batches $v.B$.

\subsubsection{Phase 2: Assigning Positions}
We only perform local computations at the anchor $v_0$ in this phase.
The anchor maintains variables $v_0.first_p \in \mathbb{N}_0$ and $v_0.last_p \in \mathbb{N}_0$ for each priority $p \in \mathcal P$, such that the invariant $v_0.first_p \leq v_0.last_p + 1$ holds at any time for any $p \in \mathcal P$.
The interval $[v_0.first_p, v_0.last_p]$ represents the positions that are currently occupied by elements with priority $p$.
This implies that the heap contains $v_0.last_p - v_0.first_p + 1$ elements of priority $p$.
Denote this number as the \emph{cardinality} of the interval $[v_0.first_p, v_0.last_p]$, i.e., $|[v_0.first_p, v_0.last_p]| = v_0.last - v_0.first + 1$ and denote an interval with cardinality $0$ as \emph{empty}.

We first describe the actions of the anchor for one (fixed) priority $p$ and thus assume that for the batch $v_0.B^+ = (i_1,d_1,\ldots,i_k,d_k)$ it holds $i_j \in \mathbb{N}_0$ instead of $i_j$ being a (sub-)vector.
Based on its variables $v_0.first_p, v_0.last_p$, $v_0$ computes a collection of intervals $I_1,D_1,\ldots,I_k,D_k$ as follows: For the $j^{\mathit{th}}$ insert entry $i_j \in \mathbb{N}_0$ of $v_0.B^+$, $v_0$ sets $I_j$ to $[v_0.last_p + 1, v_0.last_p + i_j]$ and increases $v_0.last_p$ by $i_j$ afterwards.
For the $j^{\mathit{th}}$ delete entry $d_j \in \mathbb{N}_0$ of $v_0.B^+$, $v_0$ sets $D_j$ to $[v_0.first_p, \min\{v_0.first_p + d_j - 1, v_0.last_p\}]$ and updates $v_0.first_p$ to $\min\{v_0.first_p + d_j, v_0.last_p + 1\}$ afterwards.

To extend this approach to $|\mathcal P|$ priorities we just let the anchor perform the above computations on the corresponding interval $[v_0.first_p,v_0.last_p]$ for each priority $p \in \mathcal P$.
Also note that if $d_j$ elements should be removed from the heap, then the anchor first considers the most prioritized non-empty interval $[v_0.first_p, v_0.last_p]$ and computes the interval $D_j$ as described above.
If the cardinality of $D_j$ should be less than $d_j$, then the anchor looks for the next non-empty interval in the total order of priorities to compute another interval of positions to add to $D_j$.
The anchor stops if all intervals $[v_0.first_p, v_0.last_p]$ are empty or the cardinality of $D_j$ is equal to $d_j$.
Observe that this approach technically leads to each $I_j$ and $D_j$ being a collection of at most $|\mathcal P|$ intervals.
By doing so, the anchor assigned a vector of intervals to each entry in $v_0.B^+$, implying that we can assign a pair $(p,pos) \in \mathcal P \times \mathbb{N}$ to each single heap operation represented by the batch (which is part of the next phase).

\subsubsection{Phase 3: Decomposing Position Intervals}
Once $v_0$ has computed all required intervals $I_1,\ldots,I_k$ for a batch, it starts the third phase, in which these intervals are decomposed and broadcasted over all nodes in the aggregation tree.
When a node $v$ in the tree receives $I_1,\ldots,I_k$, it decomposes the intervals with respect to $v.B$ and the sub-batches of its child nodes it received in Phase~1 (recall that $v$ has memorized these).
Once each sub-batch has been assigned to a collection of (sub\mbox{-})intervals, we send out these intervals to the respective child nodes in $C(v)$.
Applying this procedure in a recursive manner down the aggregation tree yields an assignment of a pair $(p,pos) \in \mathcal P \times \mathbb{N}$ to all \ins{} and \delmin requests.
\Cref{fig:skeap_example} illustrates the first $3$ phases of \pname{}.

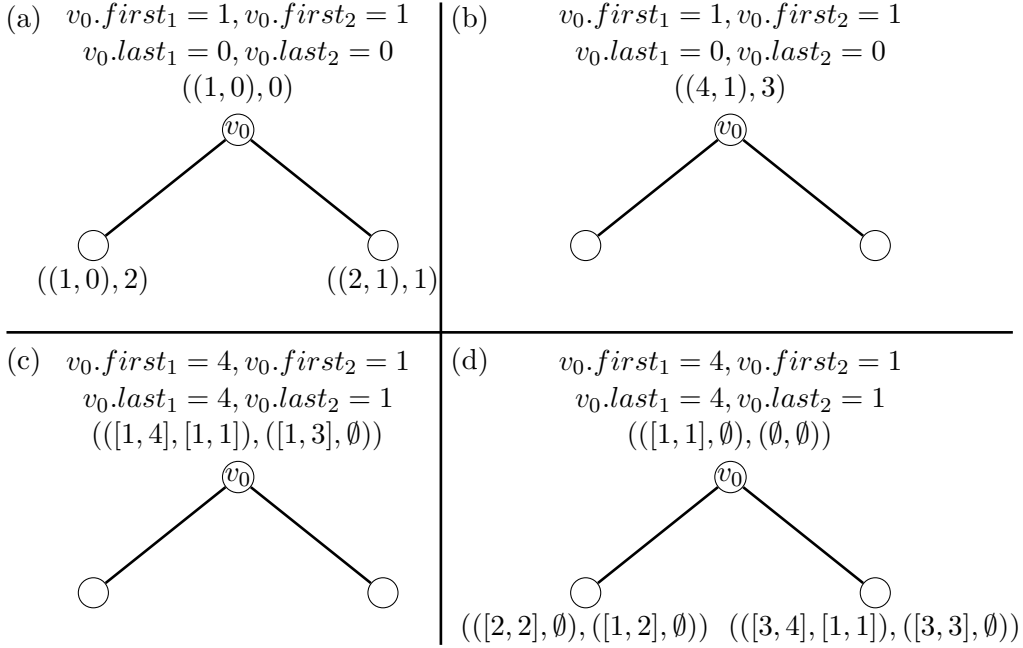
\begin{figure*}[ht]
 	\centering
 	\begin{tikzpicture}[scale=0.77, main node/.style={circle,draw,align=center, minimum size=0.4cm, inner sep=0pt}]	
     	\node[main node, label={[yshift=-1cm]$((1,0),2)$}] (A1) at (-0.5,0) {};
     	\node[main node, label={[yshift=0cm]$((1,0),0)$}, label={[yshift=0.5cm]$v_0.last_1 = 0, v_0.last_2 = 0$}, label={[yshift=1cm]$v_0.first_1 = 1, v_0.first_2 = 1$}] (B1) at (2,2) {$v_0$};
     	\node[main node, label={[yshift=-1cm]$((2,1),1)$}] (C1) at (4.5,0) {};
     	
     	\draw[-, line width=1pt] (A1) to (B1);
     	\draw[-, line width=1pt] (C1) to (B1);

     	\node[main node] (A2) at (8,0) {};
     	\node[main node, label={[yshift=0cm]$((4,1),3)$}, label={[yshift=0.5cm]$v_0.last_1 = 0, v_0.last_2 = 0$}, label={[yshift=1cm]$v_0.first_1 = 1, v_0.first_2 = 1$}] (B2) at (10.5,2) {$v_0$};
     	\node[main node] (C2) at (13,0) {};
     	
     	\draw[-, line width=1pt] (A2) to (B2);
     	\draw[-, line width=1pt] (C2) to (B2);

     	\node[main node] (A3) at (-0.5,-6) {};
     	\node[main node, label={[yshift=0cm]$(([1,4],[1,1]),([1,3],\emptyset))$}, label={[yshift=0.5cm]$v_0.last_1 = 4, v_0.last_2 = 1$}, label={[yshift=1cm]$v_0.first_1 = 4, v_0.first_2 = 1$}] (B3) at (2,-4) {$v_0$};
     	\node[main node] (C3) at (4.5,-6) {};
     	
     	\draw[-, line width=1pt] (A3) to (B3);
     	\draw[-, line width=1pt] (C3) to (B3);

     	\node[main node, label={[yshift=-1cm]$(([2,2],\emptyset),([1,2],\emptyset))$}] (A4) at (8,-6) {};
     	\node[main node, label={[yshift=0cm]$(([1,1],\emptyset),(\emptyset,\emptyset))$}, label={[yshift=0.5cm]$v_0.last_1 = 4, v_0.last_2 = 1$}, label={[yshift=1cm]$v_0.first_1 = 4, v_0.first_2 = 1$}] (B4) at (10.5,-4) {$v_0$};
     	\node[main node, label={[yshift=-1cm]$(([3,4],[1,1]),([3,3],\emptyset))$}] (C4) at (13,-6) {};
     	
     	\draw[-, line width=1pt] (A4) to (B4);
     	\draw[-, line width=1pt] (C4) to (B4);
     	\draw[-, line width=1pt] (-2,-1.5) to (15.375,-1.5);
     	\draw[-, line width=1pt] (5.5,4.2) to (5.5,-6.9);
     	
     	\node[] (A3) at (-1.7,3.9) {(a)};
     	\node[] (A3) at (6,3.9) {(b)};
     	\node[] (A3) at (-1.7,-2) {(c)};
     	\node[] (A3) at (6,-2) {(d)};
 	\end{tikzpicture}
	\caption{Illustration of the phases for \pname{} for $n = 3$ nodes, exemplary batches for each node and $\mathcal P=\{1,2\}$. (a) Before Phase 1. (b) After Phase 1 and before Phase 2. (c) After Phase 2 and before Phase 3. (d) After Phase 3.}
 	\label{fig:skeap_example}
\end{figure*}

\subsubsection{Phase 4: Updating the DHT}
Now that a node $v$ knows the unique pair $(p,pos) \in \mathcal P \times \mathbb{N}$ for each of its heap operations, it starts generating operations on the DHT.
For a request \ins{e} that got assigned to pair $(p,pos)$, $v$ generates a request \textsc{Put}($h(p,pos)$, $e$) to insert $e$ into the DHT.
Here $h: \mathcal P \times \mathbb{N} \rightarrow \mathbb{N}$ is a publicly known pseudorandom hash function that generates the key under which $e$ should be stored.
This finishes the \ins{e} request.

For a \delmin request that got assigned to pair $(p,pos)$, $v$ generates a request \textsc{Get}($h(p,pos)$, $v$).
Note that for such a \delmin request there exists a corresponding \textsc{Put}($h(p,pos)$, $e$) request.
As $h$ is pseudorandom, both of these requests are guaranteed to meet at the same node.
Due to asynchronicity, it may happen that a \textsc{Get} request arrives at the correct node in the DHT \emph{before} the corresponding \textsc{Put} request.
In this case the \textsc{Get} request waits at that node until the corresponding \textsc{Put} request has arrived, which will eventually happen.

Once a node has generated all its DHT requests, it switches again to Phase~1, in order to process the next batch of heap operations.

\subsection{Results}
The following theorem summarizes our results for \pname{}.

\begin{theorem} \label{theorem:skeap}
	\pname{} implements a distributed heap with the following properties:
	\begin{enumerate}
		\item \pname{} is fair.
		\item \pname{} satisfies sequential consistency and heap consistency.
		\item The number of rounds, needed to process heap requests successfully is $O(\log n)$ w.h.p.
		\item The congestion of \pname{} is at most $\widetilde{O}(\Lambda)$.
		\item Messages in \pname{} have size of at most $O(\Lambda \log^2 n)$ bits.
	\end{enumerate}
\end{theorem}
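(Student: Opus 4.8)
The plan is to verify the five properties essentially by composing the guarantees of the aggregation tree (\Cref{lemma:aggregation_tree}) with the bookkeeping invariants the anchor maintains in Phase~2, and then to reduce the remaining work to the corresponding analysis of \skueue{}~\cite{DBLP:conf/ipps/FeldmannSS18}, since \pname{} is a priority-indexed collection of \skueue{} instances sharing one aggregation tree. First I would fix, for a single run through Phases~1--4, the global ordering $\prec$ on the batch's operations: operations are ordered primarily by the batch index at the anchor (i.e.\ the iteration in which they were aggregated), and within one batch by the position/index the anchor assigns in Phase~2, with \ins{} and \delmin{} entries interleaved in the order they appear in $v_0.B^+$. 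Extending over successive iterations gives a total order on all of $S$. I would then observe that because each node's snapshot batch $v.B$ respects the local issue order of its own requests, and because the decomposition in Phase~3 is the exact inverse of the combination in Phase~1, the sub-interval a node receives back for its own operations preserves that local order; hence $\textsc{OP}_{v,i}\prec\textsc{OP}_{v,i+1}$, giving local consistency. Serializability follows because every operation is assigned a unique key $h(p,pos)$ and the DHT behaves like a sequential heap under $\prec$: a \delmin{} at $(p,pos)$ meets exactly the \textsc{Put} issued for the matching \ins{}, and waiting at the node handles the out-of-order arrival. Together these give property~2's sequential consistency.

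For heap consistency I would check the three conditions of \Cref{def:heap_consistency} directly against the Phase~2 rules. Condition~(1): since $v_0.last_p$ is increased by $i_j$ before any later $D$-interval can reach those positions, and $v_0.first_p$ only advances over positions already created, a position is deleted only after it has been inserted, so $\textsc{Ins}_{v,i}\prec\textsc{Del}_{w,j}$ for matched pairs. Condition~(2): a \delmin{} entry $d_j$ yields an interval $D_j$ of cardinality $\min\{d_j,\text{available elements across priorities}\}$, and the anchor only leaves a \delmin{} unmatched (returning $\perp$) when \emph{all} intervals $[v_0.first_p,v_0.last_p]$ are empty at that point in $\prec$; hence there is no unmatched \delmin{} strictly between a matched \ins{} and its matched \delmin{}. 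Condition~(3): when forming $D_j$ the anchor always consumes the most-prioritized non-empty interval first, so any still-unconsumed \ins{} with strictly smaller priority would contradict the fact that the anchor processes priorities in sorted order before moving on — i.e.\ no unmatched $\textsc{Ins}_{u,k}$ with $prio(\textsc{Ins}_{u,k})<prio(\textsc{Ins}_{v,i})$ can precede $\textsc{Del}_{w,j}$ in $\prec$. Fairness (property~1) is then immediate from \Cref{lemma:aggregation_tree}(iv): elements are stored under pseudorandom keys $h(p,pos)$, so they are distributed uniformly and each node stores $m/n$ elements in expectation; and every issued request is eventually aggregated, assigned a position, and served, by fair message receipt.

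For the quantitative properties~3--5 I would argue round-by-round over the four phases. Phases~1 and~3 are an aggregation phase and its reverse broadcast, each taking $O(\log n)$ rounds w.h.p.\ by \Cref{lemma:aggregation_tree}(i); Phase~2 is purely local at the anchor; Phase~4 consists of $\widetilde{O}(\Lambda)$ \textsc{Put}/\textsc{Get} operations per node routed through the DHT in $O(\log n)$ rounds w.h.p.\ by \Cref{lemma:aggregation_tree}(iii),(v), where the congestion bound $\widetilde{O}(\Lambda)$ comes from each node injecting at most $\lambda(v)\le\Lambda$ DHT requests and invoking the routing emulation of \Cref{lemma:aggregation_tree}(v) with $C=\widetilde{O}(\Lambda)$. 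This gives property~3 and property~4. For property~5, a batch of length $k$ has entries in $\mathbb{N}^{|\mathcal P|}\times\mathbb{N}$; since $|\mathcal P|=c$ is constant and each count is $O(\poly(n))$ hence $O(\log n)$ bits, one batch slot is $O(\log n)$ bits, and the number of slots $k$ is bounded by the number of iterations' worth of buffered operations times $\Lambda$, which over the $O(\log n)$-round pipeline is $O(\Lambda\log n)$; an interval message in Phase~3 is at most $O(|\mathcal P|)=O(1)$ intervals per slot, each an $O(\log n)$-bit pair, so messages are $O(\Lambda\log^2 n)$ bits. The main obstacle I anticipate is property~2 together with heap consistency across \emph{iteration boundaries}: one must argue that a \delmin{} aggregated in a later batch, which the anchor may process while \textsc{Put} requests from the previous batch are still in flight in the DHT, still meets the right element — this is where the "\textsc{Get} waits for the matching \textsc{Put}" mechanism and the persistence of $v_0.first_p,v_0.last_p$ between iterations must be shown to interact correctly, and it is the delicate point that distinguishes the asynchronous setting from a naive sequential argument. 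Everything else reduces cleanly to the \skueue{} analysis applied coordinate-wise over the $c$ priorities.
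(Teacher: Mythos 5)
Your proposal is correct and follows essentially the same route as the paper: you define $\prec$ by anchor iteration and within-batch position (the paper's $value(\cdot)$ bookkeeping is just a formalization of this), verify the three heap-consistency conditions by the same contradiction arguments against the Phase~2 interval rules, and obtain properties 3--5 from \Cref{lemma:aggregation_tree} and the \skueue{} analysis exactly as the paper does. The cross-iteration subtlety you flag is handled in the paper no more elaborately than you handle it, namely by the persistence of $v_0.first_p,v_0.last_p$ and the ``\textsc{Get} waits for \textsc{Put}'' rule.
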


We first show that \pname{} guarantees sequential consistency.
In order to do this we define a total order on all \ins{} and \delmin{} requests and show that the chosen order satisfies the properties given in \Cref{def:semantics}.

In order to define $\prec$ we want to assign a unique value to each heap request $\mathit{OP}$, indicated by $value(\mathit{OP}) \in \mathbb{N}$.
Informally, $value(\mathit{OP})$ is the number of requests that the anchor has ever processed in Phase~2 up to (and including) $\mathit{OP}$.
More formally, recall \Cref{def:batch} and assume w.l.o.g. that $i_j = \sum_{p = 1}^{|\mathcal P|} i_{j,p}$.
Every time a node issues a heap operation $\mathit{OP}$, it increments a value $op_j$ ($=i_j$ or $d_j$) of some batch $B$ that would be created at the start of Phase~1, i.e., $op_j' = op_j + 1$.
Initially, set $value(\mathit{OP}) = \sum_{k=1}^{j-1} op_k + op_j'$.
Whenever we combine two batches $B_1 = (i_1,d_1,\ldots,i_k,d_k)$ and $B_2 = (i_1',d_1',\ldots,i_k',d_k')$ on their way to the anchor as part of Phase~1, we do the following: We choose one of the batches to be the first~(let this be $B_1$ here) and the other one to be the second ($B_2$) and modify all values for operations $\mathit{OP}$ that are part of the second batch.
If $value(\mathit{OP})$ is the value for $\mathit{OP}$ before combining the batches, we set the new value for $\mathit{OP}$ to $value'(\mathit{OP}) = (\sum_{j=1}^k i_j+d_j) + value(\mathit{OP})$.
We let the anchor maintain a (virtual) variable $count \in \mathbb{N}$, which is set to $1$ initially.
Every time the anchor processes a batch $B^+ = (i_1,d_1,\ldots,i_k,d_k)$ in Phase~2, we set the value for each request $\mathit{OP}$ to $value(\mathit{OP}) = count + value(\mathit{OP})$.
Afterwards we increase $count$ in preparation for the next incoming batch by setting $count = count + \sum_{j=1}^k i_j+d_j$.
Notice that this approach leads to each operation getting a unique value and thus, we can define the total order $\prec$ by sorting all operations according to their values.
By definition of our protocol this order resembles the exact order in which the anchor processes the batch that corresponds to the operations, implying that \pname{} is serializable.

The following lemmas follows directly from the protocol description:

\begin{lemma} \label{lemma:const:1}
	Let $(\textsc{Ins}_{v,i}, \textsc{Del}_{w,j}) \in M$.
	Then both operations got assigned to the same pair $(p, pos)$ from the anchor.
\end{lemma}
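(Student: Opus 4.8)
The plan is to argue directly from the definition of the matching $M$ and the way DHT operations are generated in Phase~4. Recall that a pair $(\textsc{Ins}_{v,i}, \textsc{Del}_{w,j})$ is by definition in $M$ exactly when the \delmin{} request $\textsc{Del}_{w,j}$ returns the element $e$ that was inserted by $\textsc{Ins}_{v,i}$. So I would start by unrolling what ``returns $e$'' means in terms of the protocol: in Phase~4, $\textsc{Ins}_{v,i}$ was assigned some pair $(p,pos)\in\mathcal P\times\mathbb N$ by Phases~2--3 and consequently issued a \textsc{Put}$(h(p,pos),e)$ request; similarly $\textsc{Del}_{w,j}$ was assigned some pair $(p',pos')$ and issued a \textsc{Get}$(h(p',pos'),w)$ request. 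The element actually delivered back to $w$ is the one stored in the DHT under the key $h(p',pos')$.

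The key step is then to observe that, because $h$ is a fixed (pseudorandom but deterministic) function and the DHT stores a value under a key only via the matching \textsc{Put}, the \textsc{Get}$(h(p',pos'),w)$ request can only return the element that was written by a \textsc{Put} request carrying the \emph{same} key $h(p',pos')$. Here I would invoke the injectivity-of-positions structure set up in Phase~2: the anchor assigns each inserted element a distinct position within the interval $[v_0.first_p,v_0.last_p]$ for its priority $p$, and assigns each \delmin{} the smallest currently occupied position (sweeping priorities in order). Thus at the moment $\textsc{Del}_{w,j}$ is matched, its assigned pair $(p',pos')$ is precisely a pair $(p,pos)$ that some insert currently occupies, and no two live inserts share a pair. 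Hence the \textsc{Put} that wrote the element retrieved by $\textsc{Del}_{w,j}$ is the unique one with pair $(p',pos')$, and since $\textsc{Ins}_{v,i}$ is by assumption the insert whose element is retrieved, that insert's assigned pair must equal $(p',pos')$. Therefore $(p,pos) = (p',pos')$, which is the claim.

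I would write this up in two short moves: first, ``$(\textsc{Ins}_{v,i},\textsc{Del}_{w,j})\in M$ implies the element $e$ of $\textsc{Ins}_{v,i}$ is stored under key $h(p,pos)$ and is the value returned by the \textsc{Get} of $\textsc{Del}_{w,j}$''; second, ``a \textsc{Get}$(h(p',pos'),w)$ returns exactly the value \textsc{Put} under key $h(p',pos')$, and since position pairs are assigned injectively in Phase~2 (so distinct pairs yield distinct keys, and the same element cannot sit under two keys), we must have $h(p,pos)=h(p',pos')$ and hence $(p,pos)=(p',pos')$.'' For the last implication I rely on $h$ being injective on the set of position pairs actually used — more precisely, it suffices that the pair retrieved and the pair inserted induce the same key \emph{and} that the protocol never reuses a pair for two concurrently-live elements, both of which follow from the Phase~2 description ($v_0.last_p$ is only ever incremented, and $v_0.first_p$ only ever advances past positions that have been matched).

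The main obstacle, and the one place I would be careful, is the subtlety around $h$ being pseudorandom rather than literally injective, together with asynchrony: I must rule out that a \textsc{Get} returns an element put under a \emph{different} pair that happens to hash to the same key, and I must handle the case (noted in the Phase~4 description) where a \textsc{Get} arrives before its \textsc{Put} and waits. I expect the cleanest way around the collision issue is to argue that since only $m\in\poly(n)$ pairs are ever used and $h$ ranges over a sufficiently large domain, distinct used pairs get distinct keys w.h.p.\ (or simply to treat $h$ restricted to the used pairs as injective, which is how the protocol is evidently intended); and for the timing issue, the waiting \textsc{Get} eventually receives precisely the element of the \textsc{Put} with its own key, so the argument is unaffected. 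Everything else is a direct read-off of the protocol, so this lemma should be short.
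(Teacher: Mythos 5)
Your argument is correct and is exactly the reasoning the paper has in mind: the paper gives no proof at all for this lemma, asserting it ``follows directly from the protocol description,'' and your unrolling of Phase~4 (\textsc{Put} under $h(p,pos)$, \textsc{Get} under $h(p',pos')$, injectivity of pair assignment in Phase~2, collision-freeness of $h$ on the polynomially many used pairs) is the natural way to make that assertion precise. Your care about the pseudorandomness of $h$ and the \textsc{Get}-before-\textsc{Put} race is more than the paper itself provides, and is a reasonable strengthening rather than a deviation.
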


\begin{lemma}\label{lemma:const:skueue}
	If two operations $\textsc{Ins}_{v,i}$ and $\textsc{Del}_{w,j}$ got assigned to the same pair from the anchor, then $\textsc{Ins}_{v,i} \prec \textsc{Del}_{w,j}$.
\end{lemma}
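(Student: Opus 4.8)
The plan is to trace through the four phases of \pname{} and show that whenever an \ins{} request and a \delmin{} request end up assigned to the same pair $(p,pos)$, the total order $\prec$ (defined via the $value(\cdot)$ bookkeeping) necessarily places the insert before the delete. The key observation is that for a fixed priority $p$, the anchor in Phase~2 processes the entries of each combined batch $v_0.B^+ = (i_1,d_1,\ldots,i_k,d_k)$ in order $j = 1, \dots, k$, and within entry $j$ it first handles the inserts $i_j$ (extending $[v_0.first_p, v_0.last_p]$ by pushing $v_0.last_p$ upward) and only then the deletes $d_j$ (advancing $v_0.first_p$). Since $v_0.first_p \le v_0.last_p + 1$ is maintained as an invariant, a position $pos$ is assigned to a \delmin{} request only after it has first been assigned (as $v_0.last_p$ grew past it) to some \ins{} request. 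So at the level of a single batch processed in a single Phase~2 invocation, the claim is essentially immediate from the order of the two nested loops.

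The first thing I would do is make precise the correspondence between the abstract ordering $\prec$ on requests and the order in which the anchor "touches" positions: I would argue that if $\textsc{Ins}_{v,i}$ is assigned $(p,pos)$ and $\textsc{Del}_{w,j}$ is assigned the same $(p,pos)$, then either (a) both were carried in the same combined batch $B^+$ reaching the anchor in the same Phase~2 invocation, or (b) $\textsc{Ins}_{v,i}$ was in an \emph{earlier} batch than $\textsc{Del}_{w,j}$. Case (b) is handled by the $count$ variable: the anchor increments $count$ by $\sum_j i_j + d_j$ after each batch, so every request in an earlier batch gets a strictly smaller $value$ than every request in a later batch, hence $\textsc{Ins}_{v,i} \prec \textsc{Del}_{w,j}$. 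For case (b) I also need the structural fact that position $pos$ of priority $p$, once vacated by a matching delete, is never reassigned to a later insert — this follows because $v_0.last_p$ only ever increases and $v_0.first_p$ only ever increases, so the interval of "live" positions for priority $p$ slides monotonically; thus a single position index is born once (by an insert) and dies once (by its matching delete), and the insert's batch is weakly earlier.

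For case (a), within one Phase~2 invocation on batch $B^+ = (i_1,d_1,\ldots,i_k,d_k)$, I would show that the \ins{} and \delmin{} assigned to $(p,pos)$ occupy entries $j_{\text{ins}} \le j_{\text{del}}$ of $B^+$: position $pos$ enters the live interval during the insert block of entry $j_{\text{ins}}$ and leaves it during the delete block of entry $j_{\text{del}}$, and since $v_0.first_p$ never exceeds the current $v_0.last_p + 1$, we cannot delete position $pos$ before it has been created, so $j_{\text{ins}} \le j_{\text{del}}$ (and if $j_{\text{ins}} = j_{\text{del}}$ the insert block precedes the delete block in the same entry). Then I would check that the $value(\cdot)$ assignment respects this entry order: after all combining steps of Phase~1 and the final $+count$ shift in Phase~2, requests landing in entry $j'$ of $B^+$ receive strictly larger $value$ than those landing in entry $j < j'$, and within an entry the inserts are ordered before the deletes by the same running-sum construction. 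Concluding $value(\textsc{Ins}_{v,i}) < value(\textsc{Del}_{w,j})$, i.e., $\textsc{Ins}_{v,i} \prec \textsc{Del}_{w,j}$.

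The main obstacle I anticipate is the bookkeeping in case (a): one has to verify carefully that the somewhat intricate relabeling of $value(\cdot)$ during batch combination in Phase~1 — where one sub-batch is shifted by $\sum_{j=1}^k i_j + d_j$ of the other — is consistent with "entry-index order within $B^+$, inserts before deletes", for \emph{all} of the possibly many combinations along the path to the anchor, and that the decomposition in Phase~3 hands back to each node exactly the positions consistent with this labeling. I expect this to be a careful but routine induction on the structure of the aggregation tree, using the already-noted fact (from the serializability argument preceding the lemma) that $\prec$ "resembles the exact order in which the anchor processes the batch." The position-monotonicity argument for case (b) is short once stated correctly.
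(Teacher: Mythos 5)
The paper itself gives no argument for this lemma (it is asserted to ``follow directly from the protocol description''), so your attempt is the only detailed proof in play. Your architecture is the right one: the case split between matched pairs resolved within one Phase~2 invocation and across two invocations, the $count$ shift for the cross-batch case, and the monotonicity of $v_0.first_p$ and $v_0.last_p$ together with the cap $\min\{v_0.first_p+d_j-1,\,v_0.last_p\}$ (which forces every position to be handed to an \ins{} before it can be handed to a \delmin{}) are exactly the ingredients needed. Case (b) of your plan is correct as sketched.

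The problem is the step you defer as ``a careful but routine induction'': the claim that, after all combining steps, requests landing in entry $j'$ of $B^+$ receive strictly larger $value$ than those landing in entry $j<j'$. This is not merely unverified --- it is false for the combination rule as the paper states it. When $B_1$ and $B_2$ are combined, \emph{every} operation of $B_2$ is shifted by $\sum_{j}(i_j+d_j)$, i.e.\ past \emph{all} of $B_1$, including $B_1$'s operations in later entries; so the $value$-order is a concatenation of sub-batches, whereas the anchor's position assignment in Phase~2 interleaves the sub-batches entry by entry. Concretely, let all operations have the same priority and the heap start empty; let $B_1$ encode the local sequence $\ins{},\delmin{},\ins{},\delmin{},\delmin{}$, i.e.\ $(1,1,1,2)$ with values $1,\dots,5$, and let $B_2$ encode a single \ins{}, i.e.\ $(1,0)$, whose value becomes $6$ after combining. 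The combined batch $(2,1,1,2)$ yields $I_1=[1,2]$, $D_1=[1,1]$, $I_2=[3,3]$, $D_2=[2,3]$, so whichever of the two positions in $I_1$ the decomposition hands to $B_2$'s insert, that insert (value $6$) is matched with a \delmin{} of $B_1$ having value at most $5$, giving $\textsc{Del}\prec\textsc{Ins}$ for a matched pair. Thus your case (a) cannot be closed for $\prec$ as formally defined; the induction you anticipate would uncover this mismatch rather than resolve it. The statement does hold if $\prec$ is taken to be the order in which the anchor actually touches positions in Phase~2 (entries in increasing order, insert block before delete block within an entry, positions increasing within a block), which is what the paper's \emph{informal} description of $value(\cdot)$ says and which still satisfies local consistency; under that reading the remainder of your plan goes through. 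As written, however, your proof rests on a compatibility claim that fails.
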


We are now ready to prove our main result of this section:

\begin{lemma}
	\pname{} implements a fair distributed heap that is sequentially consistent and heap consistent.
\end{lemma}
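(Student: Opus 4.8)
The plan is to establish the three requirements — fairness, sequential consistency, and heap consistency — separately, reusing the total order $\prec$ already constructed above (sorting operations by their $value(\cdot)$). For \textbf{fairness}, I would invoke \Cref{lemma:aggregation_tree}(iv): every \ins{} request ultimately generates a \textsc{Put}$(h(p,pos),e)$ into the DHT, and since $h$ is pseudorandom the keys are distributed uniformly among the nodes, so each node stores $m/n$ elements in expectation. For \textbf{sequential consistency}, by \Cref{def:semantics} it suffices to show the data structure is serializable and locally consistent. Serializability is already argued in the text: $\prec$ orders operations exactly in the order the anchor processes them in Phase~2, and by construction every operation receives a distinct $value$, so the distributed execution is equivalent to the serial execution w.r.t.\ $\prec$. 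Local consistency — $\textsc{OP}_{v,i} \prec \textsc{OP}_{v,i+1}$ for every node $v$ — follows from the observation that each node's batch $v.B$ is a \emph{snapshot respecting the local order} in which $v$ generated its requests: within a single batch the positions assigned in Phase~2 are monotone in the batch index $j$, and across successive batches the anchor's $count$ variable only increases, so $value(\textsc{OP}_{v,i}) < value(\textsc{OP}_{v,i+1})$.

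For \textbf{heap consistency} I would verify the three conditions of \Cref{def:heap_consistency} against this same $\prec$. Condition~(1), $\textsc{Ins}_{v,i} \prec \textsc{Del}_{w,j}$ for every matched pair, is exactly \Cref{lemma:const:1} combined with \Cref{lemma:const:skueue}: a matched pair is assigned the same $(p,pos)$ by the anchor, and any two operations sharing a pair satisfy $\textsc{Ins}_{v,i} \prec \textsc{Del}_{w,j}$. For condition~(2) (no unmatched \delmin{} strictly between a matched insert and its delete in $\prec$), I would argue from the anchor's interval bookkeeping: the interval $D_j$ assigned to a block of \delmin{} requests is $[v_0.first_p, \min\{v_0.first_p+d_j-1, v_0.last_p\}]$, so a \delmin{} fails to be matched only when all priority intervals are empty at the moment the anchor processes it; but if position $pos$ of priority $p$ was filled by $\textsc{Ins}_{v,i}$ with $value < value(\textsc{Del}_{u,k})$ and not yet consumed, then that interval is nonempty, so $\textsc{Del}_{u,k}$ would have been assigned that position and matched — contradiction. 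Condition~(3) (no unmatched insert of strictly smaller priority appearing before a matched delete) follows because when the anchor serves a \delmin{}, it drains the \emph{most prioritized non-empty} interval first; so if an earlier unmatched insert of priority $p' < p$ existed in $\prec$, its slot would still be occupied when $\textsc{Del}_{w,j}$ is processed, and the anchor would have matched $\textsc{Del}_{w,j}$ to that smaller-priority slot instead.

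The main obstacle I anticipate is condition~(2): making precise the claim that an unmatched \delmin{} "sees an empty heap at the moment it is processed,'' because the anchor's Phase~2 computation is purely local and synchronous over one combined batch, whereas matchings are realized asynchronously in Phase~4 through the DHT. The clean way around this is to note that the matching $M$ is in fact \emph{determined entirely by the anchor's interval assignments} — $(\textsc{Ins}_{v,i},\textsc{Del}_{w,j}) \in M$ iff they are assigned a common $(p,pos)$, which is a statement about Phase~2 alone — so the asynchrony in Phase~4 is irrelevant to verifying \Cref{def:heap_consistency}, and all three conditions reduce to combinatorial facts about how the anchor maintains its $first_p$/$last_p$ pointers across the sequence of all batches it ever processes. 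Once that reduction is stated, conditions (2) and (3) are short arguments about intervals being nonempty exactly when they contain an unconsumed inserted position. The remaining routine step is checking that the $\widetilde{O}(\Lambda)$ congestion and $O(\Lambda\log^2 n)$ message-size bounds (parts 4 and 5 of \Cref{theorem:skeap}) carry over from \skueue{}, since a batch entry now carries a $|\mathcal P|$-vector with $|\mathcal P|=O(1)$, inflating messages only by a constant factor.
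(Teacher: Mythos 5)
Your proposal is correct and follows essentially the same route as the paper's proof: fairness from the pseudorandom hash function, sequential consistency from the value-based order $\prec$ (with local consistency coming directly from its definition), and heap consistency by verifying the three conditions of \Cref{def:heap_consistency} using \Cref{lemma:const:1} and \Cref{lemma:const:skueue} for condition~(1) and contradiction arguments about the anchor's interval bookkeeping for conditions~(2) and~(3). Your explicit observation that the matching $M$ is determined entirely by the anchor's Phase~2 assignments (so Phase~4 asynchrony is irrelevant) is a helpful clarification that the paper leaves implicit, but the substance of the argument is the same.
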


\begin{proof}
	Fairness is clear, since we make use of a pseudorandom hash function in order to distribute heap elements uniformly over all nodes.
	
	Next we show heap consistency for \pname{}.	
	We consider each property of \Cref{def:semantics} individually and show that the order defined above by $\prec$ fulfills that requirement.
	First, it follows from \Cref{lemma:const:1} that if $(\textsc{Ins}_{v,i}, \textsc{Del}_{w,j}) \in M$, then both operations got assigned the same pair $(p, pos)$ by the anchor, i.e., both requests operate on the same interval $[v_0.first_p,v_0.last_p]$.
	Thus, it follows from \Cref{lemma:const:skueue} that $\textsc{Ins}_{v,i} \prec \textsc{Del}_{w,j}$.
	
	For the second property let $(\textsc{Ins}_{v,i}, \textsc{Del}_{w,j}) \in M$ and assume there is $\textsc{Del}_{u,k} \not \in M$ such that $\textsc{Ins}_{v,i} \prec \textsc{Del}_{u,k} \prec \textsc{Del}_{w,j}$.
	From $\textsc{Del}_{u,k} \not \in M$ it follows that $\textsc{Del}_{u,k}$ returns $\perp$.
	Since $\textsc{Ins}_{v,i} \prec \textsc{Del}_{u,k}$, there have to be other \delmin{} operations that lie between $\textsc{Ins}_{v,i}$ and $\textsc{Del}_{u,k}$ in $\prec$ with one of them being matched to $\textsc{Ins}_{v,i}$.
	But this contradicts the fact that $\textsc{Ins}_{v,i}$ got matched to $\textsc{Del}_{w,j}$ which is processed after $\textsc{Del}_{u,k}$ by the anchor.

	For the third property let $(\textsc{Ins}_{v,i}, \textsc{Del}_{w,j}) \in M$ and assume there is a $\textsc{Ins}_{u,k} \not \in M$ such that $\textsc{Ins}_{u,k} \prec \textsc{Del}_{w,j}$ and $prio(\textsc{Ins}_{u,k}) < prio(\textsc{Ins}_{v,i})$.
	Because of $\textsc{Ins}_{u,k} \prec \textsc{Del}_{w,j}$ it follows that at the time the anchor processes $\textsc{Del}_{w,j}$, the heap contains both the elements that have been inserted by $\textsc{Ins}_{u,k}$ and $\textsc{Ins}_{v,i}$.
	But then, by definition of Phase~2 of \pname{}, the anchor should have assigned $\textsc{Ins}_{u,k}$ to $\textsc{Del}_{w,j}$, as it has to be prioritized, which contradicts our assumption of $\textsc{Ins}_{v,i}$ being matched to $\textsc{Del}_{w,j} \in M$.
	
	By the argumentation above we know that \pname{} is at least serializable.
	Additionally, local consistency is directly satisfied by the way we defined $\prec$ and thus \pname{} is also sequentially consistent.
\end{proof}

As \pname{} consists of the same phases as \skueue{}, the following corollary stating the runtime bound for heap operations follows from~\cite{DBLP:conf/ipps/FeldmannSS18}:

\begin{corollary}\label{cor:skeap:runtime}
	Assume a node $v \in V$ has stored an arbitrary amount of heap requests in its local queue.
	The number of rounds, needed to process all requests successfully is $O(\log n)$ w.h.p.
\end{corollary}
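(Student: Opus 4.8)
The plan is to follow the four-phase structure of \pname{} and observe that each phase terminates in $O(\log n)$ rounds w.h.p., which recovers exactly the analysis of \skueue{} in~\cite{DBLP:conf/ipps/FeldmannSS18}. Since at the start of Phase~1 every node takes a single snapshot of its local queue and encodes it as one batch $v.B$, all requests present when the node enters Phase~1 are handled within one pass through Phases~1--4; it therefore suffices to bound the duration of a single such pass and then union-bound over the passes executed during the (polynomially long) lifetime of the heap.

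For Phase~1, the batches are combined along the aggregation tree exactly as in an aggregation phase, so by \Cref{lemma:aggregation_tree}(i) the anchor receives the fully combined batch after $O(\log n)$ rounds w.h.p. Phase~2 consists only of local computation at the anchor and hence costs $O(1)$ rounds; the fact that each entry of the batch is now an $|\mathcal P|$-vector and that the anchor may spread an interval $D_j$ over up to $|\mathcal P|$ priority intervals does not change this, since $|\mathcal P|=c$ is constant (it only affects message size and local work, which are handled separately in \Cref{theorem:skeap}). Phase~3 is the reverse broadcast of the interval collections down the same tree, with the decomposition at each node performed against the memorized sub-batches; this is again an $O(\log n)$-round w.h.p. traversal of a tree of height $O(\log n)$.

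I would treat Phase~4 as the crux. Each node turns its assigned pairs $(p,pos)$ into \textsc{Put} and \textsc{Get} requests on the DHT, which by \Cref{lemma:aggregation_tree}(iii) are served in $O(\log n)$ rounds w.h.p.; via \Cref{lemma:aggregation_tree}(v) the corresponding de Bruijn routing schedule has dilation $O(\log n)$ and congestion $\widetilde{O}(\Lambda)$, so all requests are delivered within $O(\log n)$ rounds w.h.p. The one subtlety is that a \textsc{Get}$(h(p,pos),v)$ may reach its destination before the matching \textsc{Put}$(h(p,pos),e)$: I would argue that both requests originate from the same batch, hence are issued within the $O(\log n)$ rounds it takes Phase~3 to reach all nodes, and each is then routed in $O(\log n)$ further rounds, so the waiting \textsc{Get} is resolved after at most $O(\log n)$ additional rounds. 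Summing the four phases yields $O(\log n)$ rounds w.h.p., and a union bound over the $O(\poly(n))$ passes preserves the high-probability guarantee.

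The main obstacle I expect is making the Phase~4 step fully rigorous: one must check that the batching does not silently raise the congestion of the DHT routing beyond $\widetilde{O}(\Lambda)$ even though a single node may now own many of the $m$ elements, and that the ``wait-for-\textsc{Put}'' delay is genuinely $O(\log n)$ rather than merely finite. Both points, however, are precisely the situations already analyzed for \skueue{} --- the extension to priorities only replaces scalar batch entries by constant-length vectors and adds priority-ordered interval bookkeeping at the anchor, neither of which affects round complexity --- so the corollary follows from~\cite{DBLP:conf/ipps/FeldmannSS18} once this correspondence is spelled out.
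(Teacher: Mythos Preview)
Your proposal is correct and, in spirit, matches the paper's approach: the paper does not give a standalone proof here at all but simply observes that \pname{} consists of the same phases as \skueue{} and cites~\cite{DBLP:conf/ipps/FeldmannSS18} for the bound. Your phase-by-phase accounting (and the handling of the \textsc{Get}-before-\textsc{Put} subtlety) is exactly what that citation unpacks to, so you have essentially written out the deferred argument; one minor remark is that the matching \textsc{Put} need not come from the \emph{same} batch---it may come from an earlier one---but in that case it has already completed before the current Phase~4 begins, so your $O(\log n)$ waiting bound still holds.
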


Now we prove the bound for the congestion of \pname{}:

\begin{lemma} \label{theorem:skeap:congestion}
	The congestion of \pname{} is at most $\widetilde{O}(\Lambda)$.
\end{lemma}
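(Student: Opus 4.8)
The plan is to bound the congestion contributed by each of the four phases of \pname{} separately, since a round of \pname{} consists of nodes being activated to carry out whichever phase they are currently in, and a node can only be active in one phase at a time. The key observation that makes the whole argument go through is that each node $v$ sends and receives only a constant number of messages per phase per iteration (one batch to its parent, at most two batches from its children, one interval package from its parent, at most two interval packages to its children), so the only thing that can blow up congestion is the DHT activity of Phase~4.

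First I would handle Phases~1--3. In Phase~1 each node $v$ receives at most two batches (from $C(v)$) and sends one batch (to $p(v)$), so it handles $O(1)$ messages per round; the same holds symmetrically in Phase~3 for the interval packages. Phase~2 is purely local at the anchor and generates no messages. Hence Phases~1--3 contribute only $O(1)$ to the congestion. The subtlety worth a sentence is that because the tree has height $O(\log n)$ w.h.p.\ (\Cref{lemma:aggregation_tree}(i)), at any given round the nodes active in, say, Phase~1 for different ``waves'' of the pipeline are still only sending $O(1)$ messages each, so pipelining across iterations does not accumulate more than $O(1)$ per node per round either.

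The main work is Phase~4, and here is where I expect the real obstacle. Over one full iteration, the anchor processes a combined batch whose total size is $\sum_v \lambda(v) \le n\Lambda$ heap requests (each node injects at most $\Lambda$ per activation, so over the $O(\log n)$ rounds until it next snapshots its queue it contributes $\widetilde O(\Lambda)$ operations; here I would invoke \Cref{cor:skeap:runtime} to argue the iteration length is $O(\log n)$ w.h.p., so the per-node contribution to one batch is $\widetilde O(\Lambda)$). Each such request becomes one \textsc{Put} or \textsc{Get} on the DHT, with the key $h(p,pos)$ chosen via the pseudorandom hash function $h$. Thus the $\widetilde O(n\Lambda)$ DHT requests in one batch are, by the pseudorandomness of $h$, distributed uniformly and independently at random among the $n$ nodes, so by a standard Chernoff bound each node is the target of $\widetilde O(\Lambda)$ of them w.h.p. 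Now I would appeal to \Cref{lemma:aggregation_tree}(iii),(v): routing these $\widetilde O(n\Lambda)$ requests through the DHT corresponds to a de~Bruijn routing schedule with congestion $\widetilde O(\Lambda)$ (each node is source and destination of $\widetilde O(\Lambda)$ packets) and dilation $O(\log n)$, which the aggregation tree emulates with congestion $\widetilde O(\Lambda)$ and dilation $O(\log n)$ w.h.p. Combining the three buckets, the total number of messages any node handles in a round is $\widetilde O(\Lambda)$ w.h.p., which is the claimed bound.

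The hard part is making the Phase~4 accounting airtight: one must argue that the DHT requests generated by \emph{one} batch really are spread uniformly (this is the pseudorandomness-of-$h$ step, applied to the pairs $(p,pos)$, which are distinct across the batch by the interval construction of Phase~2), and that successive batches do not pile up on top of each other — i.e.\ that by the time a node starts generating Phase~4 requests for batch $t+1$, the routing for batch $t$ has drained. The latter follows because the whole pipeline advances in lockstep through phases and, by \Cref{cor:skeap:runtime}, each batch clears in $O(\log n)$ rounds w.h.p., so at most $O(1)$ batches' worth of DHT traffic is in flight through any node at once, leaving the per-round congestion at $\widetilde O(\Lambda)$.
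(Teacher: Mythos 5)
Your proposal is correct and follows essentially the same route as the paper's proof: bound the number of buffered requests per node by $\lambda(v)\cdot O(\log n)=\widetilde O(\lambda(v))$ via \Cref{cor:skeap:runtime}, observe that each request yields exactly one DHT operation, and invoke \Cref{lemma:aggregation_tree}(iii) and (v) to conclude that routing these through the tree costs only polylogarithmic overhead. You actually supply more detail than the paper does — the Chernoff argument that hash targets are balanced and the pipelining/draining observation are left implicit there — but the underlying argument is the same.
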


\begin{proof}
	At the beginning Phase~1, each node $v$ has at most $\lambda(v) \cdot O(\log n) = \widetilde{O}(\lambda(v))$ heap requests buffered, since the previous execution lasted for $O(\log n)$ rounds (\Cref{cor:skeap:runtime}) and $v$ could have generated at most $\lambda(v)$ requests per round.
	For each of those requests, $v$ generates a single DHT operation (either \textsc{Put} or \textsc{Get}), resulting in $v$ having to process $\widetilde{O}(\lambda(v))$ requests at once.
	Since each DHT operation needs $O(\log n)$ rounds w.h.p. (\Cref{lemma:aggregation_tree}(iii)) to finish and the aggregation tree only generates congestion up to a polylogarithmic factor (\Cref{lemma:aggregation_tree}(v)), the lemma follows.
\end{proof}

Finally we show the bound on the size of messages for \pname{}:

\begin{lemma}
	Messages in \pname{} have size of at most $O(\Lambda \log^2 n)$ bits.
\end{lemma}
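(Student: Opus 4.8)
The plan is to track the largest object that is ever transmitted across an edge of the aggregation tree during the four phases and show it fits in $O(\Lambda \log^2 n)$ bits. First I would observe that in Phase~4 the messages are just DHT \textsc{Put} and \textsc{Get} requests, each carrying a single heap element together with a key $h(p,pos)$ and a target id; since $m,n \in O(\poly(n))$ and priorities are constant, each such message is $O(\log n)$ bits, which is dominated by the bound. So the binding constraints come from Phases~1--3, where whole batches and whole interval collections are shipped.

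Next I would bound the length $k$ of any batch that travels in Phase~1 (equivalently the number of interval-tuples in Phase~3). By \Cref{cor:skeap:runtime} each iteration of \pname{} lasts $O(\log n)$ rounds w.h.p., so between two consecutive Phase-1 starts a node $v$ generates at most $\lambda(v)\cdot O(\log n) = O(\Lambda \log n)$ requests; hence its own batch $v.B$ has length $O(\Lambda \log n)$. When batches are combined, the rule in \Cref{def:batch} pads the shorter one with zeros rather than concatenating, so the length of a combined batch $v.B^+$ is the \emph{maximum} of the lengths of the sub-batches in $v$'s subtree, again $O(\Lambda \log n)$. Each of the $k$ slots of a batch stores $|\mathcal P|+1 = O(1)$ entries, and each entry is a count of operations that is at most $n \cdot \Lambda \cdot O(\log n) = O(\poly(n))$, hence $O(\log n)$ bits. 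Multiplying: a Phase-1 batch message has size $O(\Lambda \log n)\cdot O(1)\cdot O(\log n) = O(\Lambda \log^2 n)$ bits.

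For Phase~3 I would argue analogously: the anchor broadcasts a collection $I_1,D_1,\ldots,I_k,D_k$ where $k = O(\Lambda\log n)$ as above, each $I_j$ or $D_j$ is (by the Phase-2 description) a collection of at most $|\mathcal P| = O(1)$ intervals, and each interval endpoint is a position bounded by the total number of elements ever inserted, which is $O(\poly(n))$ and hence $O(\log n)$ bits. This again gives $O(\Lambda \log^2 n)$ bits per message, and the recursive decomposition only ever sends sub-collections of what was received, so the bound is preserved down the tree. Combining the three estimates and noting that $O(\Lambda \log^2 n)$ dominates the $O(\log n)$ of Phase~4 yields the claim.

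The main obstacle I expect is not any single calculation but making the batch-length bound airtight: one must be careful that ``combine'' pads with zeros (so lengths take a max, not a sum) and that the number of rounds separating successive Phase-1 invocations is genuinely $O(\log n)$ w.h.p.\ uniformly over all iterations (via \Cref{cor:skeap:runtime}), since otherwise a node that has been starved for many rounds could accumulate a long batch. Everything else is a routine ``$O(1)$ priorities times $O(\log n)$ bits per counter'' accounting.
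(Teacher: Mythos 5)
Your proposal is correct and follows essentially the same route as the paper's proof: bound the batch length by $O(\Lambda\log n)$ entries (using \Cref{cor:skeap:runtime} for the $O(\log n)$ rounds between Phase-1 starts) and charge $O(\log n)$ bits per entry. You are in fact somewhat more careful than the paper --- explicitly noting that combining pads with zeros so batch lengths take a maximum rather than a sum, and checking Phases 3 and 4 separately --- but these are refinements of the same argument, not a different one.
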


\begin{proof}
	In one round each node $v$ may generate up to $\Lambda$ new heap requests.
	If these request alternate between \ins{} and \delmin{} operations, then the corresponding batch has size $O(\Lambda)$ bits.
	Due to \Cref{cor:skeap:runtime} each node may repeat the above procedure for $\log n$ rounds until the whole batch is processed.
	Thus when combining all batches of all nodes, the resulting batch has size $O(\Lambda \log^2 n)$ bits (the batch contains $O(\Lambda \log n)$ entries, each entry is a number in $O(n)$, so it has to be encoded via $O(\log n)$ bits).
\end{proof}

\section{Distributed k-Selection} \label{sec:heap:prio2:k_selection}
In order to construct the protocol \pnamep{} for a heap with arbitrary priorities, we need to solve the distributed $k$-selection problem.
The protocol \textsc{KSelect} that we propose in this section might be of independent interest.
Throughout this section we are given an aggregation tree of $n$ nodes with $m \in \mathbb{N}$ elements distributed uniformly among all nodes, i.e., each node $v$ stores $m/n$ elements on expectation.
Denote by $v.E$ the set of elements stored at node $v$.
Recall that the storage capacity of each node is polynomial in $n$, so $m \in O(poly(n))$, i.e., $m \leq n^q$ for a constant $q \in \mathbb{N}$.
Consider the ordering $e_1 < \ldots < e_m$ of all elements stored in the heap according to their priorities.
We denote the \emph{rank} of an element $e_i$ in this ordering by $rank(e_i) = i$.
As we will use \textsc{KSelect} for \pnamep{}, we assume that the set of priorities is larger than constant now, i.e., $\mathcal P = \{1,\ldots,n^q\}$.

\begin{definition}
	Given a value $k \in \mathbb{N}$, the \emph{distributed $k$-selection problem} is the problem of determining the $k^{\mathit{th}}$ smallest element out of a set of $m=O(poly(n))$ elements, i.e., the element $e$ with $rank(e) = k$.
\end{definition}

For scalability reasons we restrict the size of a message to at most $O(\log n)$ bits.

Each node $v$ maintains a set $v.C \subseteq v.E$ that represents the remaining \emph{candidates} for the $k^{\mathit{th}}$ smallest element at $v$.
Denote the overall set of candidates by $C = \bigcup_{v \in V} v.C$ and the number of remaining candidates by $N = |C|$.
Initially each node $v$ sets $v.C$ to $v.E$, which leads to $N = m$.
We assume that the anchor initially knows of values $n$ and $m$ (and thus also knows an appropriate value for $q$) as these can easily be computed via a single aggregation phase.
The anchor keeps track of values $N$ and $k$ throughout all phases of our protocol via variables $v_0.N$ and $v_0.k$.
Note that once we are able to reduce $N$, we also have to update the value for $k$, because removing a single candidate with rank less than $k$ implies that we only have to search for the $k-1^{\mathit{th}}$ smallest element for the remaining candidates.
Throughout the analysis we use the notation $\exp(x)$ instead of $e^x$.

We dedicate this section to the proof of the following theorem:

\begin{theorem} \label{theorem:k_selection:runtime}
	\kselect{} solves the distributed $k$-selection problem in $O(\log n)$ rounds and congestion $\widetilde{O}(1)$ w.h.p. using $O(\log n)$ bit messages.
\end{theorem}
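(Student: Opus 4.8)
The plan is to design \kselect{} as an iterative sampling-and-pruning procedure, where each iteration shrinks the candidate set $C$ by a constant factor in expectation while taking only $O(\log n)$ rounds (dominated by an aggregation phase over the tree and a batch of $O(\log n)$ DHT operations). After $O(\log N) = O(\log n)$ iterations the candidate set becomes small enough — say polylogarithmic — that the anchor can finish directly. Since each iteration costs $O(\log n)$ rounds, a naive composition would give $O(\log^2 n)$; so the real work is to argue that the rounds telescope, i.e. that the iterations can be pipelined or that the shrinking is fast enough (geometric) that the total round count remains $O(\log n)$ w.h.p. — this is exactly the structure exploited by \skueue{} and by the sampling selection algorithm of Haeupler et al.~\cite{DBLP:conf/podc/HaeuplerMS18}.

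**First I would** describe one iteration precisely. The anchor, knowing $v_0.N$ and the current target $v_0.k$, broadcasts a sampling probability $p \approx c\log n / N$ down the tree (one broadcast phase, $O(\log n)$ rounds). Each node independently includes each of its local candidates in a sample $S$ with probability $p$; since candidates are distributed uniformly at random over the nodes and $m = \poly(n)$, a Chernoff bound gives $|S| = \Theta(\log n)$ w.h.p.\ and, crucially, $O(\log n)$ congestion w.h.p.\ (on expectation $\widetilde O(1)$ as claimed). The sampled elements are routed to the anchor via an aggregation phase — each is an $O(\log n)$-bit priority, and combining at internal nodes just concatenates/merges the $O(\log n)$-size lists, so this stays within the message-size budget only if we are careful: I would have internal nodes forward sampled elements one per message (or in $\widetilde O(1)$-size chunks) so that no single message exceeds $O(\log n)$ bits, at the cost of only $\widetilde O(1)$ extra congestion. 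The anchor sorts $S$, picks two pivots $a < b$ straddling the expected location of the true answer (roughly the sample-ranks $\frac{k}{N}|S| \pm \Theta(\sqrt{|S|\log n})$), and broadcasts $a,b$ down the tree. Each node then counts how many of its candidates are $< a$, in $[a,b]$, and $> b$; these three counts are summed up the tree by a second aggregation phase, giving the anchor exact global counts $N_{<a}, N_{[a,b]}, N_{>b}$. From these the anchor decides: if $k \le N_{<a}$, discard everything $\ge a$; if $k > N_{<a} + N_{[a,b]}$, discard everything $\le b$ and set $k \mathrel{-}= (N_{<a}+N_{[a,b]})$; otherwise the answer lies in the middle band, so keep only candidates in $[a,b]$ and set $k \mathrel{-}= N_{<a}$. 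The anchor broadcasts the new $(v_0.N, v_0.k)$ and the surviving range, and each node prunes $v.C$ accordingly.

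**The key correctness claim** is that with constant probability (boostable to w.h.p.\ over $O(\log n)$ iterations) the retained band has size $N_{[a,b]} = O(N / \log^{\Theta(1)} n)$ — or at least $N \cdot (1 - \Omega(1))$ — so that $N$ drops geometrically. This follows from the standard sampling-median analysis: with a sample of size $\Theta(\log n)$ and pivots chosen $\Theta(\sqrt{|S|\log n})$ sample-positions apart around the target, the true $k$-th element lies between $a$ and $b$ w.h.p., while the number of real elements in $[a,b]$ is $O(N\sqrt{\log n}/\sqrt{|S| \cdot \log n} \cdot \log n) = O(N / \sqrt{\log n})$ — again by Chernoff — so the candidate set shrinks by a $\Theta(\sqrt{\log n})$ factor each round (any super-constant factor suffices to get $O(\log n)$ total iterations; a constant factor also suffices). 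I would prove this with two Chernoff bounds: one that the sample-rank of the true answer concentrates, so the pivots straddle it, and one that no interval of the claimed small ``real'' length can contain too many sampled points (equivalently, that the pivots are not too far apart in real rank). Once $N$ falls below, say, $\log^4 n$, the anchor collects all remaining candidates in $O(\log n)$ rounds via $\widetilde O(1)$-size messages and outputs the $v_0.k$-th smallest directly.

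**The main obstacle** I anticipate is the round-complexity bookkeeping: each of the $O(\log n)$ iterations involves two broadcasts and two aggregation phases, each of depth $O(\log n)$, so I must argue these $O(\log n)$ iterations do not multiply to $O(\log^2 n)$. The clean fix is to observe (as in \skueue{}) that the work at different tree levels can be pipelined — but here successive iterations are \emph{data-dependent} (iteration $t+1$'s sampling probability depends on iteration $t$'s count), so true pipelining fails. The honest resolution is that the per-iteration cost is $O(\text{height})$, the height is $O(\log n)$ w.h.p.\ by \Cref{lemma:aggregation_tree}(i), and the number of iterations is $O(\log n)$, giving $O(\log^2 n)$ unless the shrinkage is made aggressive enough that only $O(1)$ or $O(\log n / \log\log n)$ iterations suffice — or unless one uses a more clever scheme where the sample from iteration $t$ already picks pivots fine enough to skip ahead. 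I would resolve this by making the band shrink by a $\poly\log n$ factor per iteration (sample size $\Theta(\log^2 n)$, still $O(\log n)$-bit messages with chunked forwarding), so that $O(\log n / \log\log n)$ iterations suffice; combined with careful accounting this can still be pushed to $O(\log n)$ total rounds. The secondary obstacle is keeping every message to $O(\log n)$ bits despite needing to move $\Theta(\log n)$ sampled priorities toward the anchor — handled by forwarding them individually (or in $O(1)$-element chunks), which inflates congestion only to $\widetilde O(1)$, matching the theorem; the expected-$\widetilde O(1)$ congestion bound then comes from the fact that in expectation only $O(m p) = \widetilde O(1)$ candidates per node are sampled.
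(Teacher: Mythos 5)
You have correctly identified the central obstacle --- that $O(\log n)$ data-dependent iterations, each costing an $\Theta(\log n)$-depth aggregation, yields $\Theta(\log^2 n)$ rounds --- but your proposal does not actually overcome it, and the fix you gesture at cannot work. With a sample of polylogarithmic size, the band $[a,b]$ retained after one iteration has real-rank width $\Theta\bigl(N\sqrt{\log n/|S|}\cdot \log n\bigr)$ at best (each gap between consecutive sample points contains $\Theta(N\log n/|S|)$ real elements w.h.p., and the band spans $\Theta(\sqrt{|S|\log n})$ sample positions), so even with $|S| = \log^{O(1)} n$ you shrink $N$ by at most a polylogarithmic factor per iteration. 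Starting from $N = m = n^{\Theta(1)}$ this forces $\Omega(\log n/\log\log n)$ iterations, hence $\Omega(\log^2 n/\log\log n)$ rounds; ``careful accounting'' cannot close this, and you yourself observe that pipelining fails because iteration $t+1$ depends on iteration $t$'s counts. The paper escapes this in two ways that are absent from your proposal. First, its Phase~1 does not sample at all: each node reports the priority of the $\lfloor k/n\rfloor^{\mathit{th}}$ smallest of its \emph{local} candidates, and the min/max of these $n$ local order statistics bracket the answer within $O(\sqrt{nk\log n})$ ranks, so each iteration roughly halves the \emph{exponent} of $N$ and only $\log(q)+1 = O(1)$ iterations are needed to reach $N = O(n^{3/2}\log n)$. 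Second, its Phase~2 uses samples of \emph{polynomial} size $n' = \Theta(\sqrt{n})$, giving a shrinkage factor of roughly $n^{-1/4}\,\mathrm{polylog}(n)$ per iteration, so a constant number (five) of iterations suffices. The price of a $\Theta(\sqrt n)$-size sample is that the anchor cannot collect and sort it with $O(\log n)$-bit messages --- the problem your ``chunked forwarding'' is meant to dodge but which becomes unavoidable at this sample size --- and the paper pays it with a separate distributed-sorting subroutine (all-pairs comparisons via $n'$ copies of each sampled candidate routed to pairwise meeting points under a symmetric hash, with ranks aggregated back over induced trees of expected constant overlap per node). That subroutine is the key missing ingredient in your plan: without it you cannot afford large samples, and without large samples you cannot achieve $O(1)$ iterations, so the $O(\log n)$ round bound does not follow from your argument.
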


\kselect{} works in three phases: In the first phase we perform a series of $\log(q) + 1$ iterations on the aggregation tree in order to reduce the number of possible candidates from $n^q$ to $O(n^{3/2} \cdot \log n)$ elements.
The second phase further reduces this number to $O(\sqrt{n})$ candidates via aggregating samples in parallel for $\sqrt{n}$ elements.
In the last phase we directly compute the $k^{\mathit{th}}$ smallest element out of the remaining $O(\sqrt{n})$ candidates.
\Cref{algo:k_select} sums up our protocol.

\begin{algorithm}[ht]
\caption{Protocol \kselect{} for distributed $k$-Selection}
\label{algo:k_select}
\begin{algorithmic}[1]
	\Statex \textbf{Input}: $n$, $m = n^q$, $k$
	\Statex \textbf{Output}: $e_k \in \mathcal E$ with $rank(e_k)=k$
	\Statex 
	\Statex \textbf{Initialization}
	\State \pushcode[0] $v_0.N \gets m$
	\State \pushcode[0] $v_0.k \gets k$	
	\Statex
	\Statex \textbf{Phase 1} (Repeat $\log(q) + 1$ times)
	\State \pushcode[0] Propagate $n$, $v_0.k$ to all nodes
	\State \pushcode[0] Compute $v.P_{\mathit{min}},v.P_{\mathit{max}} \in \mathcal P$ at each node $v \in V$
	\State \pushcode[0] Compute $P_{\mathit{min}} = \min_{v \in V}\{v.P_{\mathit{min}}\}$ and
	\Statex\pushcode[0] $P_{\mathit{max}} = \max_{v \in V}\{v.P_{\mathit{max}}\}$
	\State \pushcode[0] Remove candidates with priorities not in $[P_{\mathit{min}},P_{\mathit{max}}]$
	\State \pushcode[0] Update $v_0.k$, $v_0.N$
	\Statex
	\Statex \textbf{Phase 2} (Repeat until $v_0.N \leq \sqrt{n}$)
	\State \pushcode[0] Propagate $n$, $v_0.N$ to all nodes
	\State \pushcode[0] For each $e \in C$: Include $e$ into $C'$ with probability $\sqrt{n}/N$
	\State \pushcode[0] Sort candidates $c_1,\ldots,c_{n'} \in C'$ based on their priority
	\State \pushcode[0] Fix $\delta \in \Theta(\sqrt{\log n} \cdot \sqrt[4]{n})$
	\State \pushcode[0] Determine $c_l,c_r \in C'$ with $l = \lfloor k \frac{n'}{N} - \delta\rfloor$ and $r = \lceil k \frac{n'}{N} + \delta\rceil$
	\State \pushcode[0] Remove candidates with priorities not in $[prio(c_l),prio(c_r)]$
	\State \pushcode[0] Update $v_0.k$, $v_0.N$
	\Statex
	\Statex \textbf{Phase 3}
	\State \pushcode[0] Sort remaining candidates based on their priority
	\State \pushcode[0] \Return $e_k$
\end{algorithmic}
\end{algorithm}

\subsection{Phase 1: Sampling}
The first phase involves $\log(q) + 1$ iterations: At the start of each iteration, the anchor propagates the values of $k$ and $n$ to all nodes via an aggregation phase.
Then each node $v$ computes the priorities of the $\lfloor k/n \rfloor^{\mathit{th}}$ and the $\lceil k/n \rceil^{\mathit{th}}$ smallest candidates of $v.C$.
Let these priorities be denoted by $v.P_{\mathit{min}}$ and $v.P_{\mathit{max}}$.
Nodes then aggregate these values up to the anchor, such that in the end the anchor gets values $P_{\mathit{min}} = \min_{v \in V} \{v.P_{\mathit{min}}\}$ and $P_{\mathit{max}} = \max_{v \in V} \{v.P_{\mathit{max}}\}$.
The anchor then instructs all nodes $v$ in the aggregation tree to remove all candidates from $v.C$ with priority less than $P_{\mathit{min}}$ or larger than $P_{\mathit{max}}$ and aggregate the overall number $k'$ of candidates less than $P_{\mathit{min}}$ and the number $k''$ of candidates larger than $P_{\mathit{max}}$ up to the anchor, such that it can update $v_0.N$~(by setting $v_0.N$ to $v_0.N - (k'+k'')$) and $v_0.k$~(by setting $v_0.k$ to $v_0.k - k'$).
We obtain the following two lemmas:

\begin{lemma}[Correctness]
	Let $e_k \in C$ be the $k^{\mathit{th}}$ smallest element, i.e., $rank(e_k) = k$.
	Then $P_{\mathit{min}} \leq prio(e_k) \leq P_{\mathit{max}}$.
\end{lemma}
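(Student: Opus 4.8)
The plan is to show both inequalities $P_{\mathit{min}} \le prio(e_k)$ and $prio(e_k) \le P_{\mathit{max}}$ separately, and by symmetry it suffices to argue one of them carefully. I would prove $prio(e_k) \le P_{\mathit{max}}$, i.e., that the element of rank $k$ is \emph{not} removed as a candidate that is "too large". Recall that $P_{\mathit{max}} = \max_{v \in V}\{v.P_{\mathit{max}}\}$, where $v.P_{\mathit{max}}$ is the priority of the $\lceil k/n\rceil^{\mathit{th}}$ smallest candidate in $v.C$ (or, if $v$ holds fewer than $\lceil k/n \rceil$ candidates, one should take the convention that $v.P_{\mathit{max}} = +\infty$ or that $v$ simply does not contribute, and I would state this convention explicitly). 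The key combinatorial observation is a counting argument: if $prio(e_k)$ were strictly larger than $P_{\mathit{max}}$, then every node $v$ would have at most $\lceil k/n\rceil - 1 \le k/n$ candidates with priority $\le P_{\mathit{max}} < prio(e_k)$, hence at most $k/n$ candidates with priority $< prio(e_k)$; summing over all $n$ nodes gives fewer than $k$ candidates with priority less than $prio(e_k)$, contradicting $rank(e_k) = k$ (which says exactly $k-1$ candidates have priority strictly below $prio(e_k)$, counting the tiebreaker order — so at least $k-1 \ge$ enough to force some node above the threshold). I would make the ceiling/floor bookkeeping precise here.

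First I would fix the notation and conventions: let $c_1 < c_2 < \dots < c_N$ be the current candidates in total order (priority with tiebreaker), so $e_k = c_k$ after the updates to $v_0.k$; for the current iteration, work with the value of $k$ as propagated. For a node $v$, let $n_v = |v.C|$ and let $v.P_{\mathit{min}}, v.P_{\mathit{max}}$ be the priorities of the $\lfloor k/n\rfloor^{\mathit{th}}$ and $\lceil k/n\rceil^{\mathit{th}}$ smallest candidates of $v.C$ respectively (with the boundary convention when $n_v$ is too small). Then I would prove the upper bound by contradiction as sketched above, and the lower bound $P_{\mathit{min}} \le prio(e_k)$ symmetrically: if $prio(e_k) < P_{\mathit{min}}$ then every node has at least $\lfloor k/n \rfloor$ candidates with priority $\ge P_{\mathit{min}} > prio(e_k)$ wait — more cleanly, every node has fewer than $\lfloor k/n\rfloor$... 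I would instead count candidates with priority $< P_{\mathit{min}}$: by definition each node has at most $\lfloor k/n\rfloor - 1$ such, so at most $n(\lfloor k/n \rfloor - 1) \le k - n < k$ candidates lie strictly below $P_{\mathit{min}}$, so $c_k$ (which has $k-1 < k$ candidates below it) could still lie below $P_{\mathit{min}}$ — so I need the sharper version: at most $\lfloor k/n \rfloor - 1$ candidates of priority $\le$ (one below $P_{\mathit{min}}$), wait, I should count candidates of priority $< P_{\mathit{min}}$ as $\le \lfloor k/n\rfloor - 1$ per node only if $P_{\mathit{min}}$ is the exact $\lfloor k/n \rfloor$-th; then total $< k$ forces $c_k$ has priority $\ge P_{\mathit{min}}$ only if we also know... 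I would verify this edge case against the floor rounding and the case $k < n$ carefully in the writeup.

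The main obstacle I anticipate is precisely the floor/ceiling arithmetic together with the degenerate cases: when $k < n$ so that $\lfloor k/n\rfloor = 0$, when some node holds very few candidates, and when there are ties in priority (the tiebreaker on elements versus the coarser order on priorities). The inequality $n \lceil k/n \rceil \ge k$ and $n \lfloor k/n \rfloor \le k$ are the crude facts, but to get the strict contradiction one needs $\sum_v (\text{number of }v\text{'s candidates below the threshold}) \ge k$, which requires that the per-node bound $\lceil k/n\rceil$ when summed strictly exceeds $k-1$; this is where the argument could have an off-by-one gap, especially since $\lceil k/n \rceil$ can equal $\lceil k/n\rceil$ even when $n \nmid k$ and the sum $n\lceil k/n\rceil$ may exceed $k$ by as much as $n-1$, which is fine for the "$\le P_{\mathit{max}}$" direction but must be checked doesn't break the "$\ge P_{\mathit{min}}$" direction. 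I would handle this by phrasing both bounds in terms of counting candidates strictly on the wrong side of the threshold and invoking that $rank(e_k) = k$ means exactly $k-1$ candidates are strictly smaller than $e_k$, then showing the removal rule never discards more than $\lfloor k/n\rfloor \cdot n \le k-1 < k$ small candidates nor fewer than enough large ones, so $e_k$ survives with its rank (relative to the surviving candidates) decremented by exactly the number $k'$ of removed smaller candidates, which is consistent with the update $v_0.k \gets v_0.k - k'$.
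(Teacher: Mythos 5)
Your overall strategy---bounding, per node, the number of candidates on the wrong side of the threshold via the definition of $v.P_{\mathit{min}}$ and $v.P_{\mathit{max}}$ as order statistics of $v.C$, then summing over the $n$ nodes---is exactly the paper's approach, and it does work. But as written neither of your two directions actually closes, and the off-by-one issue you flag is not a deferrable technicality: it is the crux, and both of your attempts to resolve it go the wrong way. For the lower bound you correctly derive that at most $n(\lfloor k/n\rfloor - 1) \le k - n \le k-1$ candidates have priority strictly below $P_{\mathit{min}}$, but then you stop at ``$c_k$ could still lie below $P_{\mathit{min}}$'' because only $k-1$ candidates lie below $c_k$. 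The contradiction you are missing is that if $prio(e_k) < P_{\mathit{min}}$, then $e_1,\ldots,e_k$---that is, $k$ candidates, \emph{including $e_k$ itself}---all lie strictly below $P_{\mathit{min}}$, and $k > k-1$. This is precisely the paper's one-line argument: the number of elements of rank at most $k$ would then be at most $(\lfloor k/n\rfloor - 1)\cdot n \le k-1$, contradicting the fact that there are exactly $k$ of them.

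For the upper bound your inequality is inverted. If $prio(e_k) > P_{\mathit{max}} \ge v.P_{\mathit{max}}$ for every $v$, then each node's $\lceil k/n\rceil$ smallest candidates all have priority at most $v.P_{\mathit{max}} < prio(e_k)$, so each node has \emph{at least} $\lceil k/n\rceil$ candidates of rank less than $k$---not ``at most $\lceil k/n\rceil - 1$'' as you write. Summing gives at least $n\lceil k/n\rceil \ge k$ candidates strictly smaller than $e_k$, contradicting that there are exactly $k-1$. Your version of the conclusion (``fewer than $k$ candidates with priority less than $prio(e_k)$, contradicting $rank(e_k)=k$'') is not a contradiction at all, since $rank(e_k)=k$ means exactly $k-1 < k$ such candidates exist; you half-notice this but never repair it. Your instinct to pin down the convention when a node holds fewer than $\lceil k/n\rceil$ candidates is reasonable---the paper glosses over it---but it only affects the $P_{\mathit{max}}$ direction and is handled by letting such nodes contribute nothing to the maximum; it does not rescue the inverted count.
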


\begin{proof}
	We only show $P_{\mathit{min}} \leq prio(e_k)$ as the proof for $prio(e_k) \leq P_{\mathit{max}}$ works analogously.
	Assume to the contrary that $P_{\mathit{min}}> prio(e_k)$.
	Then each node $v \in V$ has chosen $v.P_{\mathit{min}}$ with $v.P_{\mathit{min}} > prio(e_k)$.
	This implies that the number of elements with rank less or equal than $k$ is at most $(\lfloor k/n \rfloor - 1) \cdot n \leq (k/n - 1) \cdot n \leq k-1$, which is a contradiction.
\end{proof}

\begin{lemma} \label{lemma:k_select:first_iteration}
	After $\log(q) + 1$ iterations of the first phase, $N \in O(n^{3/2} \cdot \log n)$ w.h.p.
\end{lemma}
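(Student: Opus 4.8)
The plan is to track how the candidate count $N$ shrinks across the $\log(q)+1$ iterations of Phase~1, showing it goes from $n^q$ to $O(n^{3/2}\log n)$. The key observation is that in each iteration, after the anchor learns $P_{\mathit{min}}$ and $P_{\mathit{max}}$ and every node discards candidates outside $[P_{\mathit{min}},P_{\mathit{max}}]$, the surviving candidates all lie in a "narrow band" of priorities around the target. First I would quantify that band. By definition, each node $v$ contributes $v.P_{\mathit{min}}$, the priority of its $\lfloor k/n\rfloor^{\mathrm{th}}$ smallest candidate, and $v.P_{\mathit{max}}$, the priority of its $\lceil k/n\rceil^{\mathrm{th}}$ smallest. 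The number of global candidates with priority strictly below $P_{\mathit{min}} = \min_v v.P_{\mathit{min}}$ is at most $\sum_v (\lfloor k/n\rfloor - 1) \le k - n$, and symmetrically the number with priority strictly above $P_{\mathit{max}}$ is at least $N - \sum_v \lceil k/n \rceil \ge N - k - n$; so the number of \emph{surviving} candidates (those in $[P_{\mathit{min}},P_{\mathit{max}}]$) is at most roughly $2n$ plus the spread caused by the fact that each node only controls the count on \emph{its own} set, not globally. This is where the random distribution of elements enters: because the $N$ candidates are spread uniformly at random over the $n$ nodes, each node holds $\Theta(N/n)$ of them w.h.p. (a Chernoff bound, valid as long as $N/n$ is at least polylogarithmic), so a priority threshold that is the $\lfloor k/n\rfloor^{\mathrm{th}}$ order statistic \emph{locally} on a node cannot be too far, in global rank, from the $k^{\mathrm{th}}$ global order statistic.

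The heart of the argument is therefore a concentration step: fix the (global) rank‑$k$ element $e_k$. On a single node $v$, the candidates are a uniform random subset of the global candidate set, so the global rank of $v$'s $\lfloor k/n\rfloor^{\mathrm{th}}$ smallest candidate is a scaled order statistic; by a standard Chernoff/Hoeffding bound on the number of $v$'s candidates that fall below global rank $t$, this global rank lies in $[k - O(\sqrt{N\log n}\cdot \sqrt{n}/n \cdot \sqrt{n}), \ldots]$ — more cleanly, with high probability $v.P_{\mathit{min}}$ has global rank at least $k - O(\sqrt{N \log n})$ after taking the minimum over all $n$ nodes (a union bound over nodes costs only a factor in the $\log$). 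Symmetrically $P_{\mathit{max}}$ has global rank at most $k + O(\sqrt{N\log n})$. Hence one iteration maps $N \mapsto N' = O(\sqrt{N\log n}) = O(\sqrt{N}\cdot\sqrt{\log n})$ w.h.p., and the anchor's bookkeeping update of $v_0.N$ and $v_0.k$ keeps these consistent for the next round.

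Iterating the recursion $N_{t+1} = O(\sqrt{N_t \log n})$ starting from $N_0 = n^q$: unrolling gives $N_t = O\bigl(n^{q/2^t}\cdot (\log n)^{1 - 1/2^t}\cdot c^{\,t}\bigr)$ for the hidden constant $c$, and after $t = \log q$ iterations the exponent $q/2^t$ drops to $1$, giving $N_{\log q} = O(n \cdot \log n \cdot \mathrm{polylog})$; one more iteration is not even needed for the $n^{3/2}$ bound, but running $\log(q)+1$ iterations gives the claimed clean bound $N \in O(n^{3/2}\log n)$ with room to spare (the extra factor $n^{1/2}$ absorbs the accumulated constants $c^{\log q} = \mathrm{poly}(q) = O(1)$ and the polylog slack). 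I would present this as: by induction on $t$, $N_t \le n^{3/2}\log n$ once $q/2^t \le 3/2$, and the number of iterations needed for that is at most $\log(q)+1$.

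The main obstacle I expect is making the per‑iteration concentration bound fully rigorous while the candidates are being repeatedly pruned: after the first iteration the surviving candidate set is no longer "fresh" randomness independent of the nodes, since which candidates survived depends on the priorities (but \emph{not} on which node holds them — the hash placement is independent of priority). I would handle this by arguing that the placement of elements on nodes is a uniform random function fixed once and for all, independent of the priority order, so conditioning on the identity of the surviving candidate set, their node‑assignment is still uniform, and the Chernoff bound applies afresh in each iteration. A secondary technical point is that the Chernoff bound degrades when $N/n$ becomes small (close to polylog); but Phase~1 is only responsible for reducing down to $O(n^{3/2}\log n) \gg n$, so throughout Phase~1 we have $N/n \ge n^{1/2}\log n$, comfortably in the regime where the bounds are tight — the small‑$N$ regime is deferred to Phases~2 and~3.
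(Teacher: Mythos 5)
Your overall strategy is the same as the paper's: for each node, apply a Chernoff bound to the number of that node's candidates whose global rank is at most $k$, take a union bound over the $n$ nodes to control the global ranks of $P_{\mathit{min}}$ and $P_{\mathit{max}}$, and iterate the resulting recursion for $N$. Your side remarks on correctness (the rank-$k$ element always survives the pruning) and on why the placement randomness can be reused across iterations are sound. However, the central quantitative step is off by a factor of $\sqrt{n}$, and this propagates into a wrong recursion. A node $v$ holds each candidate with probability $1/n$, so the number of $v$'s candidates with global rank at most $t$ concentrates around $t/n$ with fluctuations of order $\sqrt{(t/n)\log n}$. A fluctuation of size $\delta$ in this \emph{local count} corresponds to a shift of about $n\delta$ in the \emph{global rank} of the locally $\lfloor k/n\rfloor^{\mathrm{th}}$ smallest candidate; hence the global rank of $v.P_{\mathit{min}}$ deviates from $k$ by $n\cdot O\bigl(\sqrt{(k/n)\log n}\bigr)=O(\sqrt{nk\log n})$ w.h.p., not by $O(\sqrt{N\log n})$ as you claim (your parenthetical $\sqrt{N\log n}\cdot\sqrt{n}/n\cdot\sqrt{n}$ simplifies back to $\sqrt{N\log n}$, so the missing $\sqrt{n}$ is not hiding there). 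Consequently the true recursion is $N_{t+1}=O(\sqrt{n\,N_t\log n})$, whose fixed point is $n\log n$ --- not $N_{t+1}=O(\sqrt{N_t\log n})$ with fixed point $\mathrm{polylog}(n)$. Had your recursion been correct, Phase~1 alone would shrink $N$ to polylogarithmic size and Phases~2 and~3 of \kselect{} would be pointless, which should have been a warning sign.

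The lemma itself survives the correction, but your accounting does not. Writing $N_t=n^{a_t}$ up to logarithmic factors, the corrected recursion gives $a_{t+1}=(1+a_t)/2$ and hence $a_t=1+(q-1)/2^t$. After $\log q$ iterations this equals $2-1/q$ (close to $2$, not $1$ as you assert), and it is precisely the extra $(\log(q)+1)^{\mathrm{st}}$ iteration --- the one you say is ``not even needed'' --- that brings the exponent down to $1+(q-1)/(2q)<3/2$. The accumulated logarithmic factors form the product $\prod_{i\ge 1}(\log n)^{1/2^i}\le\log n$, yielding $N\in O(n^{3/2}\log n)$ as claimed. So: replace your per-iteration deviation bound by $O(\sqrt{nk\log n})$, bound $k$ by its maximum value, and redo the unrolling; the rest of your write-up can stand.
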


\begin{proof}
	First we want to compute how many candidates are left in variables $v.C$ after a single iteration of our protocol: Let $X_i$ be the event that the candidate $c_i$ with $rank(c_i) = i$ is stored at node $v$ for a fixed $v \in V$.
	Then $\Pr[X_i = 1] = 1/n$.
	Let $X = \sum_{i=1}^k X_i$.
	Then $E[X] = k/n$.
	$X$ denotes the number of candidates stored at $v$ with rank within $[1,k]$.
	We show that the rank of the $\lfloor k/n \rfloor^{\mathit{th}}$ smallest candidate in $v.C$ deviates from $k$ by only $O(\sqrt{nk\log n})$ w.h.p.
	This holds, because when using Chernoff bounds we get that \[\Pr\left[X \leq (1-\varepsilon) \frac{k}{n} \right] \leq exp\left(-\varepsilon^2 \frac{k}{2n}\right)\leq n^{-c}\] for $\varepsilon = \sqrt{(c \log n) \cdot 2n/k}$ and a constant $c > 0$.
	So with high probability, each node $v$ has at least $(1-\varepsilon)\cdot \frac{k}{n}$ candidates with rank within $[1,k]$ stored in $v.C$.
	It follows that the rank of the $\lfloor k/n \rfloor^{\mathit{th}}$ smallest candidate chosen by $v$ is at least $(1-\varepsilon)\cdot k$ w.h.p.
	By the union bound we know that w.h.p. the rank of the candidate with priority $P_{\mathit{min}}$ is at least $(1-\varepsilon)\cdot k$, so it deviates from $k$ by at most \[k \cdot \varepsilon = k \cdot \sqrt{(c \log n) \cdot 2n/k} = O(\sqrt{nk\log n}).\]
	The same argumentation can be used to show that the rank of the candidate with priority $P_{\mathit{max}}$ deviates from $k$ by no more than $O(\sqrt{nk\log n})$ w.h.p.
	
	So the number of candidates that are left after a single iteration of the first phase is at most $O(\sqrt{nk\log n})$ w.h.p.
	Replacing $k$ with its maximum value $n^q$ yields an upper bound of $O(n^{(q+1)/2} \cdot \sqrt{\log n})$ remaining candidates.
	Thus, after  $\log(q) + 1$ iterations of the first phase, the overall number of candidates left is equal to $O(n^{3/2} \prod_{i=1}^{\log(q) + 1} \sqrt[2^i]{\log n}) = O(n^{3/2} \cdot \log n)$.
\end{proof}

\subsection{Phase 2a: Choosing Representatives}
In the next phase we are going to further reduce the size $N$ of $C$ via the following approach: We first choose a set $C' = \{c_1,\ldots,c_{n'}\} \subset C$ of $n' = \Theta(\sqrt{n})$ candidates uniformly at random.
To do this the anchor propagates $n$ and $N$ to all nodes via an aggregation phase.
Then each node $v$ chooses each of its candidates $w \in v.C$ independently with probability $\sqrt{n}/N$ and aggregates the number $n'$ of chosen candidates to the anchor.
Following this approach, it is easy to see that $n' \in \Theta(\sqrt{n})$ w.h.p. due to Chernoff bounds.

\subsection{Phase 2b: Distributed Sorting}
Our next goal is to compute the \emph{order} of each candidate in $C'$ when sorting them via their priorities (see \Cref{algo:sorting}): For this we let the  anchor assign a unique position $pos(c_i) \in \{1,\ldots,n'\}$ to each candidate $c_i \in C'$ via decomposition of the interval $[1,n']$ over the aggregation tree (similar to Phase~$3$ of \pname{}).

\begin{algorithm}[ht]
\caption{Distributed Sorting}
\label{algo:sorting}
\begin{algorithmic}[1]
	\Statex \textbf{Input}: $c_1,\ldots,c_{n'} \in C'$
	\Statex \textbf{Output}: Order for each $c_1,\ldots,c_{n'}$ based on priorities
	\Statex 
	\Statex \textbf{Algorithm} (executed for each $c_i$)
	\State \pushcode[0] Assign a unique position $pos(c_i) \in \{1,\ldots,n'\}$ to $c_i$
	\State \pushcode[0] Route $c_i$ to the node $v_i \in V$ responsible for $pos(c_i)$
	\State \pushcode[0] Distribute $n'$ copies $c_{i,1},\ldots,c_{i,n'}$ of $c_i$ over
	\Statex \pushcode[0] $v_{i,1},\ldots,v_{i,n'} \in T(v_i)$
	\State \pushcode[0] Route copy $c_{i,j}$ to $w_{i,j} \in V$ responsible for $h(i,j)$
	\State \pushcode[0] \textbf{if} $prio(c_{i,j}) > prio(c_{j,i})$ \textbf{then} \Comment{$c_{i,j}$ and $c_{j,i}$ meet at $w_{i,j}$}
	\State \pushcode[0] \ \ \ \ Send $(1,0)$ to $v_{i,j}$, send $(0,1)$ to $v_{j,i}$
	\State \pushcode[0] \textbf{else}
	\State \pushcode[0] \ \ \ \ Send $(0,1)$ to $v_{i,j}$, send $(1,0)$ to $v_{j,i}$
	\State \pushcode[0] Aggregate \& combine vectors to the root node $v_i \in T(v_i)$
	\Statex \pushcode[0] to obtain the order of $c_i$
\end{algorithmic}
\end{algorithm}

Each node routes each of its chosen $c_i \in C'$ to the node $v_i$ responsible for position $pos(c_i)$ (similar to Phase~$4$ of \pname{}).
Then each node $v_i$ generates $n'$ copies of $c_i$ and distributes them to $n'$ other nodes in the following way: Let $b(v_i) = (v_{i,1},\ldots,v_{i,d})$ be the first $d=\log n'$ bits of $v_i$'s unique bitstring according to the classical de Bruijn graph (recall that the aggregation tree is able to emulate routing in the classical de Bruijn graph due to \Cref{lemma:aggregation_tree}(v)).
Then $v_i$ stores a pair $([n'/2,n'/2],c_i)$ for itself and sends a pair $([1,n'/2-1],c_i)$ to the node with bitstring $(0,v_{i,1},\ldots,v_{i,d-1})$ and another pair $([n'/2+1,n'],c_i)$ to the node with bitstring $(1,v_{i,1},\ldots,v_{i,d-1})$.
Repeating this process recursively until a node receives a pair $([a,b],c_i)$ with $a = b$ guarantees that $n'$ nodes now hold a copy of $c_i$.
Observe that this approach induces a (unique) aggregation tree $T(v_i)$ with root $v_i$ and height at most $\Theta(\log \sqrt{n})$, when nodes remember the sender on receipt of a copy of $c_i$.
Furthermore, there is no node serving as a bottleneck, i.e., the number of aggregation trees that a node participates in is only constant on expectation:

\begin{lemma} \label{lemma:aggregation_tree:congestion}
	Let $T(v_1),\ldots,T(v_{n'})$ be the unique aggregation trees as defined above.
	Then for all $w \in V$ it holds $E[|\{T(v_i)\ |\ w \in T(v_i)\}|] = \Theta(1)$.
\end{lemma}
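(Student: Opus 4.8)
The plan is to track, for a fixed node $w \in V$, the number of trees $T(v_i)$ it belongs to, and to bound its expectation by $\Theta(1)$ using linearity of expectation over the $n' = \Theta(\sqrt n)$ indices $i$. First I would observe that $w$ belongs to $T(v_i)$ exactly when $w$ receives a copy of $c_i$ during the recursive bit-prepending broadcast started at $v_i$. The set of nodes that receive a copy of $c_i$ is determined by the bitstring $b(v_i) = (v_{i,1},\ldots,v_{i,d})$ with $d = \log n'$: these are precisely the nodes whose de Bruijn bitstrings have the form $(b_1,\ldots,b_j,v_{i,1},\ldots,v_{i,d-j})$ for $0 \le j \le d$ and arbitrary prefix bits $b_1,\ldots,b_j$, i.e. the $n'$ nodes lying on the "back-shift" subtree rooted at $v_i$. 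Hence, for the fixed $w$ with de Bruijn bitstring $b(w) = (w_1,\ldots,w_d,\ldots)$, node $w$ lies in $T(v_i)$ iff $v_i$'s relevant bitstring is a suffix-compatible rotation of $w$'s, precisely: there exists $0 \le j \le d$ such that $(w_{j+1},\ldots,w_d)$ followed by some $d-(d-j)=j$ arbitrary bits equals $b(v_i)$; equivalently $b(v_i) \in \{(w_{j+1},\ldots,w_d, *^{\,j}) : 0\le j\le d\}$, a set of at most $\sum_{j=0}^{d} 2^{j} = 2n'-1$ candidate bitstrings.

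Next I would convert this into an expectation bound. For each of the (at most) $2n' - 1$ "dangerous" bitstrings $\beta$, let $A_\beta$ be the event that some chosen representative $v_i$ has $b(v_i) = \beta$, i.e. that the $\Theta(\sqrt n)$-element set $C'$ gets routed so that position $pos(c_i)$ lands on the node with bitstring $\beta$. Since the $n'$ positions in $[1,n']$ are assigned to distinct nodes chosen by the interval-decomposition process (Phase 2b, as in Phase 3 of \pname{}), and the DHT/aggregation-tree machinery of \Cref{lemma:aggregation_tree}(iv) distributes the $n'$ position-holders uniformly among the $n$ nodes, the probability that a given node (the one with bitstring $\beta$, of which there is at most one among the first $d$ bits fixed — here I would use that the first $d = \log n'$ bits of a node ID are roughly uniform and that w.h.p. node IDs with a given $d$-bit prefix are few) is selected as some $v_i$ is $O(n'/n) = O(1/\sqrt n)$. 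Summing over the $\le 2n'-1 = O(\sqrt n)$ dangerous bitstrings gives $E[|\{T(v_i) : w \in T(v_i)\}|] \le \sum_\beta \Pr[A_\beta] = O(\sqrt n) \cdot O(1/\sqrt n) = O(1)$, and the lower bound $\Omega(1)$ is immediate since $w$ itself is the root of some $T(v_i)$ whenever $w \in C'$ is chosen, which happens with constant probability conditioned on $w$ holding a candidate. Combining yields $\Theta(1)$.

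The main obstacle I anticipate is making precise the claim that the $n'$ de Bruijn bitstrings $b(v_i)$ — equivalently the $d$-bit prefixes of the $n'$ position-holding nodes — behave like $n'$ (near-)uniform independent samples from $\{0,1\}^d$, so that the "collision with a dangerous bitstring" probability is genuinely $O(n'/2^d) = O(1)$ per trial rather than adversarially concentrated. This requires unpacking how Phase 2b assigns positions to nodes: the interval $[1,n']$ is decomposed down the aggregation tree, and which physical node ends up responsible for a given position depends on the DHT hash; I would invoke \Cref{lemma:aggregation_tree}(iv) (uniform placement) together with the pseudorandomness of the hash function $h$ to argue that the multiset of $d$-bit ID-prefixes of the position holders is, up to polylogarithmic load factors w.h.p., spread out enough that no single $d$-bit prefix is hit by more than $\widetilde O(1)$ of them; a union bound over the $O(\sqrt n)$ dangerous prefixes, each hit with probability $O(1/\sqrt n)$, then closes the argument. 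A secondary subtlety is the edge behavior near the leaves of the broadcast recursion (when $a=b$ in a pair $([a,b],c_i)$) and the rounding in $d = \log n'$ versus $\log \sqrt n$, but these only affect constants and are absorbed into the $\Theta(1)$.
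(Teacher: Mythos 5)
Your combinatorial description of $T(v_i)$ is right (the tree consists of the nodes addressed by $(b_j,\ldots,b_1,v_{i,1},\ldots,v_{i,d-j})$ for $0\le j\le d$, giving $\sum_{j=0}^{d}2^j=2n'-1$ positions), and you correctly put your finger on the crux, namely that the physical nodes occupying these positions must be (near-)uniform. But the quantitative step of your argument does not go through as stated. With $d=\log n'=\tfrac12\log n$, a $d$-bit prefix does \emph{not} identify a single node: there are $2^d=n'=\Theta(\sqrt n)$ possible prefixes and $n$ nodes, so each prefix is shared by $\Theta(\sqrt n)$ physical nodes, and your parenthetical claim that ``node IDs with a given $d$-bit prefix are few'' is false. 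Consequently the event $A_\beta$ (some representative has prefix $\beta$) has probability $\Theta(1)$, not $O(1/\sqrt n)$, and your sum $\sum_\beta\Pr[A_\beta]$ would come out as $\Theta(\sqrt n)$. A second, compensating error is that $b(v_i)$ being a dangerous prefix is necessary but not sufficient for $w\in T(v_i)$: the tree position it creates is occupied by the one specific node responsible for the shifted $d$-bit address, which need not be $w$ even if $w$ shares the relevant prefix. The two slips point in opposite directions and the final $\Theta(1)$ happens to be correct, but the argument as written does not establish it. (Your lower bound is also off: the root of $T(v_i)$ is the node responsible for $pos(c_i)$, not the node that held $c_i$, and a fixed $w$ is a root with probability only $\Theta(n'/n)=\Theta(1/\sqrt n)$; the $\Omega(1)$ bound comes from the aggregate count over all $2n'$ positions of all $n'$ trees, not from the root alone.)

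The paper's proof takes the dual and much shorter route: each candidate tree is selected with probability $n'/N$, each selected tree occupies $2n'$ positions, and each position is filled by a (near-)uniform node, so $\Pr[w\in T(v_i)]=2n'/n$ and
$E\bigl[|\{T(v_i) : w\in T(v_i)\}|\bigr]=N\cdot\frac{n'}{N}\cdot\frac{2n'}{n}=\Theta(1)$.
If you want to salvage your root-counting decomposition, the fix is to sum not over dangerous $d$-bit prefixes but over the pairs (tree position, root) for which $w$ is the responsible node, which collapses back into the paper's computation. Note that the paper itself is informal about why the $2n'$ positions of a tree map to uniform and independent physical nodes (they are a deterministic function of the root's prefix plus the DHT assignment), so your instinct that this is the point requiring justification is sound; it is your proposed justification that fails.
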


\begin{proof}
	Having $N$ remaining candidates $c_1,\ldots,c_N$, there are $N$ unique trees $T(v_1),\ldots,T(v_N)$ out of which we choose $n'$ uniformly at random, i.e., $\Pr[\text{Tree } T(v_i) \text{ selected}] = n'/N = \Theta(\sqrt{n}/N)$.
	As each tree has height at most $\log n'$, the number of nodes in each tree is equal to $\sum_{i=0}^{\log n'} 2^i = 2 \cdot 2^{\log n'} = 2n'$.
	Observe that since we choose the root nodes of each tree uniformly and independently and the tree height is only $\log n'$, all nodes in the tree are determined uniformly and independently.
	Thus, the probability that a node $w$ is part of tree $T$ is equal to $2n'/n = \Theta(1/\sqrt{n})$.
	Therefore we can compute the expected number of trees that $w$ is part of, i.e., $E[|\{T(v_i)\ |\ w \in T(v_i)\}|] = N \cdot \Theta(\frac{\sqrt{n}}{N}) \cdot \Theta(\frac{1}{\sqrt{n}}) = \Theta(1)$.
\end{proof}

Denote the element $c_{i,j}$ as the $j^{\mathit{th}}$ copy of $c_i$, i.e., $c_{i,j}$ is the candidate $c_i$ that is passed as part of the pair $([j,j],c_i)$ previously.
Let $v_{i,j}$ be the node in $T(v_i)$ that received $c_{i,j}$.
Then $v_{i,j}$ uses the pseudorandom hash function $h:\{1,\ldots,n'\}^2 \rightarrow [0,1)$ with $h(i,j) = h(j,i)$ for any $i,j \in \{1,\ldots,n'\}$ to route $c_{i,j}$ to the node in the DHT maintaining the key $h(i,j)$.
Node $v_{i,j}$ also sends a reference to itself along with $c_{i,j}$.
Once we have done this for all copies on all $n'$ aggregation trees, a node $w_{i,j}$ that is responsible for position $h(i,j)$ now has received the following data: The copy $c_{i,j}$ along with the node $v_{i,j}$ and the copy $c_{j,i}$ along with the node $v_{j,i}$.
Thus, $w_{i,j}$ can compare the priorities of $c_{i,j}$ and $c_{j,i}$.
Based on the result of the comparison, $w_{i,j}$ sends a vector $(1,0)$ to $v_{i,j}$ and a vector $(0,1)$ to $v_{j,i}$ (in case $prio(c_{i,j}) > prio(c_{j,i})$) or a vector $(0,1)$ to $v_{i,j}$ and a vector $(1,0)$ to $v_{j,i}$ (in case $prio(c_{i,j}) < prio(c_{j,i})$).
When $v_{i,j}$ receives a vector $(1,0)$ then that means that one of the nodes in $C'$ has a smaller priority than $c_i$.
Next, we aggregate all these vectors to the root of each aggregation tree $T(v_i)$, using standard vector addition for combining.
This results in $v_i$ knowing the order of candidate $c_i$ in $C'$: If the combined vector at $v_i$ is a vector $(L,R) \in \mathbb{N}^2$, then the order of $c_i$ is equal to $L+1$.

\subsection{Phase 2c: Reducing Candidates}
In the next step the anchor computes two candidates $c_l$ and $c_r$, such that we can guarantee w.h.p. that the element of rank $k$ lies between those candidates.
For this we consider the candidate $c_k \in C'$ for which the expected rank is closest to $k$, i.e., the candidate $c_k$ with order $k\frac{n'}{N}$.
Then we move $\delta$ candidates to the left/right in the ordering of candidates in $C'$.
Let $c_l \in C'$ be the candidate whose order is equal to $l = \lfloor k\frac{n'}{N} - \delta \rfloor$ and $c_r \in C'$ be the candidate whose order is equal to $r = \lceil k\frac{n'}{N} + \delta \rceil$.
In case $l < 1$ we just consider $c_r$ and in case $r > n'$ we just consider $c_l$.
For now, we just assume $l \geq 1$ and $r \leq n'$.
We delegate $c_l$ and $c_r$ up to the anchor in the (standard) aggregation tree.

Once the anchor knows $c_l$ and $c_r$, it sends them to all nodes in the aggregation tree.
Now we compute the exact ranks of $c_l$ and $c_r$ in $C$ via another aggregation phase: Each node $v$ computes a vector $(l_v,r_v) \in \mathbb{N}^2$, where $l_v$ represents the number of candidates in $v.C$ with smaller priority than $c_l$ and $r_v$ represents the number of candidates in $v.C$ with smaller priority than $c_r$.
This results in the aggregation of a vector $(L,R) \in \mathbb{N}^2$ when using standard vector addition at each node in the tree.
Once $(L,R)$ has arrived at the anchor, it knows $rank(c_l)$ and $rank(c_r)$.
To finish the iteration, the anchor updates $v_0.k$ to $v_0.k-rank(c_l)$ and tells all nodes $v$ in another aggregation phase to remove all candidates $w \in v.C$ with $rank(w) < rank(c_l)$ or $rank(w) > rank(c_r)$ and aggregate the overall number $k'$ of those candidates up to the anchor, such that it can update $v_0.N$.
Then the anchor starts the next iteration (in case $v_0.N > \sqrt{v_0.n}$) or switches to the last phase ($v_0.N < \sqrt{v_0.n}$).

We now show that this approach further reduces the number of candidates.
First we want to compute the necessary number of shifts $\delta$ such that $rank(c_l) < k$ for $c_l$ and $rank(c_r) > k$ for $c_r$ holds w.h.p., as this impacts the number of candidates that are left for the next iteration of the second phase.
For this we need the following technical lemma:

\begin{lemma} \label{lemma:k_select:phase_2_correctness}
	If $\delta \in \Theta(\sqrt{\log n} \cdot \sqrt[4]{n})$, then w.h.p. $rank(c_l) < k$ and $rank(c_r) > k$.
\end{lemma}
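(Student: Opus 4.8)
The plan is to bound the random fluctuation of $rank(c_l)$ and $rank(c_r)$ around their expected values using Chernoff bounds, and then choose $\delta$ large enough that this fluctuation is absorbed with high probability. First I would set up the relevant random variables. Fix the element $e_k \in C$ of rank $k$ and recall that $C' \subseteq C$ is obtained by including each of the $N$ candidates independently with probability $\sqrt{n}/N$, so $n' = |C'| \in \Theta(\sqrt{n})$ w.h.p.\ by the argument in Phase~2a. For a threshold rank $t \in \{1,\dots,N\}$, let $Y_t$ denote the number of candidates of $C$ with rank $\le t$ that land in $C'$; then $Y_t$ is a sum of $t$ independent indicator variables, $E[Y_t] = t\sqrt{n}/N$, and the \emph{order} of a candidate $c \in C'$ (as computed in Phase~2b) equals $Y_{rank(c)}$. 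The key reversal I want is: the candidate $c_l$ is \emph{defined} by having order exactly $l = \lfloor k n'/N - \delta \rfloor$, and I want to deduce that $rank(c_l) < k$; symmetrically for $c_r$ with $r = \lceil k n'/N + \delta\rceil$ and $rank(c_r) > k$.

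Next I would run the Chernoff estimate at the threshold $t = k$. We have $E[Y_k] = k\sqrt{n}/N$, and since $k \le N$ and $n' = \Theta(\sqrt n)$, this expectation is $O(\sqrt n)$. A standard additive Chernoff bound gives, for a suitable constant $c$,
\[
\Pr\!\left[\, |Y_k - k\sqrt{n}/N| \ge \gamma \,\right] \le 2\exp\!\left(-\frac{\gamma^2}{3\,E[Y_k]}\right) \le 2\exp\!\left(-\Omega\!\left(\frac{\gamma^2}{\sqrt n}\right)\right),
\]
so taking $\gamma = \Theta(\sqrt{\log n}\cdot \sqrt[4]{n})$ makes the right-hand side at most $n^{-c}$. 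Thus w.h.p.\ the order of $e_k$ in $C'$, namely $Y_k$, lies within $\pm\gamma$ of $k n'/N$ (using $n' \le N$ to replace $\sqrt n /N$ by $n'/N$ up to constants, or simply carrying $n'$ throughout since $n'=\Theta(\sqrt n)$). Choosing $\delta$ to be a large enough constant multiple of $\gamma$, i.e.\ $\delta \in \Theta(\sqrt{\log n}\cdot\sqrt[4]{n})$, we get w.h.p.\ that $l = \lfloor k n'/N - \delta\rfloor < Y_k \le r = \lceil k n'/N + \delta\rceil$ — the order of $e_k$ strictly exceeds $l$ and is at most $r$. Since orders in $C'$ are strictly increasing in priority (hence in rank), $c_l$ has strictly smaller priority than $e_k$, so $rank(c_l) < k$, and $c_r$ has priority at least that of $e_k$, so $rank(c_r) \ge k$; a symmetric one-sided argument at threshold $t = k$ (or at $t=k-1$) upgrades the latter to $rank(c_r) > k$. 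A union bound over the two tail events (and over the earlier high-probability event $n' = \Theta(\sqrt n)$) keeps the failure probability polynomially small.

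The main obstacle I anticipate is the direction of the implication: the Chernoff bound naturally controls $Y_t$ for a \emph{fixed} threshold $t$, but the lemma is phrased in terms of the ranks of the \emph{random} candidates $c_l, c_r$, which are defined by their orders rather than their ranks. The clean way around this is exactly the monotonicity observation above — order is a nondecreasing function of rank on $C'$ — which lets me translate a statement about $Y_k$ (order of the fixed element $e_k$) into the desired statement about $rank(c_l)$ and $rank(c_r)$ without ever conditioning on the identities of $c_l$ or $c_r$. A secondary technical point is that $E[Y_k]$ could be as small as $\Theta(1)$ if $k$ is tiny; but in that regime $l < 1$ and the protocol falls back to using only $c_r$ (as noted in the text preceding the lemma), and the symmetric case $r > n'$ is handled likewise, so it suffices to prove the bounds under the stated assumption $1 \le l$ and $r \le n'$, where $E[Y_k] = \Theta(\sqrt n)$ and the Chernoff estimate above is in force.
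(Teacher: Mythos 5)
Your proposal is correct and follows essentially the same route as the paper: both define the count of sampled candidates with rank at most $k$ (your $Y_k$, the paper's $X=\sum_{i=1}^k X_i$ with $E[X]=k\sqrt{n}/N$), apply a Chernoff bound with deviation $\delta$, and solve $\varepsilon^2 E[X]\geq c\log n$ to obtain $\delta=\Theta(\sqrt{\log n}\cdot\sqrt[4]{n})$. Your explicit monotonicity step translating the tail bound on $Y_k$ into statements about $rank(c_l)$ and $rank(c_r)$ is a detail the paper leaves implicit, but it is not a different argument.
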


\begin{proof}
	We just show $rank(c_l) < k$, as the proof for $rank(c_r) > k$ works analogously.
	Let $c_k \not \in C'$ be the element with $rank(c_k) = k$.
	Let $X_i = 1$, if the candidate $c_i \in C$ with $rank(c_i) = i$ has been chosen to be in $C'$ in the first Phase~2a.
	Let $X = \sum_{i=1}^k X_i$.
	Then $E[X] = k \sqrt{n}/N$.
	The probability that too few candidates with rank smaller than $k$ have been chosen to be in $C'$ should be negligible, i.e., $\Pr[X \leq E[X] - \delta] \leq n^{-c}$ for some constant $c$, where $\delta$ denotes the number of steps that we have to go to the left from the candidate with order $k \sqrt{n}/N$.
	In order to apply Chernoff bounds, we first compute $\varepsilon > 0$ such that $\Pr[X \leq (1-\varepsilon)E[X]] = \Pr[X \leq E[X] - \delta]$: Solving the equation $(1-\varepsilon)E[X] = E[X] - \delta$ for $\varepsilon$ yields $\varepsilon = \delta/E[X]$.
	Using Chernoff bounds on $\Pr[X \leq (1-\varepsilon)E[X]]$ results in \[\Pr[X \leq (1-\varepsilon)E[X]] \leq \exp(-\varepsilon^2 E[X]/2) \leq n^{-c}\] for $\varepsilon^2 E[X] \geq c \log n$, $c > 0$ constant.
	Solving this equation for $\varepsilon$ leads to $\varepsilon \geq \sqrt{\frac{c \log n}{E[X]}}$.
	By solving the equation \[\frac{\delta}{E[X]} \geq \sqrt{\frac{c \log n}{E[X]}}\] for $\delta$ we get \[\delta \geq \sqrt{c \log n \cdot E[X]} \overset{E[X] = \sqrt{n}}{=} \Theta(\sqrt{\log n} \cdot \sqrt[4]{n}).\]
\end{proof}

\begin{lemma} \label{lemma:k_select:phase_2_reduction}
	After $\Theta(1)$ iterations of the second phase, $N \in O(\sqrt{n})$ w.h.p.
\end{lemma}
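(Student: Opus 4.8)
The plan is to track how $N$ shrinks in one iteration of Phase~2 and then argue that $\Theta(1)$ iterations suffice to drop $N$ from $O(n^{3/2}\cdot\log n)$ (the bound from \Cref{lemma:k_select:first_iteration}) down to $O(\sqrt n)$. The key quantity is the rank gap $rank(c_r) - rank(c_l)$, since after the iteration exactly the candidates with rank strictly between $rank(c_l)$ and $rank(c_r)$ survive, so $N_{\mathrm{new}} = rank(c_r) - rank(c_l) - 1$ w.h.p.\ (using \Cref{lemma:k_select:phase_2_correctness}, which guarantees $rank(c_l) < k < rank(c_r)$ so that $e_k$ is retained and the surviving set is well-defined). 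So the whole lemma reduces to bounding this gap.

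First I would estimate the rank of a candidate $c\in C'$ whose \emph{order} in $C'$ is $o$. Each element of $C$ is put into $C'$ independently with probability $\sqrt n/N$, so the number of elements of $C$ with priority at most $prio(c)$ is a sum of independent indicators with mean $o\cdot N/n'= o\cdot N/\Theta(\sqrt n)$; by a Chernoff bound this concentrates around $o\,N/n'$ with additive deviation $O(\sqrt{(o\,N/n')\log n})=O(\sqrt{(N/\sqrt n)\,o\,\log n})$ w.h.p.\ (a union bound over all $n'=\Theta(\sqrt n)$ candidates keeps this w.h.p.). Applying this to $c_l$ (order $l=\lfloor k n'/N-\delta\rfloor$) and $c_r$ (order $r=\lceil k n'/N+\delta\rceil$), the orders differ by $r-l = 2\delta + O(1)$, so the ranks differ by about $(2\delta)\cdot N/n'$ plus the two concentration errors, i.e.
\[
rank(c_r)-rank(c_l) \;=\; O\!\left(\frac{\delta N}{\sqrt n}\right) \;+\; O\!\left(\sqrt{\frac{N}{\sqrt n}\cdot \frac{k n'}{N}\cdot \log n}\right).
\]
With $\delta=\Theta(\sqrt{\log n}\cdot n^{1/4})$ the first term is $O(N\log^{1/2}n / n^{1/4})$, and the second term is $O(\sqrt{k\log n}) = O(\sqrt{N\log n})$ (using $k\le N$), which is lower-order once $N$ is polynomially large. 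Hence $N_{\mathrm{new}} = O(N\cdot \sqrt{\log n}/n^{1/4})$ w.h.p.

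Thus each iteration multiplies $N$ by a factor $O(\sqrt{\log n}/n^{1/4})$, which is $o(1)$; more precisely, starting from $N_0=O(n^{3/2}\log n)$, after $t$ iterations $N_t = O\big(n^{3/2}\log n\cdot (\sqrt{\log n}/n^{1/4})^t\big)$. Solving $N_t \le \sqrt n$ gives $t = O(1)$ (roughly $t=5$ or $6$ suffices for large $n$, since we need $n^{t/4}$ to beat $n\cdot \mathrm{polylog}\,n$, and each step gains a factor $n^{1/4}/\mathrm{polylog}\,n$). One then takes a union bound over these $\Theta(1)$ iterations to keep everything w.h.p.

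The main obstacle I expect is bookkeeping the concentration error cleanly: one must verify that the additive Chernoff deviation in estimating $rank(c_l)$ and $rank(c_r)$ is genuinely dominated by the $\Theta(\delta N/\sqrt n)$ "window" term throughout all iterations — in particular during the \emph{last} iteration, when $N$ has shrunk to just above $\sqrt n$ and $k$ may be as large as $N$, so one needs $\delta N/\sqrt n \gtrsim \sqrt{k\log n}$ to still hold. Checking that $\delta=\Theta(\sqrt{\log n}\cdot n^{1/4})$ is large enough for this (but not so large that the window $\Theta(\delta N/\sqrt n)$ fails to shrink $N$) is the delicate point; it is exactly why $\delta$ was chosen at that scale in \Cref{lemma:k_select:phase_2_correctness}. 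A secondary point is handling the boundary cases $l<1$ or $r>n'$ from the algorithm description, where only one of $c_l,c_r$ is used; there the surviving set is a one-sided prefix/suffix and the same estimate on $rank(c_l)$ or $rank(c_r)$ alone gives the bound.
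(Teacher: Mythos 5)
Your proposal is correct in its overall structure and reaches the paper's conclusion, but the key per-iteration estimate is obtained by a different mechanism. The paper never concentrates the ranks of $c_l$ and $c_r$ directly; instead it bounds the \emph{maximum} number $\beta$ of unsampled candidates lying between two consecutive elements of $C'$ (via $(1-\sqrt n/N)^{\beta}\le n^{-c}$ for $\beta=\Theta(N\log n/\sqrt n)$) and multiplies by the $\Theta(\delta)$ gaps separating $c_l$ from $c_r$, giving $N_{\mathrm{new}}=N\cdot\Theta(\delta\log n/\sqrt n)=N\cdot\Theta(\log^{3/2}n/n^{1/4})$ per iteration; you instead concentrate $rank(c_l)$ and $rank(c_r)$ around $l\,N/n'$ and $r\,N/n'$, which is a sum-of-gaps rather than a max-gap argument and saves a $\log n$ factor per iteration. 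Both yield a multiplicative reduction of $\widetilde{O}(n^{-1/4})$, hence $O(1)$ iterations, as in the paper's final computation. One quantitative point in your version needs fixing: the additive deviation of the rank of the order-$o$ sample is not $O(\sqrt{(oN/n')\log n})=O(\sqrt{k\log n})$ but rather $O\bigl(\sqrt{k\,(N/\sqrt n)\log n}\bigr)$ --- the Chernoff deviation $O(\sqrt{(t\sqrt n/N)\log n})$ of the sample count at rank threshold $t$ must be rescaled by $N/\sqrt n$ when inverted back to rank space. The corrected term is therefore not lower-order as you claim: for $k\le N$ it is $O(N\sqrt{\log n}/n^{1/4})$, i.e., exactly the same order as your window term $\delta N/\sqrt n$. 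Fortunately this leaves the per-iteration factor, and hence the lemma, unchanged, so it is an imprecision rather than a gap. Your handling of the boundary cases $l<1$ and $r>n'$ matches the paper's remark following the lemma.
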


\begin{proof}
	Recall that by \Cref{lemma:k_select:first_iteration}, we have $N = O(n^{3/2} \cdot \log n)$ after the first phase.
	Consider the candidates $c_l$ and $c_r$ as determined by the anchor.
	Due to \Cref{lemma:k_select:phase_2_correctness} it holds $rank(c_l) < k < rank(c_r)$ and there are $\delta \in \Theta(\sqrt{\log n} \cdot \sqrt[4]{n})$ candidates lying between $c_l$ and $c_r$ that are contained in $C'$, i.e., we consider the ordered sequence $c_l, c_{l+1},\ldots,c_{r-1},c_r$ of candidates in $C'$.
	We compute the number $\beta$ of candidates that lie between two consecutive candidates $c_i, c_{i+1} \in C'$ such that the probability that all $\beta$ candidates have not been chosen in Phase~2a becomes negligible, yielding an upper bound for the number of candidates lying between $c_i$ and $c_{i+1}$.
	Recall that the probability that a candidate is chosen to be in $C'$ is equal to $\sqrt{n}/N$.
	We get $\Pr[\beta\ \text{ candidates between}\ c_i\ \text{and}\ c_{i+1}\ \text{are not chosen}] = (1 - \sqrt{n}/N)^{\beta} \leq \exp(-\frac{\sqrt{n}}{N} \cdot \beta) = n^{-c}$ for $\beta = c \cdot \frac{N}{\sqrt{n}} \cdot \ln n = N \cdot \Theta(\frac{\log n}{\sqrt{n}})$ and a constant $c$.
	Overall, it follows that $N$ is reduced by factor $\Theta(\delta\frac{\log n}{\sqrt{n}})$ in each iteration of the second phase w.h.p.
	After five iterations of the second phase $N$ is reduced to 
	\begin{eqnarray*}
		N\cdot \left(\delta \frac{\log n}{\sqrt{n}}\right)^5 & \overset{\Cref{lemma:k_select:first_iteration}}{=} & n^{3/2} \cdot \log n \cdot \left(\delta \frac{\log n}{\sqrt{n}}\right)^5 \\
		& = & \log^{8}(n) \cdot \sqrt{\log n} \cdot \sqrt[4]{n} \\
		& = & O(\sqrt{n}).
	\end{eqnarray*}	
\end{proof}

Note that in case $l < 1$ (analogously $r > n'$), the set $\{c_1,\ldots,c_r\} \subset C'$ contains at most $\delta \in \Theta(\sqrt{\log n} \cdot \sqrt[4]{n})$ candidates, so \Cref{lemma:k_select:phase_2_reduction} still holds.

\subsection{Phase 3: Exact Computation}
The third and last phase computes the exact $k^{\mathit{th}}$ smallest element out of the remaining candidates.
This phase is basically just a single iteration of the second phase with the exception that each remaining candidate is now chosen to be in $C'$ in Phase~2a, leading to each candidate being compared with each other remaining candidate.
This immediately gives us the exact rank of each remaining candidate, as it is now equal to the determined order, so we are able to send the candidate that is the $k^{\mathit{th}}$ smallest element to the anchor.

We are now ready to show \Cref{theorem:k_selection:runtime}:

\begin{proof}[Proof of \Cref{theorem:k_selection:runtime}]
	It is easy to see that in all three phases we perform a constant amount of aggregation phases for a constant amount of iterations.
	Note that in the second and third phase the time for a DHT-insert is $O(\log n)$ w.h.p. due to \Cref{lemma:aggregation_tree}(iii).
	Also note that we perform the actions that have to be done in each of the generated $n' = \Theta(\sqrt{n})$ aggregation trees in parallel, resulting in a logarithmic number of rounds until the order of each chosen candidate is determined.
	As a single aggregation phase takes $O(\log n)$ rounds w.h.p., we end up with an overall running time of $O(\log n)$ w.h.p. for \textsc{KSelect}.
	
	For the congestion bound, note that the only time we generate more than a constant amount of congestion at nodes is in the second phase when routing the chosen candidates $c_i \in C'$ to the node $v_i$ responsible for $pos(c_i)$ in $O(\log n)$ rounds w.h.p. 
	Thus, as each node chooses $\frac{\sqrt{n}}{N} \cdot \frac{N}{n} = \frac{\sqrt{n}}{n} = O(1)$ of its candidates to be in $C'$ on expectation, one can easily verify via Chernoff bounds and \Cref{lemma:aggregation_tree}(v) that this generates a congestion of $\widetilde{O}(1)$ w.h.p.
	With the same argumentation in mind, observe that the congestion generated for nodes that are part of at least one aggregation tree $T(v_i)$ is constant w.h.p., because each node participates in only two such aggregation trees on expectation (\Cref{lemma:aggregation_tree:congestion}).
	Participation of node $v$ in one of these aggregation trees means that $v$ has to perform only one single comparison of priorities, leaving the congestion constant.
	
	Finally, one can easily see that the size of each message is $O(\log n)$ bits, because messages in \textsc{KSelect} contain only a constant amount of elements, where each element can be encoded by $O(\log n)$ bits due to its priority being within $\{1,\ldots,n^q\}$.
\end{proof}

\section{Arbitrary Priorities} \label{sec:heap:prio2}
We are now ready to demonstrate how to use the protocol \textsc{KSelect} from the previous section in order to realize \pnamep{}.
\pnamep{} is able to support larger amounts of priorities, i.e., $\mathcal P = \{1,\ldots,n^q\}$ for some constant $q \in \mathbb{N}$.
In order to provide a scalable solution, we give up on the local consistency semantic (\Cref{def:semantics}), which makes \pnamep{} serializable instead of sequentially consistent.

The general idea for processing operations in \pnamep{} is roughly the same as in \pname{}: We first aggregate batches in the aggregation tree to the anchor, but instead of a batch representing both \ins{} and \delmin{} requests, we only aggregate the overall number of \ins{} requests or the overall number of \delmin{} requests.
Consequently, we distinguish between a separate \ins{} phase and a \delmin{} phase.
Whenever a node $v$ generates a new \ins{} or \delmin{} request it stores the request in a local queue that acts as a buffer.
\Cref{algo:seap} summarizes \pnamep{}.

\begin{algorithm}[ht]
\caption{Protocol \pnamep}
\label{algo:seap}
\begin{algorithmic}[1]
	\Statex \textbf{Insert Phase}
	\State \pushcode[0] Aggregate the number $k \in \mathbb{N}$ of insertions to the anchor
	\State \pushcode[0] $v_0.m \gets v_0.m + k$
	\State \pushcode[0] Broadcast start of insertions over the tree
	\State \pushcode[0] Store elements at random nodes
	\Statex
	\Statex \textbf{DeleteMin Phase}
	\State \pushcode[0] Aggregate the number $k \in \mathbb{N}$ of deletions to the anchor  \label{algo:seap:line_5}
	\State \pushcode[0] Determine the element with rank $k$ using \textsc{KSelect}  \label{algo:seap:line_6}
	\State \pushcode[0] Assign a unique position $pos \in \{1,\ldots,k\}$ to the $k$ most 
	\Statex \pushcode[0] prioritized elements \label{algo:seap:line_7}
	\State \pushcode[0] Store these elements at the node maintaining the key $h(pos)$\label{algo:seap:line_8}
	\State \pushcode[0] Assign a unique sub-interval $[a,b] \subset [1,k]$ to each node
	\Statex \pushcode[0] that has to execute $b-a+1$ \delmin{} operations \label{algo:seap:line_9}
	\State \pushcode[0] Fetch the elements stored at positions $\{a,\ldots,b\}$  \label{algo:seap:line_10}
\end{algorithmic}
\end{algorithm}

\subsection{Insert Phase}
At the beginning of the \ins{} phase of \pnamep{} each node $v$ generates a snapshot of the number of \ins{} operations stored in its queue and stores it in a variable $v.I$.
Similar to \pname{}, nodes aggregate all $v.I$'s to the anchor, using simple addition to combine two numbers $v.I$ and $v'.I$.
When the anchor $v_0$ receives the aggregated value $v_0.I^{+}$ at the end of the first phase, it updates $v_0.m$ and announces over the aggregation tree that nodes are now allowed to process \textsc{Put} operations on the DHT.
For each element $e$ that node $v$ wants to store in the DHT it assigns a key $key(e) \in \mathbb{N}$ generated uniformly at random and sends $e$ to the node $w$ that is responsible for $key(e)$.
Once $w$ has received $e$, it sends a confirmation message back to $v$.
Upon receiving all confirmations for all its elements $v$ switches to the \delmin{} phase.

\subsection{DeleteMin Phase} \label{sec:heap:prio2:del_min}
Aggregation of \delmin{} requests works analogously to the \ins{} phase.
At the end of the aggregation, the anchor $v_0$ receives a value $k$ representing the number of \delmin{} requests to be processed.
Now we use \textsc{KSelect} to find the element with rank $k$.
In order to assign a unique position $pos \in \{1,\ldots,k\}$ to the $k$ most prioritized elements, we proceed analogously as in Phase~$3$ of \pname{} by decomposing the interval $[1,k]$ into sub-intervals.
Each node $v$ assigns such a position to all its stored elements which have a rank less than $k$.
This can be determined by sending the priority of the $k^{\mathit{th}}$ smallest element along with  each sub-interval.

The decomposition approach from Phase~$3$ of \pname{} is also used to assign a unique sub-interval $[a,b] \subset [1,k]$ to each node that wants to execute $b-a+1$ \delmin{} operations.
For the last step of our algorithm consider a node $v$ that wants to issue $d$ \delmin{} operations on the heap and consequently got assigned to the sub-interval $[a,b]$, $d = b - a + 1$.
Then $v$ generates a \textsc{Get}$(h(pos), v)$ request for each position $pos \in \{a,\ldots,b\}$ to fetch the element that previously got stored in the DHT at that specific position.
This way, each \delmin{} request got a value returned so nodes can then proceed with an \ins{} phase afterwards.

\subsection{Results}
We show the following theorem in this section using a series of lemmas.
The theorem summarizes our results for \pnamep{}:

\begin{theorem} \label{theorem:seap}
	\pnamep{} implements a distributed heap with the following properties:
	\begin{enumerate}
		\item \pnamep{} is fair.
		\item \pnamep{} satisfies serializability and heap consistency.
		\item The number of rounds, needed to process heap requests successfully is $O(\log n)$ w.h.p.
		\item The congestion of \pnamep{} is at most $\widetilde{O}(\Lambda)$.
		\item Messages in \pnamep{} have size of at most $O(\log n)$ bits.
	\end{enumerate}
\end{theorem}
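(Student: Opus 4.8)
The plan is to dispatch properties~1, 3, 4 and 5 quickly by reusing the analysis of \pname{} and \textsc{KSelect}, and to spend the real effort on property~2. Fairness is immediate: in the \ins{} phase every element is routed to the node holding a key drawn uniformly at random, so by \Cref{lemma:aggregation_tree}(iv) each node stores $m/n$ elements in expectation, exactly as in \pname{}. For property~3 I would inspect \Cref{algo:seap}: an \ins{} phase is one aggregation phase, one broadcast, and one batch of \textsc{Put} requests; a \delmin{} phase is one aggregation phase, one call to \textsc{KSelect}, one interval-decomposition broadcast (identical to Phase~3 of \pname{}), and one batch of \textsc{Get} requests. Each ingredient runs in $O(\log n)$ rounds w.h.p.\ by \Cref{lemma:aggregation_tree}(i),(iii) and \Cref{theorem:k_selection:runtime}, and only a constant number of them occur per phase. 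For property~5 --- the point where \pnamep{} improves on \pname{} --- observe that we never aggregate a full batch, only a single counter $k \in O(\poly(n))$ that needs $O(\log n)$ bits; keys, positions, sub-interval endpoints and element priorities all lie in $O(\poly(n))$ and so also fit in $O(\log n)$ bits; and \textsc{KSelect} uses $O(\log n)$-bit messages by \Cref{theorem:k_selection:runtime}. Property~4 then follows by the argument of \Cref{theorem:skeap:congestion}: using property~3, between two consecutive phases of the same type a node buffers at most $\lambda(v)\cdot O(\log n) = \widetilde O(\lambda(v))$ requests, each spawning one DHT operation, which are served in $O(\log n)$ rounds with only a polylogarithmic congestion blow-up by \Cref{lemma:aggregation_tree}(v); since \textsc{KSelect} itself contributes only $\widetilde O(1)$ congestion (\Cref{theorem:k_selection:runtime}), a Chernoff bound for the random key assignment gives $\widetilde O(\Lambda)$ overall.

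For property~2 I would first fix the order $\prec$ on $S$. Since the anchor globally alternates \ins{} and \delmin{} phases, number the phases $1, 2, 3, \ldots$ and put every operation processed in an earlier phase before every operation processed in a later one. Within an \ins{} phase the elements form a set, so any tiebreaker (e.g.\ by assigned key) suffices; within a \delmin{} phase of $k$ requests, order the requests by the position $pos \in \{1,\ldots,k\}$ they receive in the position-assignment and fetch steps, with the requests that return $\perp$ (when the DHT holds fewer than $k$ elements) placed last. Serializability is then the statement that the distributed execution returns the same values as the serial execution under $\prec$: an \ins{} has no return value, and a \delmin{} at position $pos$ of \delmin{} phase $t$ returns, by correctness of \textsc{KSelect} and of the interval decomposition, the element of rank $pos$ among the elements present in the DHT during phase $t$; and that multiset is exactly what the serial execution sees at that point, since \ins{} phase $t$ has completed and \delmin{} phases $1,\ldots,t-1$ have removed precisely the elements matched there. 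No local-consistency constraint is imposed (a \delmin{} issued at a node before a later \ins{} at the same node may still be ordered after it, since the \delmin{} waits for the next \delmin{} phase), so we obtain serializability but, by design, not sequential consistency.

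Heap consistency is then checked against this $\prec$, mirroring the \pname{} proof. Condition~(1): in a matched pair $(\textsc{Ins}_{v,i},\textsc{Del}_{w,j})$ the insert lies in some \ins{} phase $t'$ and the delete in some \delmin{} phase $t > t'$, because the element must be in the DHT for \textsc{KSelect} to select it; hence $\textsc{Ins}_{v,i} \prec \textsc{Del}_{w,j}$. Condition~(2): $\textsc{Del}_{u,k} \notin M$ means it returned $\perp$, which happens only when its \delmin{} phase $t$ empties the DHT, so every element inserted in a phase $\le t$ and not yet deleted is removed in phase $t$; thus any matched $\textsc{Ins}_{v,i} \prec \textsc{Del}_{u,k}$ is matched in a phase $\le t$, and within phase $t$ all matched deletes precede the $\perp$-deletes in $\prec$, so no matched $\textsc{Del}_{w,j}$ lies strictly after $\textsc{Del}_{u,k}$. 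Condition~(3): if $\textsc{Ins}_{u,k} \notin M$, $\textsc{Ins}_{u,k} \prec \textsc{Del}_{w,j}$ and $prio(\textsc{Ins}_{u,k}) < prio(\textsc{Ins}_{v,i})$, then during $\textsc{Del}_{w,j}$'s \delmin{} phase both inserted elements are still in the DHT (neither has been deleted earlier, as $\textsc{Ins}_{v,i}$ is matched to $\textsc{Del}_{w,j}$ and $\textsc{Ins}_{u,k}$ is unmatched), so \textsc{KSelect} together with the position assignment would have given $\textsc{Ins}_{u,k}$'s element a smaller position than $\textsc{Ins}_{v,i}$'s element in the same phase, fetching it and matching $\textsc{Ins}_{u,k}$ --- a contradiction.

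I expect the main obstacle to be the invariant underlying all of property~2: that in \delmin{} phase $t$ the element set seen by \textsc{KSelect} is exactly the multiset of inserted-but-not-yet-deleted elements, despite nodes switching between the \ins{} and \delmin{} phases asynchronously. Making this precise --- every \textsc{Put} of \ins{} phase $t$ completes before any \textsc{Get} or \textsc{KSelect} access of \delmin{} phase $t$, and symmetrically across the phase boundary --- is the step needing the most care; once it is established, conditions~(1)--(3) and serializability follow from the short arguments above.
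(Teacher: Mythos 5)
Your proposal is correct and follows essentially the same route as the paper: the same dispatch of fairness, runtime, congestion, and message size via \Cref{lemma:aggregation_tree} and \Cref{theorem:k_selection:runtime}, and the same serialization order (all \ins{} requests before all \delmin{} requests, with \delmin{} requests ordered by their assigned position) together with the same position-based contradiction arguments for the three heap-consistency conditions. The ``main obstacle'' you flag --- that the element set seen by \textsc{KSelect} in a \delmin{} phase is exactly the inserted-but-not-yet-deleted multiset despite asynchrony --- is handled in the paper not inside the proof but by the protocol itself (each node waits for \textsc{Put} confirmations before entering the \delmin{} phase), and your explicit treatment of multiple alternating phases is if anything slightly more complete than the paper's lemma, which phrases the order for a single \ins{}/\delmin{} phase pair.
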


Fairness is clear, since we make use of a pseudorandom hash function in order to distribute heap elements uniformly over all nodes.

\begin{lemma}
	\pnamep{} satisfies serializability and heap consistency.
\end{lemma}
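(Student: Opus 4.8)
The plan is to reuse the blueprint from the correctness proof of \pname{}: fix a total order $\prec$ on the request set $S$, show the distributed run is equivalent to the serial run with respect to $\prec$, and verify the three conditions of \Cref{def:heap_consistency}; local consistency is deliberately \emph{not} claimed, which is exactly why the target is serializability and not sequential consistency. To define $\prec$ I would exploit the phase structure of \Cref{algo:seap}: since \pnamep{} strictly alternates \ins{} phases and \delmin{} phases, number the non-empty phases $1,2,3,\dots$ in the order the anchor processes them, odd indices being \ins{} phases and even ones \delmin{} phases. Inside one \ins{} phase the aggregated count $k$ decomposes $[1,k]$ over the aggregation tree exactly as in Phase~3 of \pname{}, giving each \ins{} request a unique position in $\{1,\dots,k\}$; inside one \delmin{} phase the analogous decomposition gives each \delmin{} request a unique position, namely the slot index $pos$ whose value $h(pos)$ it fetches. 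Order requests first by phase index and, within a phase, by assigned position; the order among \ins{} requests inside a single \ins{} phase is immaterial because no deletion intervenes and the tiebreaker on $\mathcal E$ is fixed independently of the run.

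To get serializability (and a description of $M$), the key observation is that the multiset of elements stored in the DHT at the start of the $2t$-th phase is precisely the elements inserted in phases $1,3,\dots,2t-1$ minus those removed in phases $2,4,\dots,2t-2$. This snapshot is well defined because every \ins{} phase completes all its \textsc{Put}s and collects their confirmations before any node enters the next \delmin{} phase, and every \delmin{} phase runs \textsc{KSelect} to completion on that snapshot. By \Cref{theorem:k_selection:runtime} \textsc{KSelect} returns the element of rank exactly $k$, so re-storing the $k$ smallest elements at $h(1),\dots,h(k)$ and letting the \delmin{} at position $p$ fetch $h(p)$ — waiting for the matching \textsc{Put} if it arrives first, exactly as in \pname{} — makes the $p$-th \delmin{} of the phase return the $p$-th smallest stored element when $p$ does not exceed the snapshot size and $\perp$ otherwise. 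Hence $(\textsc{Ins}_{v,i},\textsc{Del}_{w,j})\in M$ exactly when the serial execution in the order $\prec$ has $\textsc{Del}_{w,j}$ return the element of $\textsc{Ins}_{v,i}$, i.e. \pnamep{} is serializable.

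The three heap-consistency conditions then reduce to the same phase/position bookkeeping as for \pname{}. Condition~(1): a matched \delmin{} returns an element that was already in the DHT, hence inserted in a strictly earlier (odd) phase, so $\textsc{Ins}_{v,i}\prec\textsc{Del}_{w,j}$. Condition~(2): $\textsc{Del}_{u,k}\notin M$ means it returned $\perp$, so its position exceeds the snapshot size of its phase; then every element present at that point — in particular the element of $\textsc{Ins}_{v,i}$ if the two deletions lie in the same phase — is removed in that phase at a strictly smaller position, and a short case split on whether the phase of $\textsc{Del}_{u,k}$ precedes, equals, or follows that of $\textsc{Del}_{w,j}$ contradicts $\textsc{Ins}_{v,i}\prec\textsc{Del}_{u,k}\prec\textsc{Del}_{w,j}$ together with $(\textsc{Ins}_{v,i},\textsc{Del}_{w,j})\in M$. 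Condition~(3): if $\textsc{Ins}_{u,k}\notin M$, $\textsc{Ins}_{u,k}\prec\textsc{Del}_{w,j}$ and $prio(\textsc{Ins}_{u,k})<prio(\textsc{Ins}_{v,i})$, then the still-present element of $\textsc{Ins}_{u,k}$ belongs to the snapshot of the \delmin{} phase of $\textsc{Del}_{w,j}$; since the element of $\textsc{Ins}_{v,i}$ is deleted there, its rank is at most $k$, hence the lower-priority element of $\textsc{Ins}_{u,k}$ has rank at most $k$ too, so \textsc{KSelect} forces it to be deleted as well, contradicting $\textsc{Ins}_{u,k}\notin M$.

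I expect the real obstacle to lie not in this order-theoretic argument — essentially identical to that of \pname{} — but in justifying that the ``snapshot at the start of a \delmin{} phase'' is a genuine phase-local object in a fully asynchronous, non-FIFO system: one must argue that the confirmation-and-broadcast handshakes of \Cref{algo:seap} really serialize all \textsc{Put}s of an \ins{} phase before the \textsc{KSelect} of the following \delmin{} phase, that a \textsc{Get} for $h(pos)$ eventually meets its re-storing \textsc{Put} even when it arrives first, and that \Cref{theorem:k_selection:runtime}, proved for a static input, transfers to this dynamically produced snapshot. The other delicate spot is handling ties between equal priorities consistently across \textsc{KSelect}, the position assignment in the \delmin{} phase, and condition~(3); fixing a single global tiebreaker (say on node identifiers) and using it everywhere removes that issue.
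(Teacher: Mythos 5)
Your proposal follows essentially the same route as the paper: place the \ins{} requests of a phase before the \delmin{} requests that follow them in $\prec$, order the \delmin{} requests by the unique positions the anchor assigns, and verify the three heap-consistency conditions via this position bookkeeping. The only notable differences are that you carry out the alternating multi-phase bookkeeping explicitly and flag the asynchrony/snapshot issues, whereas the paper's proof simply treats all \delmin{} requests as belonging to a single \delmin{} phase and fixes an arbitrary (random) permutation of the \ins{} requests; also, the within-phase positions you attribute to \ins{} requests are not actually assigned by the protocol (elements are stored under random keys), but as you yourself observe the order among inserts of one phase is immaterial, so this does not affect correctness.
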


\begin{proof}
	To show serializability, we define the total order $\prec$ for all \ins{} and \delmin{} requests, whose serial execution is equivalent to the distributed execution of requests in \pnamep{}.
	Let $S$ be the set of all \ins{} and \delmin{} requests to be issued on the heap.
	We split $S$ into subsets $S_I$ and $S_D$ with $S_I \cup S_D = S$ and $S_I \cap S_D = \emptyset$.
	$S_I$ contains all \ins{} requests and $S_D$ contains all \delmin{} requests.
	For subset $S_I$ we fix a randomly chosen permutation of the operations, i.e., $S_I = (\textsc{Ins}_1,\ldots,\textsc{Ins}_k)$.
	For subset $S_D$ note that all requests in $S_D$ are issued in the same \delmin{} phase and thus each request $\textsc{Del}_i$ is assigned a unique position $pos(\textsc{Del}_i)$ by the anchor, such that we can remove the heap element that is stored at that position in the DHT.
	Then we choose the permutation $S_D = (\textsc{Del}_1,\ldots,\textsc{Del}_l)$ such that $pos(\textsc{Del}_i) = i < i + 1 = pos(\textsc{Del}_{i+1})$ for all $i \in \{1,\ldots,l-1\}$.
	We define $\prec$ to be the total order $T$ induced by the combination of the above permutations, i.e., $T = (\textsc{Ins}_1,\ldots,\textsc{Ins}_k,\textsc{Del}_1,\ldots,\textsc{Del}_l)$.
	This is in accordance with our protocol, because we handle \ins{} and \delmin{} requests in separate phases, i.e., we wait until all \ins{} requests have been processed before we start processing all \delmin{} requests.
	Hence, \pnamep{} satisfies serializability.
	
	Thus, all that is left is to show that \pnamep{} satisfies heap consistency, so we show the properties of \Cref{def:heap_consistency} for $\prec$ in the following.
	
	\begin{itemize}
		\item[(1)] Let $(\textsc{Ins}_i, \textsc{Del}_j) \in M$.
		Then by definition of $\prec$ it follows $\textsc{Ins}_i \prec \textsc{Del}_j$.
		\item[(2)] Let $(\textsc{Ins}_i, \textsc{Del}_j) \in M$ and assume that there exists $\textsc{Del}_k \not \in M$ such that $\textsc{Ins}_i \prec \textsc{Del}_k \prec \textsc{Del}_j$.
		By definition of $\prec$ it holds $pos(\textsc{Del}_k) < pos(\textsc{Del}_j)$ and thus, the size of the heap is at most $pos(\textsc{Del}_k)$ as $pos(\textsc{Del}_k)$ did not get matched to a corresponding \ins{} operation.
		It follows that $\textsc{Del}_j$ should not have been matched either as its position is greater than the size of the heap, which is a contradiction.
		\item[(3)] Let $(\textsc{Ins}_i, \textsc{Del}_j) \in M$ and assume that there exists $\textsc{Ins}_k \not \in M$ such that $\textsc{Ins}_k \prec \textsc{Del}_j$ and $prio(\textsc{Ins}_k) < prio(\textsc{Ins}_i)$.
		It follows that $\textsc{Ins}_k$ should have gotten assigned to a position in the \delmin{} phase such that $pos(\textsc{Ins}_k) < pos(\textsc{Ins}_i) \overset{!}{=} pos(\textsc{Del}_j)$.
		The way we defined $S_D$ implies that there exists a request $\textsc{Del}_i$ with $pos(\textsc{Del}_i) = pos(\textsc{Ins}_k)$.
		Thus, $\textsc{Ins}_k$ should have been matched, which is a contradiction to our assumption that $\textsc{Ins}_k \not \in M$.
	\end{itemize}
	This concludes the proof of the lemma.
\end{proof}

The next lemma serves as a proof for the number of rounds needed to process heap requests successfully:

\begin{lemma} \label{theorem:leap:runtime}
	The \ins{} phase and the \delmin{} phase of \pnamep{} finish after $O(\log n)$ rounds w.h.p.
\end{lemma}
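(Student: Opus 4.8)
The plan is to decompose each of the two phases of \pnamep{} (\Cref{algo:seap}) into a constant number of sub-steps, observe that every sub-step is one of a handful of standard primitives on the aggregation tree whose round complexity is already known to be $O(\log n)$ w.h.p., and then take a union bound over the constantly many sub-steps. Note that in our model a node may process arbitrarily many messages in a single round, so congestion does not enter the round count; it therefore suffices to bound the \emph{dilation} of each sub-step.

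For the \ins{} phase I would proceed line by line along \Cref{algo:seap}. Aggregating the number $k$ of insertions to the anchor is one aggregation phase, which finishes in $O(\log n)$ rounds w.h.p.\ since $T$ has height $O(\log n)$ w.h.p.\ (\Cref{lemma:aggregation_tree}(i)); broadcasting the start of the insert phase down $T$ obeys the same bound. Storing the elements is a batch of \textsc{Put} requests, one per buffered \ins{}, and by \Cref{lemma:aggregation_tree}(iii) every \textsc{Put} finishes in $O(\log n)$ rounds w.h.p.; executing the whole batch in parallel does not increase this count, since in our model each node simply absorbs all messages that reach it in a given round. The confirmation messages travel back along the reverse paths, again within $O(\log n)$ rounds w.h.p. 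A union bound over this constant collection of high-probability events gives the claim for the \ins{} phase.

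For the \delmin{} phase I would argue analogously. Aggregating the number $k$ of deletions is $O(\log n)$ rounds w.h.p.\ as before. To apply \textsc{KSelect} on line~\ref{algo:seap:line_6} I would first check that its precondition holds: since the elements were stored via \textsc{Put} with uniformly random keys, \Cref{lemma:aggregation_tree}(iv) guarantees that the $v_0.m$ stored elements are distributed uniformly among the $n$ nodes, which is exactly the input assumption of \textsc{KSelect}; \Cref{theorem:k_selection:runtime} then gives that line~\ref{algo:seap:line_6} terminates in $O(\log n)$ rounds w.h.p. Broadcasting the priority of the rank-$k$ element together with the decomposition of $[1,k]$ (line~\ref{algo:seap:line_7}) is precisely the interval-decomposition primitive used in Phase~3 of \pname{}, which runs in $O(\log n)$ rounds w.h.p.\ because it descends $T$ level by level (\Cref{lemma:aggregation_tree}(i)); the decomposition on line~\ref{algo:seap:line_9} that assigns the sub-intervals $[a,b]$ to the deleting nodes has the same bound. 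Re-storing the $k$ most prioritized elements under the keys $h(1),\dots,h(k)$ (line~\ref{algo:seap:line_8}) and fetching them back via \textsc{Get} on line~\ref{algo:seap:line_10} are DHT routing operations of dilation $O(\log n)$ w.h.p.\ (\Cref{lemma:aggregation_tree}(iii), (v)). A union bound over this constant collection of sub-steps yields $O(\log n)$ rounds w.h.p.\ for the \delmin{} phase.

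The only step I expect to require genuine care is verifying that the hypotheses of the imported results really hold at the instant they are invoked — in particular that the DHT content is uniformly distributed at the start of the \delmin{} phase so that \Cref{theorem:k_selection:runtime} is applicable, and that the re-storing of the $k$ prioritized elements uses pseudorandom target keys so that the routing claim of \Cref{lemma:aggregation_tree}(v) can be invoked with only a polylogarithmic blow-up. Everything else is a mechanical concatenation of primitives whose $O(\log n)$-round guarantees are already established, so no probabilistic reasoning beyond a union bound over constantly many high-probability events is needed.
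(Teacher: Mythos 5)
Your proposal is correct and follows essentially the same route as the paper: decompose each phase line by line into a constant number of primitives (aggregation phases, interval decompositions, DHT \textsc{Put}/\textsc{Get}, and one invocation of \textsc{KSelect}), invoke \Cref{lemma:aggregation_tree} and \Cref{theorem:k_selection:runtime} for each, and combine the constantly many w.h.p.\ bounds. Your extra care in verifying the uniform-distribution precondition of \textsc{KSelect} and in making the union bound explicit goes slightly beyond what the paper writes out, but does not change the argument.
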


\begin{proof}
	The runtime of the \ins{} phase follows from \Cref{lemma:aggregation_tree}(i) and (iii).
	We argue that each step in the \delmin{} phase listed in \Cref{algo:seap} takes at most $O(\log n)$ rounds w.h.p.
	\Cref{algo:seap:line_5},~\Cref{algo:seap:line_7} and~\Cref{algo:seap:line_9} involve an aggregation phase via the aggregation tree.
	Such a phase can be done in $O(\log n)$ rounds w.h.p.
	\Cref{algo:seap:line_6} terminates after $O(\log n)$ rounds w.h.p. according to \Cref{theorem:k_selection:runtime}.
	Finally, inserting elements simultaneously into the DHT in \Cref{algo:seap:line_8} takes $O(\log n)$ rounds w.h.p. according to \Cref{lemma:aggregation_tree}(iii).
	Using the same argumentation, \Cref{algo:seap:line_10} can also be processed in $O(\log n)$ rounds w.h.p.
\end{proof}

\begin{lemma}
	The congestion of \pnamep{} is at most $\widetilde{O}(\Lambda)$.
\end{lemma}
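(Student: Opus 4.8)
The plan is to bound the congestion by tracking two distinct sources: the aggregation-tree phases (which carry only small aggregated values) and the DHT operations (which carry the actual heap elements). First I would observe, exactly as in the proof of \Cref{theorem:skeap:congestion}, that by \Cref{theorem:leap:runtime} each execution of the \ins{} phase and the \delmin{} phase lasts $O(\log n)$ rounds w.h.p.; hence between two consecutive phases of the same type, a node $v$ can have buffered at most $\lambda(v)\cdot O(\log n)=\widetilde O(\lambda(v))$ requests, and summing over all nodes the total number of \textsc{Put} or \textsc{Get} operations injected into the DHT in one phase is $\widetilde O(\Lambda)$ per node. Since each DHT operation is routed in $O(\log n)$ rounds w.h.p. and the aggregation tree blows up congestion by at most a polylogarithmic factor (\Cref{lemma:aggregation_tree}(v)), the contribution of the DHT \textsc{Put}/\textsc{Get} steps to the congestion is $\widetilde O(\Lambda)$.

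Next I would handle the steps that are \emph{internal} to a phase and do not scale with $\Lambda$. The aggregation phases in \Cref{algo:seap:line_5}, \Cref{algo:seap:line_7} and \Cref{algo:seap:line_9}, as well as the broadcast of the start of the insert phase, carry only $O(\log n)$-bit aggregated counters and interval descriptors, so each node sends and receives $O(1)$ such messages per round along its $O(1)$ tree edges; this contributes only $\widetilde O(1)$ congestion. The call to \textsc{KSelect} in \Cref{algo:seap:line_6} contributes $\widetilde O(1)$ congestion w.h.p.\ by \Cref{theorem:k_selection:runtime}. Finally, \Cref{algo:seap:line_8} and \Cref{algo:seap:line_10} store and fetch the $k\le m=O(\poly n)$ most prioritized elements at keys $h(pos)$: since $h$ is a pseudorandom hash function, these $k$ elements are spread uniformly, so each node is responsible for $O(k/n)=\widetilde O(1)$ of them on expectation (and $\widetilde O(1)$ w.h.p.\ by Chernoff bounds); routing these through the aggregation tree again costs only an extra polylogarithmic factor by \Cref{lemma:aggregation_tree}(v). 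Hence all phase-internal steps contribute $\widetilde O(1)$.

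Putting the two parts together, the per-round load at any node is dominated by the $\widetilde O(\Lambda)$ DHT operations it generated from its own buffered requests, and the lemma follows. I expect the only point requiring any care to be the justification that buffered requests really are bounded by $\widetilde O(\lambda(v))$ — this rests on the runtime bound \Cref{theorem:leap:runtime} together with the fact that a node does not start generating new DHT operations until the previous phase has completed, so the argument is essentially identical to that of \Cref{theorem:skeap:congestion} and carries over verbatim.
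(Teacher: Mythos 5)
Your proof is correct and follows essentially the same route as the paper's: bound the buffered requests at each node by $\widetilde{O}(\lambda(v))$ via the $O(\log n)$ phase length, charge the DHT \textsc{Put}/\textsc{Get} traffic through \Cref{lemma:aggregation_tree}(iii) and (v), and observe that the aggregation phases and \kselect{} contribute only $\widetilde{O}(1)$. The one small quibble is your claim that \Cref{algo:seap:line_8} and \Cref{algo:seap:line_10} cost each node only $O(k/n)=\widetilde{O}(1)$: since $k$ aggregates the deletions of all nodes it can be as large as $\Theta(n\Lambda\log n)$, so the per-node load there is $\widetilde{O}(\Lambda)$ rather than $\widetilde{O}(1)$ --- which is exactly how the paper treats it (``same argumentation as for the \ins{} phase'') and does not affect the final $\widetilde{O}(\Lambda)$ bound.
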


\begin{proof}
	At the beginning an \ins{} phase, each node $v$ has at most $\lambda(v) \cdot O(\log n) = \widetilde{O}(\lambda(v))$ \ins{} requests buffered, since the previous phase lasted for $O(\log n)$ rounds (\Cref{theorem:leap:runtime}) and $v$ could have generated at most $\lambda(v)$ requests per round.
	For each of those requests, $v$ delegates the element to a randomly chosen node, resulting in $v$ having to process $\widetilde{O}(\lambda(v))$ requests at once.
	Since each of these delegations needs $O(\log n)$ rounds w.h.p. (\Cref{lemma:aggregation_tree}(iii)) to finish and the aggregation tree only generates congestion up to a polylogarithmic factor (\Cref{lemma:aggregation_tree}(v)), the lemma follows for the \ins{} phase.

	For the \delmin{} phase, \Cref{algo:seap:line_5},~\Cref{algo:seap:line_7} and~\Cref{algo:seap:line_9} of \Cref{algo:seap} can be processed via aggregation phases and thus have no impact on the upper bound for the congestion.
	For \Cref{algo:seap:line_6} the congestion is $\widetilde{O}(1)$ due to \Cref{theorem:k_selection:runtime}.
	For the last step at which nodes fetch data from the DHT (\Cref{algo:seap:line_10}) we can use the same argumentation as for the \ins{} phase, so the lemma follows.
\end{proof}

Finally we show the bound on the size of messages for \pnamep{}:

\begin{lemma}
	Messages in \pnamep{} have size of at most $O(\log n)$ bits.
\end{lemma}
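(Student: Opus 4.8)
The claim is that every message in \pnamep{} is $O(\log n)$ bits, which should be a matter of auditing each kind of message the protocol sends and checking that none of them carries payload growing with $n$ beyond a logarithmic factor. I would organize the argument by going through \Cref{algo:seap} line by line, treating the \ins{} phase and the \delmin{} phase separately, and noting that the only ''data'' ever communicated is a constant number of counters and a constant number of heap elements, each of which needs $O(\log n)$ bits. The key observation, and the one that makes \pnamep{} genuinely different from \pname{}, is that in \pnamep{} we never transmit a batch that encodes the entire sequence of buffered operations of the subtree; instead, during aggregation we only ever add up the \emph{number} of \ins{} (resp. \delmin{}) requests, so the aggregated quantity is a single integer bounded by $m \in O(\poly(n))$, hence encodable in $O(\log n)$ bits.

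First I would handle the \ins{} phase: the aggregation of the counts $v.I$ transmits one integer in $\{0,\ldots,m\}$ per edge, the broadcast announcing the start of insertions is a single control bit, and when a node delegates an element $e$ to the node responsible for $key(e)$ it sends $e$ together with $key(e)$ and a return reference — each of these is a single element/identifier/key, and since priorities lie in $\{1,\ldots,n^q\}$, identifiers are in $\N$ with $n = O(\poly(n))$ nodes, and keys are drawn from a range polynomial in $n$, all fit in $O(\log n)$ bits. Then I would handle the \delmin{} phase: \Cref{algo:seap:line_5} is again aggregation of a single integer; \Cref{algo:seap:line_6} invokes \kselect{}, whose messages are $O(\log n)$ bits by \Cref{theorem:k_selection:runtime}; \Cref{algo:seap:line_7} decomposes the interval $[1,k]$ exactly as in Phase~3 of \pname{}, which means each message carries a single sub-interval $[a,b] \subset [1,k]$ (two integers $\le m$) plus the priority of the $k^{\mathit{th}}$ smallest element (one value in $\{1,\ldots,n^q\}$), all $O(\log n)$ bits; \Cref{algo:seap:line_8} ships elements to DHT nodes (one element plus a key per message); \Cref{algo:seap:line_9} again decomposes $[1,k]$ into per-node sub-intervals, two integers per message; and \Cref{algo:seap:line_10} issues \textsc{Get}$(h(pos),v)$ requests, each carrying one key and one node reference. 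Summing up, every message type in \pnamep{} consists of $O(1)$ values, each a number in $O(\poly(n))$, so $O(\log n)$ bits suffice.

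The one point that deserves slightly more care — and the closest thing to an obstacle — is the interval-decomposition step: one must check that when a node receives the interval $[1,k]$ (or a sub-interval of it) from its parent and splits it among its two children, it sends each child only a \emph{single} contiguous sub-interval rather than a union of fragments, so that the message stays at two integers. This is exactly the situation in Phase~3 of \pname{} for a \emph{single} priority (as opposed to the $|\mathcal{P}|$-fold blow-up there), since in \pnamep{} the \delmin{} positions form one contiguous block $[1,k]$, so the decomposition is into contiguous pieces and no fragmentation occurs. Once that is observed, the rest is the routine audit described above; I would also remark explicitly that the $O(\Lambda \log^2 n)$-bit blow-up of \pname{} is avoided precisely because \pnamep{} aggregates counts rather than operation-sequences, which is the whole reason for introducing the separate \ins{}/\delmin{} phases in the first place.
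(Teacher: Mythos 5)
Your proof is correct and follows essentially the same route as the paper: both arguments rest on the observations that only request \emph{counts} (integers bounded by a polynomial in $n$) are aggregated, that the interval decompositions in the \delmin{} phase require only a single interval per message, and that \kselect{} already guarantees $O(\log n)$-bit messages by \Cref{theorem:k_selection:runtime}. Your version is merely a more explicit line-by-line audit of \Cref{algo:seap} than the paper's, with the same substance.
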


\begin{proof}
	First note that the number of \ins{} and \delmin{} requests can only be polynomial in $n$: Due to \Cref{theorem:leap:runtime} each node may generate up to $\Lambda \in O(poly(n))$ new heap requests for $\log n$ rounds until the next phase is started.
	Thus, aggregating the number of \ins{} or \delmin{} requests to the anchor yields a message size of $O(\log n)$.
	Note that in order to realize \Cref{algo:seap:line_7} and~\Cref{algo:seap:line_9} of \Cref{algo:seap} we only need to store a single interval in each message (similar to Phase 3 of \pname{}).
	Keeping this argumentation in mind and the fact that \textsc{KSelect} uses only $O(\log n)$ bit messages (\Cref{theorem:k_selection:runtime}), one can easily see that \pnamep{} uses only $O(\log n)$ bit messages.
\end{proof}

\section{Conclusion} \label{sec:conclusion}
We presented two protocols \pname{} and \pnamep{} to realize distributed heaps along with a novel protocol \textsc{KSelect} that solves the distributed $k$-selection problem.

One may ask whether we could realize \delmin{} in $O(1)$ time like in centralized heaps, while preserving scalability.
This would mean that we have to be able to access elements in constant time from any process in the system, implying that the degree for processes has to go up to at least $\Omega(\sqrt{n})$.
As a consequence, the update costs for \join{} and \leave{} operations would rise drastically.

Also, can we modify \pnamep{} in order to also guarantee sequential consistency, i.e., how to realize local consistency in \pnamep{}?
A first idea would be to maintain the same batches as in \pname{}, but only aggregate the first amount of \ins{} or \delmin{} operations to the anchor.
This suffices to provide sequential consistency for \pnamep{} but comes at the cost of scalability and message size, as batches maintained at nodes may grow infinitely long for high injection rates.


\bibliographystyle{alpha}
\bibliography{literature}

\newpage
\appendix

\section{Aggregation Tree Construction} \label{appendix:aggregation_tree}

In order to define a tree with the properties states in \Cref{lemma:aggregation_tree}, we adapt a dynamic version of the de Bruijn graph from~\cite{DBLP:conf/sss/RichaSS11}, which is based on~\cite{DBLP:journals/talg/NaorW07}, for our network topology:

\begin{definition}
	The \emph{Linearized de Bruijn network} (LDB) is a directed graph $G = (V, E)$, where each node $v$ emulates $3$ (virtual) nodes: A \emph{left virtual node} $l(v) \in V$, a \emph{middle virtual node} $m(v) \in V$ and a \emph{right virtual node} $r(v) \in V$.
	The middle virtual node $m(v)$ has a real-valued \emph{label}\footnote{We may indistinctively use $v$ to denote a node or its label, when clear from the context.} in the interval $[0,1)$.
	The label of $l(v)$ is defined as $m(v)/2$ and the label of $r(v)$ is defined as $(m(v)+1)/2$.
	The collection of all virtual nodes $v \in V$ is arranged in a sorted cycle ordered by node labels, and $(v,w) \in E$ if and only if $v$ and $w$ are consecutive in this ordering (\emph{linear edges}) or $v$ and $w$ are emulated by the same node (\emph{virtual edges}).
\end{definition}

We assume that the label of a middle node $m(v)$ is determined by applying a publicly known pseudorandom hash function on the identifier $v.id$.
We say that a node $v$ is \emph{right} (resp. \emph{left}) of a node $w$ if the label of $v$ is greater (resp. smaller) than the label of $w$, i.e., $v > w$ (resp. $v < w$). 
If $v$ and $w$ are consecutive in the linear ordering and $v < w$ (resp. $v > w$), we say that $w$ is $v$'s \emph{successor} (resp. \emph{predecessor}) and denote it by $succ(v)$ (resp. $pred(v)$).
As a special case we define $pred(v_{\mathit{min}}) = v_{\mathit{max}}$ and $succ(v_{\mathit{max}}) = v_{\mathit{min}}$, where $v_{\mathit{min}}$ is the node with minimal label value and $v_{\mathit{max}}$ is the node with maximal label value.
This guarantees that each node has a well defined predecessor and successor on the sorted cycle.
More precisely, each node $v$ maintains two variables $pred(v)$ and $succ(v)$ for storing its predecessor and successor nodes.
We assume that $v$ knows whether $pred(v)$ and $succ(v)$ contains a left, middle or right virtual node.
By adopting the result from~\cite{DBLP:conf/sss/RichaSS11}, one can show that routing in the LDB can be done in $O(\log n)$ rounds w.h.p.:

\begin{lemma}\label{lemma:LDB:routing}
	For any $p \in [0,1)$, routing a message from a source node $v$ to a node that is the predecessor of $p$ (i.e., the node closest from below to $p$) in the LDB can be done in $O(\log n)$ rounds w.h.p.
\end{lemma}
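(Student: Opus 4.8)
The plan is to reduce routing in the LDB to the classical bit-fixing routing scheme of the de Bruijn graph, then bound the length of the virtual-node segments that a message traverses between two consecutive ``real'' de Bruijn hops. First I would recall that the label of the middle virtual node $m(v)$ is the image of $v.id$ under a pseudorandom hash function into $[0,1)$, and that the labels of $l(v)$ and $r(v)$ are $m(v)/2$ and $(m(v)+1)/2$; hence the three copies of the $n$ nodes give $3n$ labels distributed (pseudo)uniformly in $[0,1)$. Using a Chernoff bound over the $3n$ labels, I would show that every dyadic interval of length $\Theta(\log n / n)$ contains at least one and at most $O(\log n)$ virtual-node labels w.h.p.; this is the standard ``balls into bins'' estimate and is the tool that controls how many linear edges sit between two points that differ by a de Bruijn bit-shift.

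Next I would set up the emulation. To route a message to the predecessor of a target value $p \in [0,1)$, write $p$ in binary and use the fact that a de Bruijn hop from a node with label $0.x_1 x_2 \ldots$ to $0.j x_1 x_2 \ldots$ corresponds, in the linearized picture, to moving along a virtual edge (from $m(v)$ to $l(v)$ or $r(v)$, which prepends a $0$ or a $1$ bit) and then walking along linear edges to reach the virtual node whose label is closest from below to the shifted value. Each virtual-edge step is $O(1)$ hops, and there are $O(\log n)$ of them because after $O(\log n)$ bit-fixing rounds the most significant $\Theta(\log n)$ bits of the current label agree with those of $p$, which by the interval-packing bound above pins the current location to within $O(1)$ linear hops of the true predecessor of $p$. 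The number of linear hops incurred after each virtual step is $O(\log n)$ w.h.p., again by the interval-packing bound (the shifted value and the actual reachable label differ by at most the granularity of the label distribution). Multiplying $O(\log n)$ virtual steps by $O(\log n)$ linear hops would naively give $O(\log^2 n)$; the standard fix, which I would invoke by citing~\cite{DBLP:conf/sss/RichaSS11}, is that the linear-hop costs telescope — the total number of linear edges traversed over the whole route is $O(\log n)$ w.h.p., not $O(\log n)$ per de Bruijn hop — because consecutive shifted targets are themselves close together on the cycle.

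The main obstacle I expect is exactly this telescoping argument: making precise that the sum of the ``correction walks'' along linear edges over all $O(\log n)$ bit-fixing phases is $O(\log n)$ rather than $O(\log^2 n)$, which requires a careful coupling between the de Bruijn bit-shift dynamics and the positions of the pseudorandom labels, together with a union bound over all $O(n)$ possible source/target pairs and over the $O(\log n)$ phases so that all the Chernoff estimates hold simultaneously w.h.p. Since the LDB here is essentially the topology of~\cite{DBLP:conf/sss/RichaSS11}, which in turn adapts~\cite{DBLP:journals/talg/NaorW07}, I would not redo this analysis from scratch but rather state that the claim follows by adopting their routing result verbatim, after checking that our labelling convention (three virtual nodes per real node, hashed $\mathrm{id}$s) meets the hypotheses of their lemma. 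If a self-contained argument were wanted, the cleanest route is to first prove the interval-packing bound, then bound a single bit-fixing phase, and finally handle the telescoping as a separate lemma.
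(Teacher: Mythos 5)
Your proposal matches the paper's approach: the paper gives no self-contained proof of this lemma either, but simply adopts the routing result of~\cite{DBLP:conf/sss/RichaSS11} (which builds on~\cite{DBLP:journals/talg/NaorW07}), remarking only that the LDB routing emulates bit-fixing in the $d$-dimensional de Bruijn graph with $d \approx \log n$ and that the message falls behind its ideal de Bruijn position by at most $O(\log n)$ hops in total w.h.p. That last remark is precisely the ``telescoping'' point you correctly identify as the crux, so your sketch of what the cited analysis must establish is faithful to (and somewhat more detailed than) what the paper itself says.
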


The routing strategy in the LDB emulates routing in the $d$-dimensional de Bruijn graph with $d \approx \log n$, and its analysis shows that the message only falls behind by at most $O(\log n)$ hops compared to its position in the $d$-dimensional de Bruijn graph, w.h.p. 
When taking this and the fact into account that a node in the LDB might be responsible for the emulation of up to $O(\log n)$ nodes in the $d$-dimensional de Bruijn graph,  w.h.p., one immediately obtains the following lemma.

\begin{lemma}\label{lemma:LDB:congestion}
	Any routing schedule with dilation $D$ and congestion $C$ in the $d$-dimensional de Bruijn graph can be emulated by the LDB with $n \geq 2$ nodes with dilation $O(D + \log n)$ and congestion $\widetilde{O}(C)$ w.h.p.	
\end{lemma}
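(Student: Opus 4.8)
The plan is to transfer the two parameters of the schedule --- dilation and congestion --- through the natural embedding of the $d$-dimensional de Bruijn graph into the LDB, treating the LDB routing analysis of~\cite{DBLP:conf/sss/RichaSS11} (the same analysis that underlies \Cref{lemma:LDB:routing}) as a black box. First I would fix $d = \lceil \log n \rceil$ and make the embedding explicit: a de Bruijn node $x = (x_1,\ldots,x_d) \in \{0,1\}^d$ is identified with the point $0.x_1x_2\cdots x_d \in [0,1)$, and it is emulated by the LDB node whose label is the cyclic predecessor of that point. Since middle-node labels are images of identifiers under a pseudorandom hash function, the $\Theta(n)$ de Bruijn points land in the $n$ LDB slots like $\Theta(n)$ balls thrown uniformly into $n$ bins; the standard maximum-load bound then gives that every LDB node emulates at most $O(\log n)$ de Bruijn nodes w.h.p. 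I would call this event $\mathcal{E}$ and condition on it henceforth.

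For the dilation bound I would take an arbitrary packet of the schedule, whose de Bruijn route consists of at most $D$ bit-shift steps. The LDB routing protocol emulates precisely these bit shifts on the sorted cycle; the analysis behind \Cref{lemma:LDB:routing} shows that while following this emulation the packet lags behind its ideal de Bruijn position by at most $O(\log n)$ hops w.h.p., so it reaches its destination after at most $D + O(\log n) = O(D+\log n)$ LDB hops. A union bound over the polynomially many packets then yields dilation $O(D+\log n)$ for the whole emulated schedule w.h.p.

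For the congestion bound I would argue that a packet traverses a given LDB node $u$ only while $u$ is acting as one of the $O(\log n)$ de Bruijn nodes embedded at $u$, or during one of the $O(\log n)$ extra ``catch-up'' hops caused by the lag. Each de Bruijn node is crossed by at most $C$ packets in the original schedule, so the first contribution to the load of $u$ is $O(C\log n)$; the catch-up hops add at most $O(\log n)$ to each packet's path, and since the intermediate nodes are again among the $O(\log n)$ de Bruijn nodes near $u$, they contribute at most another $\widetilde{O}(C)$. Hence the load of every LDB node is $\widetilde{O}(C)$ w.h.p., which is the claimed congestion.

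The hard part, and the reason the statement is phrased as ``adopting the result'' of~\cite{DBLP:conf/sss/RichaSS11}, is making the lag argument fully rigorous: one must show that the greedy bit-shift strategy on the randomly-labelled sorted cycle never falls behind its de Bruijn schedule by more than $O(\log n)$ hops w.h.p., and that this single bounded-lag guarantee simultaneously controls both the additive overhead in path length and the multiplicative overhead in congestion. Everything else --- the embedding, the balls-into-bins estimate for $\mathcal{E}$, and the union bounds over packets --- is routine once that lemma is available.
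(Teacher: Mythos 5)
Your proposal is correct and follows essentially the same route as the paper: the paper likewise obtains the lemma by combining the $O(\log n)$ "lag" bound from the LDB routing analysis of the cited work (giving dilation $O(D+\log n)$) with the fact that each LDB node emulates at most $O(\log n)$ de Bruijn nodes w.h.p. (giving congestion $\widetilde{O}(C)$). Your write-up merely makes explicit the embedding, the balls-into-bins estimate, and the union bounds that the paper leaves implicit.
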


Next we present how the LDB can support DHT requests \textsc{Put}($k$, $e$) and \textsc{Get}($k$, $v$).
\textsc{Put}($k$, $e$) stores the element $e$ at the (virtual) node $v$, with $v \leq k < succ(v)$.
\textsc{Get}($k$, $v$) searches for the node $w$ storing the element $e$ with key $k$.
At $w$ we remove $e$ and deliver it back to $v$.
By \Cref{lemma:LDB:routing}, we can support \textsc{Put} and \textsc{Get} requests in $O(\log n)$ hops w.h.p.

The LDB consisting of all virtual nodes contains the aggregation tree as a subgraph.
The parent node $p(v)$ of a node $v$ is defined as follows: If $v$ is a middle virtual node, then $p(v) = l(v)$.
If $v$ is a left virtual node then $p(v) = pred(v)$.
Finally, if $v$ is a right virtual node, then $p(v) = m(v)$.
Next, we describe how a node $v$ knows its child nodes (denoted by the set $C(v)$) in the aggregation tree.
If $v$ is a middle virtual node, then either $C(v) = \{r(v), succ(v)\}$ (if $succ(v)$ is a left virtual node) or $C(v) = \{r(v)\}$ (otherwise).
If $v$ is a left virtual node, then either $C(v) = \{m(v), succ(v)\}$ (if $succ(v)$ is a left virtual node) or $C(v) = \{m(v)\}$ (otherwise).
Last, if $v$ is a right virtual node, then $C(v) = \emptyset$.
Observe that each node $v$ is able to locally detect its parent and its child nodes in the aggregation tree depending on whether $v$ is a left or middle virtual node (see \Cref{fig:aggregation_tree} for an example).

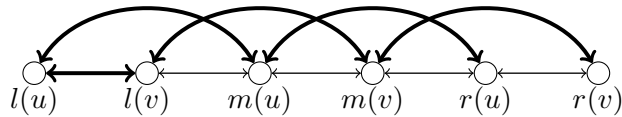
\begin{figure}[ht]
 	\centering
 	\begin{tikzpicture}[main node/.style={circle,draw,align=center, minimum size=0.3cm, inner sep=0pt}]	
     	\node[main node, label={[yshift=-0.85cm]$l(u)$}] (A) at (0,0) {};
     	\node[main node, label={[yshift=-0.85cm]$l(v)$}] (B) at (1.5,0) {};
     	\node[main node, label={[yshift=-0.85cm]$m(u)$}] (C) at (3,0) {};
     	\node[main node, label={[yshift=-0.85cm]$m(v)$}] (D) at (4.5,0) {};
     	\node[main node, label={[yshift=-0.85cm]$r(u)$}] (E) at (6,0) {};
     	\node[main node, label={[yshift=-0.85cm]$r(v)$}] (F) at (7.5,0) {};
     		
     	\draw[<->, line width=1.5pt] (A) to (B);
     	\draw[<->, line width=0.5pt] (B) to (C);
     	\draw[<->, line width=0.5pt] (C) to (D);
     	\draw[<->, line width=0.5pt] (D) to (E);
     	\draw[<->, line width=0.5pt] (E) to (F);
     	\draw[<->, line width=1.5pt, bend left = 60] (A) to (C);
     	\draw[<->, line width=1.5pt, bend left = 60] (B) to (D);
     	\draw[<->, line width=1.5pt, bend left = 60] (C) to (E);
     	\draw[<->, line width=1.5pt, bend left = 60] (D) to (F);
 	\end{tikzpicture}
 	\caption{A LDB consisting of $6$ virtual nodes (corresponding to $2$ nodes $u,v \in V$). Bold linear/virtual edges define the corresponding aggregation tree.}
 	\label{fig:aggregation_tree}
\end{figure}

From \Cref{lemma:LDB:routing}, we directly obtain the upper bound for the height of the aggregation tree:

\begin{corollary} \label{cor:tree_height}
	The aggregation tree based on the LDB has height $O(\log n)$ w.h.p.
\end{corollary}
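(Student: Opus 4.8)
The plan is to bound the number of parent pointers on the path from an arbitrary virtual node to the anchor, and then take a union bound over all $3n$ virtual nodes. The first observation is that the parent map is strictly label-decreasing: $p(m(v))=l(v)$ has label $m(v)/2<m(v)$; $p(r(v))=m(v)$ has label $m(v)<(m(v)+1)/2$; and $p(l(v))=pred(l(v))$ has, by definition of $pred$, a smaller label than $l(v)$ --- except at the unique globally smallest-label virtual node, whose parent edge is exactly the one cut to turn the functional graph on the sorted ring into a tree, and this node is the anchor. Hence from any virtual node the sequence of ancestors has strictly decreasing labels and must terminate, necessarily at the anchor.

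Next I would observe that this ancestor walk uses only the two kinds of edges present in the LDB: the steps $r(v)\to m(v)$ and $m(v)\to l(v)$ are virtual edges (all three copies are emulated by the same base node $v$), while the step $l(v)\to pred(l(v))$ is a linear edge along the sorted ring. A walk that, at each base node, performs $r\to m\to l$ and then hops to the ring-predecessor is precisely a run of the LDB routing procedure of \Cref{lemma:LDB:routing} directed at a fixed ring position, namely the anchor: each $m\to l$ step is the continuous analogue of prepending a $0$-bit in the classical de Bruijn graph, and the analysis behind \Cref{lemma:LDB:routing} shows that such a route lags only $O(\log n)$ hops behind the corresponding de Bruijn path, hence reaches its destination within $O(\log n)$ hops w.h.p. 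So for each fixed virtual node, the length of its parent-path to the anchor is $O(\log n)$ w.h.p.

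Finally I would boost the probability: the bound of \Cref{lemma:LDB:routing} can be taken to hold with probability $1-n^{-c}$ for any desired constant $c$ (at the price of a larger hidden constant in the $O(\log n)$), so a union bound over the $3n$ virtual nodes shows that all root-paths simultaneously have length $O(\log n)$ w.h.p. The height of the aggregation tree is the maximum of these lengths, hence $O(\log n)$ w.h.p.

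The step I expect to be the main obstacle is making the identification of the parent-walk with an LDB route fully rigorous: one must argue that the constant-length detours $r\to m\to l$ through a base node do not accumulate, and translate \Cref{lemma:LDB:routing}, which counts hops of a routing message toward $pred(p)$, into a statement about the number of tree edges on the path to the anchor. The label-monotonicity, termination, and union-bound arguments are then routine.
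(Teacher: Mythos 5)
Your proof is correct and follows essentially the same route as the paper, which simply asserts that the height bound is ``directly obtained'' from \Cref{lemma:LDB:routing}: you make that one-liner explicit by checking that the parent map is label-decreasing, identifying each root-path with an LDB route toward label $0$ (virtual $r\to m\to l$ hops as bit-prepending plus linear predecessor hops), and finishing with the routing lemma and a union bound over the $3n$ virtual nodes. The extra care about the wrap-around edge at the minimum-label left virtual node (the anchor) is a detail the paper leaves implicit, and your treatment of it is the intended reading.
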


\end{document}